\tikzstyle{none} = [draw=none]
\tikzstyle{small black dot}=[inner sep=0.5pt, fill=Black, draw=Black, shape=circle]
\tikzstyle{black dot}=[inner sep=1pt, fill=Black, draw=Black, shape=circle]
\tikzstyle{medium black dot}=[inner sep=1.5pt, fill=Black, draw=Black, shape=circle]
\tikzstyle{big black dot}=[inner sep=2pt, fill=Black, draw=Black, shape=circle]
\tikzstyle{tratt}==[-, dotted]
\tikzstyle{black}=[-, fill=none, draw=Black, line width=0.5pt]
\tikzset{snake it/.style={decorate, decoration=snake}}
\title{Approximating Optimal Labelings for Temporal Connectivity}
\date{}
\author[1]{Daniele Carnevale}
\author[1]{Gianlorenzo D'Angelo}
\author[2]{Martin Olsen}
\affil[1]{Gran Sasso Science Institute, Italy\\

\texttt{daniele.carnevale@gssi.it, gianlorenzo.dangelo@gssi.it}}
\affil[2]{Aarhus University, Denmark\\

\texttt{martino@btech.au.dk}}
\newcommand{\Universe}{U\xspace}
\newcommand{\Sets}{\mathcal{C}\xspace}
\newcommand{\MAL}{\text{MAL}\xspace}
\newcommand{\MSL}{\text{MSL}\xspace}
\newcommand{\MASL}{\text{MASL}\xspace}
\newcommand{\DMAL}{\text{DMAL}\xspace}
\newcommand{\APX}{\text{APX}\xspace}
\newcommand{\DCSS}{\text{DCSS}\xspace}
\newtheorem{theorem}{Theorem}
\newtheorem{lemma}{Lemma}
\newtheorem{definition}{Definition}
\newtheorem{corollary}{Corollary}
\begin{document}

\maketitle

\begin{abstract}
In a temporal graph the edge set dynamically changes over time according to a set of time-labels associated with each edge that indicates at which time-steps the edge is available. Two vertices are connected if there is a path connecting them in which the edges are traversed in increasing order of their labels. We study the problem of scheduling the availability time of the edges of a temporal graph in such a way that all pairs of vertices are connected within a given maximum allowed time $a$ and the overall number of labels is minimized.

The problem, known as \emph{Minimum Aged Labeling} (\MAL), has several applications in logistics, distribution scheduling, and information spreading in social networks, where carefully choosing the time-labels can significantly reduce infrastructure costs, fuel consumption, or greenhouse gases.

The problem \MAL has previously been proved to be NP-complete on undirected graphs and \APX-hard on directed graphs. 
In this paper, we extend our knowledge on the complexity and approximability of \MAL in several directions. We first show that the problem cannot be approximated within a factor better than $O(\log n)$ when $a\geq 2$, unless $\text{P} = \text{NP}$, and a factor better than $2^{\log ^{1-\epsilon} n}$ when $a\geq 3$, unless $\text{NP}\subseteq \text{DTIME}(2^{\text{polylog}(n)})$, where $n$ is the number of vertices in the graph. Then we give a set of approximation algorithms that, under some conditions, almost match these lower bounds. In particular, we show that the approximation depends on a relation between $a$ and the diameter of the input graph.

We further establish a connection with a foundational optimization problem on static graphs called \emph{Diameter Constrained Spanning Subgraph} (\DCSS) and show that our hardness results also apply to \DCSS.
\end{abstract}

\section{Introduction}
We consider a scheduling problem on dynamic networks that is motivated by several applications in logistics, distribution scheduling, and information diffusion in social networks.

As a real-world example, consider a parcel-delivery scenario where there is a warehouse $W$ serving three cities as the center of a star topology, see Figure~\ref{fig:example} for an illustration.
Each city has a bunch of parcels to be delivered to the other cities and, for each pair of cities $(A,B)$, there is at least one parcel that must be delivered from $A$ to $B$.
There is a set of vehicles in $W$ that are used to deliver the parcels from the warehouse to the cities and vice-versa.
Each vehicle can possibly depart from $W$ at every hour. Upon arrival at one of the cities $A$, it delivers at $A$ the parcels destined to $A$ that were previously deposited at $W$, collects the parcels originating from $A$, and returns to the warehouse. 
%Once at $W$, the vehicle deposits the parcels collected in $A$ and destined to the other cities.
A round trip of a vehicle from $W$ to one of the cities and back is called a \emph{trip}. For simplicity, we assume that the travel time is negligible. 
When a vehicle $V_1$ returns from city $A$ to the warehouse $W$, it deposits in $W$ all the parcels that depart from $A$ and whose final destination is any of the other cities. 
When a vehicle $V_2$ departs to another city $B$, it brings to $B$ all the parcels that have been deposited in $W$ so far and whose final destination is $B$. 
If the trip of $V_1$ is scheduled earlier than that of $V_2$, the parcels directed from $A$ to $B$ are delivered. Otherwise, these parcels must wait for the next trip.
Carefully scheduling the trips of all vehicles might reduce, at the same time, the delivery time and the costs in terms of number of required trips, vehicles used, fuel consumption, pollutants, and emission of greenhouse gases.
For example, the 1\textsuperscript{st} schedule in Figure~\ref{fig:example}, requires 6 trips to deliver all parcels and the last parcels are delivered at 9am. 
The 2\textsuperscript{nd} schedule optimizes the latest arrival time and ensures that the last parcels are delivered two hours earlier than the previous solution, at 7am, but still requires 6 trips to deliver all parcels. The 3\textsuperscript{rd} schedule, instead, optimizes the required number of trips, reducing them to 5, but the last parcels are delivered at 8am, one hour later than the previous case.

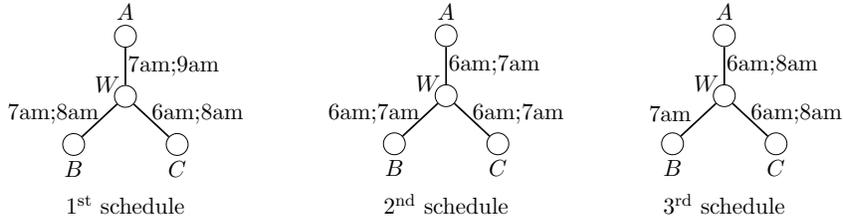
\begin{figure}[t]
    \centering
    \scalebox{0.8}{
  \begin{tikzpicture}
  \node[circle,draw] (W) at (0,0) {};
  \node (w) at (-0.3,0.2) { $W$};
  \node[circle,draw] (A) at (0,1) {};
  \node (a) at (0,1.4) { $A$};
  \node[circle,draw] (B) at (-0.86,-0.8) {};
  \node (b) at (-0.86,-1.2) { $B$};
  \node[circle,draw] (C) at (0.86,-0.8) {};
  \node (c) at (0.86,-1.2) { $C$};
  \draw[-,line width=0.8pt] (W) to (A);
  \node (aw) at (0.8,0.5) {7am;9am};
  \draw[-,line width=0.8pt] (W) to (B);
  \node (bw) at (-1.2,-0.3) {7am;8am};
  \draw[-,line width=0.8pt] (W) to (C);
  \node (cw) at (1.2,-0.3) { 6am;8am};
  \node (cap) at (0,-1.8) { 1\textsuperscript{st} schedule};

\end{tikzpicture}\quad\quad\quad
  \begin{tikzpicture}
  \node[circle,draw] (W) at (0,0) {};
  \node (w) at (-0.3,0.2) { $W$};
  \node[circle,draw] (A) at (0,1) {};
  \node (a) at (0,1.4) { $A$};
  \node[circle,draw] (B) at (-0.86,-0.8) {};
  \node (b) at (-0.86,-1.2) { $B$};
  \node[circle,draw] (C) at (0.86,-0.8) {};
  \node (c) at (0.86,-1.2) { $C$};
  \draw[-,line width=0.8pt] (W) to (A);
  \node (aw) at (0.8,0.5) {6am;7am};
  \draw[-,line width=0.8pt] (W) to (B);
  \node (bw) at (-1.2,-0.3) {6am;7am};
  \draw[-,line width=0.8pt] (W) to (C);
  \node (cw) at (1.2,-0.3)  {6am;7am};
  \node (cap) at (0,-1.8) { 2\textsuperscript{nd} schedule};
\end{tikzpicture}\quad\quad\quad
  \begin{tikzpicture}
  \node[circle,draw] (W) at (0,0) {};
  \node (w) at (-0.3,0.2) { $W$};
  \node[circle,draw] (A) at (0,1) {};
  \node (a) at (0,1.4) { $A$};
  \node[circle,draw] (B) at (-0.86,-0.8) {};
  \node (b) at (-0.86,-1.2) { $B$};
  \node[circle,draw] (C) at (0.86,-0.8) {};
  \node (c) at (0.86,-1.2) { $C$};
  \draw[-,line width=0.8pt] (W) to (A);
  \node (aw) at (0.8,0.5) {6am;8am};
  \draw[-,line width=0.8pt] (W) to (B);
  \node (bw) at (-0.9,-0.3) {7am};
  \draw[-,line width=0.8pt] (W) to (C);
  \node (cw) at (1.2,-0.3)  {6am;8am};
  \node (cap) at (0,-1.8) { 3\textsuperscript{rd} schedule};
\end{tikzpicture}}
    \caption{The scheduled time of trips are reported close to the edges.
    \textbf{1\textsuperscript{st} schedule}: All the 6 scheduled trips are needed to deliver all parcels and the last parcel is delivered at 9am. 
    \textbf{2\textsuperscript{nd} schedule}: All the parcels are deposited in $W$ with the 3 trips at 6am and then are delivered from $W$ to the right cities at 7am; still  6 trips are needed but the last parcel is delivered at 7am. 
    \textbf{3\textsuperscript{rd} schedule}: The parcels leaving cities $A$ and $C$ are deposited in $W$ with the trips at 6am; the single trip to $B$ at 7am brings the parcels directed to $B$ previously deposited at 6am and deposits the parcels from $B$ in $W$; finally, the trips at 8am deliver the parcels directed to $A$ and $C$. Only 5 trips are needed, but the last parcel is delivered at 8am.}\label{fig:example}
\end{figure}

The problem of scheduling trips becomes much more complex if we consider a general network in which each vertex might serve both as a warehouse and as a city, and connections among vertices are given by an arbitrary underlying graph. Moreover, similar problems arise in other contexts such as distribution scheduling~\cite{DP20} and information spreading, where the aim is to schedule a small number of meetings among employees of a company in such a way that each employee can share its own information with any other by a given time, see~\cite{DES}.

Motivated by these applications, we consider the following question: \emph{What is the minimum number of trips needed to deliver all parcels within a given time?}

We model a schedule of trips along edges of a network with a \emph{temporal graph} (a.k.a. dynamic graph) in which the scheduling time of a vehicle is represented as an edge-label and a path in a graph is valid (or \emph{temporal}) only if the edges are traversed in increasing order of their labels.  We then consider the optimization problem of assigning the minimum number of labels to the edges of a graph in such a way that each pair of vertices is connected via a temporal path and the largest label is not greater than a given integer $a$, called the \emph{maximum allowed age}.
This problem, called \emph{Minimum Aged Labeling} (\MAL), has been introduced by Mertzios et al.~\cite{MMS}, who proved that it is \APX-hard in \emph{directed graphs}.
Later, Klobas et al.\cite{KMMS} showed that \MAL is NP-complete also for \emph{undirected graphs}.
To the best of our knowledge, there are no hardness or algorithmic results on the approximation of \MAL in undirected graphs. Moreover, the reduction used to prove the \APX-hardness in directed graphs~\cite{MMS} cannot be easily adapted to undirected graphs as in the constructed graph the direction of edges is used to bound the reachability of vertices. In this paper, we study the complexity of approximating \MAL in undirected graphs showing when, depending on a relation between $a$ and the diameter $D_G$ of the input graph, it is hard to approximate, or it can be approximated in polynomial-time.

\MAL also has a theoretical motivation as it can be interpreted as a \emph{dynamic version} of a foundational graph theoretical problem called the \emph{Diameter Constrained Spanning Subgraph} (\DCSS) problem, which asks to find a spanning subgraph $H$ of a graph $G$ such that the diameter of $H$ is at most a given integer and its number of edges is minimized.

\paragraph*{Our results.}
We provide both hardness of approximation lower bounds and approximation algorithms for \MAL.

%HARDNESS
\noindent
\emph{Hardness of approximation.}
We first prove that, even when the maximum allowed age $a$ of a labeling is a fixed value greater or equal to 2, \MAL cannot be approximated within a factor better than $O(\log n)$, unless $\text{P} = \text{NP}$.
Then, we show that, unless $\text{NP}\subseteq \text{DTIME}(2^{\text{polylog}(n)})$, we cannot find any  $2^{\log ^{1-\epsilon} n}$-approximation algorithm for \MAL, for any $\epsilon\in (0,1)$, even when $a$ is a fixed value greater or equal to 3.

These results advance our knowledge on the computational complexity of \MAL in two directions.
(1)
From an exact computation point of view, the NP-hardness given in~\cite{KMMS} only holds for $a=D_G=10$, while we show that \MAL is NP-hard for any fixed $a\geq 2$ (still, we require $a=D_G$). This closes the characterization of the computational complexity of \MAL with respect to the parameter $a$, as the case $a=1$ is trivial. Moreover, we provide a considerably simpler reduction than the one in~\cite{KMMS}.
(2)
From an approximation point of view, the reduction in~\cite{MMS} shows that \MAL is \APX-hard for $a=D_G=9$ in directed graphs. We show two stronger lower bounds on the approximation, namely that \MAL is hard to approximate better than a logarithmic factor, under $\text{P} \neq \text{NP}$, and a factor $2^{\log ^{1-\epsilon} n}$, under a stronger complexity condition. The second lower bound suggests that it is unlikely to approximate \MAL to a factor better than a polynomial.
Moreover, our lower bounds hold even for any fixed $a\geq 2$ and for $a\geq 3$, respectively, and for undirected graphs (again, we require $a=D_G$). Finally, we remark that our lower bounds also apply to \DCSS.

Our hardness results are given in Section~\ref{sec:hardness} and summarized in Table~\ref{tab:hardness}.

\begin{table}[t]
\caption{Summary of the approximation hardness results for \MAL (see Section~\ref{sec:hardness})}
\label{tab:hardness}
\centering
\begin{tabular}{c|ccc|}
\cline{2-4}
 & \multicolumn{1}{c|}{value of $a$} & \multicolumn{1}{c|}{Complexity assumption} & Approximation hardness \\ \hline
\multicolumn{1}{|c|}{\multirow{3}{*}{$a=D_G$}} & \multicolumn{1}{c|}{$a=2$} & \multicolumn{1}{c|}{$\text{P}\!\not =\!\text{NP}$} & $\Theta( \log n)$ \\ \cline{2-4} 
\multicolumn{1}{|c|}{} & \multicolumn{1}{c|}{\multirow{2}{*}{$a\geq 3$}} & \multicolumn{1}{c|}{$\text{P}\!\not=\!\text{NP}$} & $\Omega( \log n)$ \\ \cline{3-4} 
\multicolumn{1}{|c|}{} & \multicolumn{1}{c|}{} & \multicolumn{1}{c|}{$\text{NP}\not \subseteq \text{DTIME}(2^{\text{polylog}(n)})$} & $\Omega(2^{\log ^{1-\epsilon} n})$ \\ \hline
\multicolumn{1}{|c|}{$D_G < a < 2 D_G+2$} & \multicolumn{3}{c|}{Open} \\ \hline
\multicolumn{1}{|c|}{$a \geq 2 D_G +2$} & \multicolumn{3}{c|}{Polynomial-time solvable} \\ \hline
\end{tabular}
\end{table}

\noindent
\emph{Approximation algorithms.}
Like in~\cite{KMMS} and~\cite{MMS}, all our reductions require that $a=D_G$. Hence, we investigate the approximability of \MAL when $a\geq D_G$, addressing an open question posed in~\cite{KMMS}. We give three sets of results, which interestingly show how the approximation of \MAL  depends on a relation between $a$ and $D_G$.

(1) We first consider the case in which $a$ is sufficiently larger than $R_G$, the radius of $G$. If $a\geq 2R_G$ ($a\geq 2R_G +1$, respectively), we can compute in polynomial-time a solution that requires at most only 2 labels (1 label, respectively) more than the optimum. Observe that these additive approximation bounds correspond to asymptotic multiplicative bounds with approximation factors that arbitrarily approach 1 as the input size grows. Moreover, if $a\geq 2D_G+2$, we can compute an optimal solution in polynomial-time. As \MAL does not admit any feasible solution when $a<D_G$, this first set of results leaves open the cases when $D_G \leq a < 2R_G$.

(2) We then consider the case in which $a$ is slightly larger than $D_G$ by exploiting a relation between \MAL and \DCSS. Specifically, we show that there is a gap of a factor $a$ between the approximation of \MAL and that of \DCSS. We first prove that when $a=D_G=2$, \MAL can be approximated with a logarithmic factor, which asymptotically matches our first hardness lower bound.\footnote{When $a=2$ and $D_G=1$ the graph is a clique. Hence $R_G=1$ and we can solve \MAL with 2 labels more than the optimum.} When $a\geq D_G+2$, we achieve an approximation factor of $O(D_G\cdot n^{1/2})$, which is sublinear when $D_G$ is sufficiently small. Moreover, if $a\geq D_G+4$ or $a\geq D_G+6$, we reduce the approximation factor to $O(D_G\cdot n^{2/5})$ and $O(D_G\cdot n^{1/3})$, respectively. Finally, for any value of $a \geq D_G$ we achieve an approximation factor of $O(D_G\cdot n^{3/5+\epsilon})$, for any $\epsilon>0$. All these approximation factors linearly depend  on $D_G$, due to the gap between the approximation of \MAL and that of \DCSS.

(3) Our main algorithmic contribution consists in approximating \MAL when $D_G \leq a < 2R_G$ without passing through  \DCSS, thus avoiding a linear dependency on $D_G$ in the approximation ratio. We show that when $a \geq \lceil 3/2 \cdot D_G \rceil$ ($a \geq \lceil 5/3 \cdot D_G \rceil$, respectively), then we can approximate \MAL within a factor of $O(\sqrt{n\log n})$ ($O(\sqrt[3]{D_G n\log^2n})$, respectively). Both bounds are sublinear, and the second algorithm outperforms the first one  when $D_G=o(\sqrt{n/\log n})$ but requires greater values of $a$.

Our approximation results for \MAL are given in Section~\ref{sec:alg} summarized in Table~\ref{tab:approx}. The approximation results for \DCSS are given in Section~\ref{sec:dcss}.

\begin{table}[t]
\caption{Summary of the approximation results for \MAL as a function of the age bound $a$ and the graph diameter $D_G$ (see Section~\ref{sec:alg})}
\label{tab:approx}
\centering
\resizebox{\columnwidth}{!}{
\begin{tabular}{|c|c|c|}
\hline
\multicolumn{1}{|c|}{$a$}   & \multicolumn{1}{c|}{$D_G$} & \multicolumn{1}{c||}{Approximation ratio}  \\ \hline
$a\leq D_G +1$  & $D_G\in o(n^{2/5})$ & $O(D_G \cdot n^{3/5+\epsilon})$ \\ \hline
$D_G+2 \leq a \leq D_G+3$ & $D_G\in o(n^{1/2})$ & $O(D_G \cdot n^{1/2})$  \\ \hline
$D_G+4 \leq a \leq D_G+5$ & $D_G\in o(n^{3/5})$ & $O(D_G \cdot n^{2/5})$  \\ \hline
$D_G+6 \leq a < \lceil 3/2D_G\rceil $ & $D_G\in o(n^{2/3})$ & $O(D_G \cdot n^{1/3})$ \\ \hline
\multirow{2}{*}{$\lceil 3/2D_G\rceil \leq a < \lceil 5/3D_G\rceil $} & $D_G\in \Omega(\sqrt{n^{1/3}\cdot \log n})$  & $O(\sqrt{n \cdot \log n})$  \\ \cline{2-3} 
                          & otherwise & $O(D_G \cdot n^{1/3})$    \\ \hline
\multirow{3}{*}{$a \geq \lceil 5/3D_G\rceil $}         & $D_G\in \Omega(\sqrt{n/ \log n})$ & $O(\sqrt{n \cdot \log n})$ \\ \cline{2-3} 
                          & $D_G \in \Omega(\log n) \land D_G\in O(\sqrt{n/ \log n}) $ & $O(\sqrt[3]{D_G\cdot n \cdot \log^2 n})$ \\ \cline{2-3}
                          & otherwise & $O(D_G \cdot n^{1/3})$  \\ \hline
\end{tabular}
}
\end{table}

\paragraph*{Related Work.}
Due to their versatility, temporal graphs have been considered from several perspectives and using different terminology, e.g. \emph{dynamic}, \emph{evolving}, and \emph{time-varying} graphs or networks, see~\cite{M2016}. Mainly motivated by virus-spread minimization (see e.g. \cite{BI16,EK18}), an area that received considerable interest is the one related to the modification of a temporal network in such a way that some objective is optimized (see a recent survey \cite{Meeks22}). Several operations were considered e.g. delaying labels and merging consecutive times~\cite{DP20}, edge-deletion and label-deletion~\cite{EMMZ21}, and changing the relative order of times ~\cite{EMS21}. Moreover, Molter et al.~\cite{MRZ24} studied how the choice between edge-deletion and delaying influences the parameterized complexity of the reachability-minimization objective.
In~\cite{DES}, the authors studied a problem similar yet orthogonal to \MAL, where the objective is to minimize the maximum time required by a subset of vertices to reach every other vertex, through label shifting. Klobas et al.~\cite{KMMS}, also considered a generalization of \MAL where only a subset of \emph{terminal} vertices must be connected, subject to a constraint on the maximum allowed age, and show that the problem is $W[1]$-hard when parameterized by the number of labels.% We discuss the implications of our results in this setting in Section 6.

%Temporal network design problems have already been studied previously. \citeauthor{AGMS}~\shortcite{AGMS} proved that given a temporal (undirected) graph it is APX-hard to remove the largest number of labels preserving the temporal connectivity.  In~\citeauthor{MMS}~\shortcite{MMS} the authors proved that it is APX-hard to compute the minimum-cost labeling for temporally connecting a directed graph $G$, where the lifetime of the resulting temporal digraph is upper-bounded by the diameter of $G$. 
%In~\citeauthor{KMMS}~\shortcite{KMMS}, it was proved that the decision version of \MAL is NP-complete. This latter paper gives a reduction where the input instance of \MAL $(G,a)$ is such that $a=D_G=10$, and therefore the hardness holds even in the special case when the required age is equal to the diameter of the graph and it is equal to 10. Moreover, in the same paper the authors concluded a proof characterizing the minimum number of labels needed to temporally connect any connected graph. %considered a generalization of \MAL where we need to ensure the temporal connectivity only on a subset of vertices, and proved it to be W[1]-hard when parameterized with the size of the subset. 
%As we will observe later, MAL has a static version that we will exploit to... Thus, it will 

%A generalization of MAL to directed graphs was considered in~\cite{MMS} and proved to be APX-hard with a reduction where $a=D_G=9$.
\section{Preliminaries}

For an integer $k\in {\mathbb{N}}$, let $[k]:=\{1,2,\ldots,k\}$ and $[k]_0:= \{0\} \cup[k]$. We consider simple undirected graphs $G=(V,E)$ with $n=|V|$ vertices and $m=|E|$ edges. We also use $V(G)$ and $E(G)$ to refer to the vertex set and the edge set of graph $G$, respectively. We denote an edge $e\in E$ between two vertices $v$ and $w$ with $e=\{v,w\}$, and say that $v$ and $w$ are the \emph{endpoints} of $e$ and that $v$ (as well as $w$) is incident to $e$. Let $v$ and $w$ be two vertices of a graph $G$, we say that $v$ and $w$ are \emph{adjacent} if $\{v,w\}\in E$. We denote the degree of a vertex $v$ with $\deg_G(v)$, representing the number of distinct edges of $G$ incident to $v$.

A \emph{subgraph} $H = (V', E')$ of a graph $G$ (denoted $H \subseteq G$) is a graph such that $V' \subseteq V(G)$ and $E' \subseteq E(G)$, with the additional condition that if ${v, w} \in E'$, then $v, w \in V'$. If $V' = V(G)$ and $H$ is connected, then $H$ is called a \emph{spanning subgraph}.

A \emph{path} $P$ between vertices $v$ and $w$ in graph $G$ is a sequence of distinct vertices $v_0,v_1,\ldots,v_k\in V$ and sequence of distinct edges $\{v_i,v_{i+1}\} \in E$, for each $i\in [k-1]_0$, where $v_0=v$ and $v_k=w$. The \emph{length} of $P$ is equal to the number of its edges, i.e. $k$. 
A \emph{shortest path} is a path with minimum length. A \emph{cycle} in $G$ is a path of length at least 2, plus an edge between $v_0$ and $v_k$. $C_k$ is a cycle with $k$ edges.

The \emph{distance} between two vertices $v$ and $w$ in the graph $G$ is denoted $d_G(v,w)$ and is equal to the length of a shortest path between $v$ and $w$. Let $S\subseteq V$, we denote by $d_G(v,S)$ the distance between $v$ to its closest vertex in $S$, i.e. $d_G(v,S)=\min_{s\in S} d_G(v,s) $. A graph $G$ is \emph{connected} if there exists a path between any pair of vertices.
A \emph{Shortest Path Forest} (SPF) $F$ rooted at some set of vertices $S\subseteq V$ is a forest spanning all vertices $V$ such that $d_F(w,S) = d_G(w,S)$ , for each $w\in V(G)$. If $S$ is a singleton, $S=\{v\}$, a SPF rooted at $S$ is called a \emph{Shortest Path Tree} (SPT) rooted at $v$.

Given a vertex $v\in V$, the \emph{eccentricity} $\text{ecc}_G(v)$ of $v$ in $G$ is the maximum distance between $v$ and any other vertex, i.e. $\text{ecc}_G(v)=\max_{w\in V} d_G(v,w)$. The \emph{diameter} and \emph{radius} of $G$ are defined as $D_G=\max_{v\in V} \text{ecc}_G(v)$ and $R_G=\min_{v\in V} \text{ecc}_G(v)$, respectively. We say that a vertex $v$ is a \textit{center} of a graph $G$ if $\text{ecc}_G(v)=R_G$.

We will drop the subscript $G$ in the notation when the graph in question is clear from the context.

A \textit{temporal graph} is a pair $(G,\lambda)$ where $G$ is a graph and $\lambda : E \longrightarrow 2^{\mathbb{N}\setminus \{0\}}$ is a function assigning to every edge of $G$ a set of discrete \textit{labels} that we interpret as presence in time. 
The \textit{lifetime} (a.k.a. \emph{age}) $L_\lambda$ of a temporal graph $(G,\lambda)$ is the largest label in $\lambda$ i.e. $L_\lambda=\max \{t:  t\in \lambda(e) , e \in E\}$.  The total number of labels in a temporal graph is $|\lambda|=\sum_{e\in E}|\lambda(e)|$.

One central notion in temporal graphs is that of a \textit{temporal path}. In this paper, we adopt the definition commonly referred to as a \emph{strict} temporal path. A \textit{temporal path} is a sequence $(e_1,t_1),(e_2,t_2),\dots,(e_k,t_k)$ where $e_1,e_2,\dots,e_k$ is a path in $G$, $t_i\in \lambda(e_i)$ for every $i\in[k]$, and $t_1<t_2<\dots<t_k$. We say that a vertex $v$ is \textit{temporally reachable} from a vertex $w$ if there exists a temporal path from $w$ to $v$.
A temporal graph is called \textit{temporally connected} if every vertex is temporally reachable from every other vertex.

We are ready to introduce the \emph{Minimum Aged Labeling} (MAL) problem, which is the subject of this paper.
\begin{definition}%[MAL]
    Given a graph $G=(V,E)$ and an integer $a$, the Minimum Aged Labeling (\text{MAL}) problem asks to find a function $\lambda : E\rightarrow 2^{\mathbb{N}\setminus\{0\}}$ such that $(G,\lambda)$ is temporally connected, $L_\lambda \leq a$, and $|\lambda|$ is minimized.
\end{definition}
We assume that the graph $G$ is connected as otherwise \MAL does not admit any feasible solution.
The same holds if $a< D_G$, as already observed in~\cite{KMMS}.
On the other hand, if $a\geq 2D_G+2$, then MAL is solvable in polynomial-time as we will prove in Theorem~\ref{thm:malradiusdiameter}-($iii$) by using a result from~\cite{KMMS}. Hence, we will consider in the rest of the paper the case $D_G \leq a \leq 2D_G+2$.

A problem that is strictly related to \MAL is the \emph{Diameter Constrained Spanning Subgraph} (\DCSS) problem.
\begin{definition}%[\DCSS]
    Given a graph $G$ and an integer $d\geq D_G$, the \DCSS problem asks to find a spanning subgraph $H$ such that $D_H\leq d$ and the number of edges of $H$ is minimized. 
\end{definition}

There is a close connection between problems \DCSS and \MAL. Roughly speaking, \MAL can be considered a temporal version of the \DCSS problem. First, we will state a technical lemma and then a theorem that relates the approximations of \MAL and \DCSS.
%\begin{theorem}\label{thm:apx-mal-dss}
%    Let $G$ be a graph and $b$ be an integer such that $b=O(|V(G)|)$.
%       \begin{enumerate}[i)]
%       \item If there exists an $\alpha$-approximation algorithm for \MAL, then there exists an $(\alpha b)$-approximation algorithm for \DCSS, where $(G,b)$ is the input to \DCSS.
%       \item If there exists an $\alpha$-approximation algorithm for \DCSS, then there exists an $(\alpha b)$-approximation algorithm for \MAL, where $(G,b)$ is the input to \MAL.
%       \end{enumerate}
%\end{theorem}
%In order to prove Theorem~\ref{thm:apx-mal-dss}, we need the following lemma.
\begin{lemma}\label{lem:mal-dss}
    Let $G$ be a graph and $b$ be an integer such that $D_G\leq b\leq 2D_G+2$.
    \begin{enumerate}[i)]
        \item Given a feasible solution $H$ to an instance $(G,b)$ of \DCSS, we can compute in polynomial-time a feasible solution $\lambda$ to the instance $(G,b)$ of MAL with $|\lambda| = b\cdot |E(H)|$.
        \item Given a feasible solution $\lambda$ to an instance $(G,b)$ of MAL, we can compute in polynomial-time a feasible solution $H$ to the instance $(G,b)$ of \DCSS with $|E(H)| \leq |\lambda|$.
    \end{enumerate}
\end{lemma}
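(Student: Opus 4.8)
The plan is to prove the two directions essentially independently, since they go opposite ways.

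\emph{Direction (i).} Suppose $H$ is a spanning subgraph of $G$ with $D_H\le b$. I would build $\lambda$ by repeating the edge set of $H$ with every label in $[b]$: set $\lambda(e)=[b]$ for each $e\in E(H)$ and $\lambda(e)=\emptyset$ for $e\notin E(H)$. Clearly $L_\lambda\le b$ and $|\lambda|=b\cdot|E(H)|$. The point to check is temporal connectivity: for any ordered pair $(u,v)$, take a shortest path in $H$, which has length $k\le D_H\le b$; assigning it the increasing label sequence $1,2,\dots,k$ is legal because each edge carries all of $[b]\supseteq[k]$. So every pair is temporally connected, and the construction is plainly polynomial. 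This direction is routine.

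\emph{Direction (ii).} Given a feasible $\lambda$ for $(G,b)$, let $H$ be the spanning subgraph whose edge set is exactly the support $\{e\in E: \lambda(e)\neq\emptyset\}$. Then $|E(H)|\le|\lambda|$ trivially, and $H$ is clearly computable in polynomial time. The content is to show $D_H\le b$, i.e.\ $H$ is a feasible \DCSS solution. The subtle point is that a temporal path in $(G,\lambda)$ of hop-length $k$ certifies a path of length $k$ in $H$ (every edge it uses is in the support), so temporal reachability of $v$ from $u$ within lifetime $\le b$ gives $d_H(u,v)\le b$; temporal connectivity of $(G,\lambda)$ then yields $d_H(u,v)\le b$ for all pairs, hence $D_H\le b$. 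So in fact this direction is also short once the support observation is made.

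\emph{Expected main obstacle.} Neither direction is mathematically deep; the only thing to be careful about is the interplay between the \emph{strict} temporal-path definition and the hop-bound. In direction (i) the reason the naive labeling works is exactly that the lifetime budget $b$ is at least $D_H$, so a shortest $H$-path of length $\le b$ can be realized with strictly increasing labels drawn from $[b]$; this is where the hypothesis $b\ge D_G$ (and $b\ge D_H$, which follows since $H$ spans $G$) is used. The upper bound $b\le 2D_G+2$ is not actually needed for this lemma and is carried only to match the regime of interest stated earlier. I would make sure to state explicitly that a temporal path of length $k$ has all $k$ edges present in $H$, and that $t_1<\dots<t_k\le b$ forces $k\le b$, so that both the ``temporal $\Rightarrow$ short'' and ``short $\Rightarrow$ temporal'' implications are transparent.
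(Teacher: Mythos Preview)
Your proposal is correct and matches the paper's proof essentially line for line: both directions use the obvious support/full-labeling constructions together with the observation that a strict temporal path with labels in $[b]$ has at most $b$ edges. One small correction to your commentary: the upper bound $b\le 2D_G+2$ \emph{is} used, not for feasibility but to ensure that assigning $[b]$ to every edge of $H$ runs in polynomial time (it gives $b=O(n)$); also, $b\ge D_H$ comes directly from $H$ being a feasible \DCSS solution, not from the fact that $H$ spans $G$.
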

\begin{proof}
    %Let $G$ be a graph and $b$ and integer.\\
    $i$) Let $H$ be a feasible solution to the instance $(G,b)$ of \DCSS, we construct $\lambda$ by assigning to every edge in $E(H)$ the set of labels $[b]$. This procedure requires polynomial-time in $G$ since $b=O(|V(G)|)$. To see that $(G,\lambda)$ is temporally connected, let $v,w$ be any pair of vertices of $G$. As $H$ is a feasible solution for the instance $(G,b)$ of \DCSS, then     $D_H\leq b$, which implies that there must exist a path $e_1,e_2,\dots,e_k$ between $v$ and $w$ of length $k\leq b$. Observe that $(e_1,1),(e_2,2),\dots,(e_k,k)$ is a valid temporal path in $(G,\lambda)$ from $v$ to $w$. Thus, $(G,\lambda)$ is temporally connected.
    
    $ii$) Let $\lambda$ be a feasible solution to the instance $(G,b)$ of MAL, we construct $H$ by removing the labels from the edges of $(G,\lambda)$. Formally, $E(H):= \{e : e\in E(G) \land \lambda(e)\not = \emptyset\}$ and observe that $|E(H)|\leq |\lambda|$ by definition. Since $(G,\lambda)$ is temporally connected with lifetime at most $b$, the graph $H$ is connected and has diameter at most $b$. In fact, any temporal path that uses at most $b$ distinct labels, has at most $b$ edges.
\end{proof}
We can now relate the approximations of \MAL and \DCSS through the following theorem.
\begin{theorem}\label{thm:apx-mal-dss}
    Let $G$ be a graph and $b$ be an integer such that $b=O(|V(G)|)$.
       \begin{enumerate}[i)]
       \item If there exists an $\alpha$-approximation algorithm for \MAL, then there exists an $(\alpha b)$-approximation algorithm for \DCSS, where $(G,b)$ is the input to \DCSS.
       \item If there exists an $\alpha$-approximation algorithm for \DCSS, then there exists an $(\alpha b)$-approximation algorithm for \MAL, where $(G,b)$ is the input to \MAL.
       \end{enumerate}
\end{theorem}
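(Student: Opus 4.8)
The plan is to derive Theorem~\ref{thm:apx-mal-dss} as a direct corollary of Lemma~\ref{lem:mal-dss}, using the fact that the two problems share the same feasible region boundary (namely $D_G \le b \le 2D_G+2$; and when $b > 2D_G+2$ we can work with $b' = \min\{b, 2D_G+2\}$, since a spanning subgraph or a labeling that works for a smaller age bound also works for a larger one, while the optima can only decrease — but let me keep $b = O(|V(G)|)$ as stated and reduce to the range where Lemma~\ref{lem:mal-dss} applies). Throughout, write $\mathrm{OPT}_{\MAL}(G,b)$ and $\mathrm{OPT}_{\DCSS}(G,b)$ for the optimum values of the two problems on instance $(G,b)$.

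For part $(i)$: assume an $\alpha$-approximation algorithm $\mathcal{A}$ for \MAL. Given a \DCSS instance $(G,b)$, run $\mathcal{A}$ to get a labeling $\lambda$ with $|\lambda| \le \alpha \cdot \mathrm{OPT}_{\MAL}(G,b)$, then apply Lemma~\ref{lem:mal-dss}$(ii)$ to obtain in polynomial time a feasible \DCSS solution $H$ with $|E(H)| \le |\lambda|$. It remains to bound $\mathrm{OPT}_{\MAL}(G,b)$ in terms of $\mathrm{OPT}_{\DCSS}(G,b)$: take an optimal \DCSS solution $H^\ast$ with $|E(H^\ast)| = \mathrm{OPT}_{\DCSS}(G,b)$ and apply Lemma~\ref{lem:mal-dss}$(i)$ to get a feasible \MAL solution of size $b \cdot |E(H^\ast)|$, hence $\mathrm{OPT}_{\MAL}(G,b) \le b \cdot \mathrm{OPT}_{\DCSS}(G,b)$. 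Chaining the inequalities gives $|E(H)| \le \alpha \cdot \mathrm{OPT}_{\MAL}(G,b) \le \alpha b \cdot \mathrm{OPT}_{\DCSS}(G,b)$, which is the claimed $(\alpha b)$-approximation. Part $(ii)$ is symmetric: run the $\alpha$-approximation $\mathcal{B}$ for \DCSS to get $H$ with $|E(H)| \le \alpha \cdot \mathrm{OPT}_{\DCSS}(G,b)$, apply Lemma~\ref{lem:mal-dss}$(i)$ to turn it into a \MAL solution $\lambda$ with $|\lambda| = b \cdot |E(H)|$, and bound $\mathrm{OPT}_{\DCSS}(G,b) \le \mathrm{OPT}_{\MAL}(G,b)$ via Lemma~\ref{lem:mal-dss}$(ii)$ applied to an optimal \MAL labeling; then $|\lambda| \le \alpha b \cdot \mathrm{OPT}_{\DCSS}(G,b) \le \alpha b \cdot \mathrm{OPT}_{\MAL}(G,b)$.

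The one point requiring a little care — and the only genuine obstacle — is the hypothesis range: Lemma~\ref{lem:mal-dss} is stated for $D_G \le b \le 2D_G+2$, whereas the theorem only assumes $b = O(|V(G)|)$. I would handle this by observing that if $b < D_G$ both problems are infeasible (so the statement is vacuous), and if $b > 2D_G + 2$ then every optimum is the same as at $b = 2D_G+2$ in the relevant direction — more simply, I can just note that in that regime \DCSS is trivially solvable (any BFS tree from a center, doubled appropriately, or in fact the results invoked after this theorem show \MAL itself is polynomial-time solvable), so no approximation reduction is needed; thus without loss of generality $D_G \le b \le 2D_G+2$ and Lemma~\ref{lem:mal-dss} applies verbatim. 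The remaining ingredients — polynomial-time composition of the two conversions and the optimum-comparison inequalities — are immediate from the lemma, so the proof is short.

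One should also double-check that both conversions in Lemma~\ref{lem:mal-dss} preserve feasibility with the \emph{same} bound $b$ (they do: $(i)$ produces a labeling of lifetime exactly $b \ge D_G$, and $(ii)$ produces a subgraph of diameter at most $b$), so there is no off-by-one slippage between the two problems' constraints, and the factor lost is exactly $b$ in each direction. I expect the writeup to be four or five lines per part.
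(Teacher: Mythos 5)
Your proof is correct and follows essentially the same route as the paper: both directions are obtained by composing the two conversions of Lemma~\ref{lem:mal-dss} and chaining the resulting inequalities between the two optima, losing a factor $b$ exactly where the lemma's part $(i)$ is invoked. Your extra remark about reducing to the range $D_G \le b \le 2D_G+2$ is a reasonable bit of care that the paper leaves implicit, but it does not change the argument.
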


\begin{proof}
    $i$) Let $(G,b)$ be an instance of \DCSS. Let us consider $(G,b)$ as an instance of \MAL and apply to it an $\alpha$-approximation algorithm for \MAL. Let $\lambda$ be the returned solution, we have $|\lambda| \leq \alpha |\lambda^*|$, where $\lambda^*$ is an optimal solution to the \MAL instance. By Lemma~\ref{lem:mal-dss}-($ii$) we can compute in polynomial-time a solution $H$ for the instance of \DCSS such that $|E(H)|\leq |\lambda| \leq \alpha |\lambda^*|$.
    
    Let $H^*$ be an optimal solution to the \DCSS instance. By Lemma~\ref{lem:mal-dss}-($i$), starting from $H^*$ we can construct a solution $\lambda'$ for the corresponding \MAL instance such that $|\lambda'|= b |E(H^*)|$ and then we have $|\lambda^*|\leq  b |E(H^*)|$. It follows that $|E(H)|\leq\alpha b |E(H^*)|$.

    $ii$) Let $(G,b)$ be an instance of \MAL. Let us consider $(G,b)$ as an instance of \DCSS and apply to it an $\alpha$-approximation algorithm for \DCSS. Let $H$ be the returned solution, we have $|E(H)| \leq \alpha |E(H^*)|$, where $H^*$ is an optimal solution to the \DCSS instance. By Lemma~\ref{lem:mal-dss}-($i$) we can compute in polynomial-time a solution $\lambda$ for the instance of \MAL such that $|\lambda| = b|E(H)| \leq b\alpha |E(H^*)|$.  By Lemma~\ref{lem:mal-dss}-($ii$), we have that $|E(H^*)|\leq |\lambda^*|$, which implies that $|\lambda| \leq b\alpha|\lambda^*|$.
\end{proof}

\section{Hardness of Approximation}\label{sec:hardness}
In this section, we prove that \MAL and \DCSS are hard to approximate, even under some restrictions on the input instance. 

We begin by presenting our logarithmic lower bound for $a=2$ in the next theorem. 
\begin{restatable}{theorem}{malinapprox}\label{thm:malinapprox}
For every $\epsilon\in(0,1/4)$, there is no polynomial-time $(\epsilon \log |V|)$-approximation algorithm for \MAL, unless $\text{P} = \text{NP}$. The hardness holds even when the maximum allowed age $a$ is equal to 2.
\end{restatable}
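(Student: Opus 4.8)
The plan is to reduce from \textsc{Set Cover}, for which it is known (Dinur--Steurer, Feige) that there is no polynomial-time $(c\ln N)$-approximation for some constant $c>0$ (and actually for any constant below $1$) unless $\mathrm{P}=\mathrm{NP}$, where $N$ is the size of the ground set. Given a \textsc{Set Cover} instance with universe $\Universe=\{u_1,\dots,u_N\}$ and sets $\Sets=\{S_1,\dots,S_M\}\subseteq 2^{\Universe}$, I would build a graph $G$ whose optimal \MAL{} labeling with $a=2$ essentially ``pays'' one unit per chosen set plus a fixed overhead. The natural construction is bipartite-like: create a vertex for each set $S_j$ and for each element $u_i$, add an edge $\{S_j,u_i\}$ whenever $u_i\in S_j$, and attach a global hub vertex $h$ adjacent to every set-vertex (and possibly to every element-vertex, or via the set-vertices) so that the graph has diameter exactly $2$. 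One then argues that, with age bound $a=2$, every temporal path has at most two edges, so temporal connectivity forces the ``time-$1$ then time-$2$'' structure to behave like a covering condition: the edges receiving label $1$ incident to element-vertices must, for connectivity reasons between pairs of element-vertices, route through set-vertices that collectively cover $\Universe$.

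Concretely, I would set things up so that each set-vertex $S_j$ is connected to $h$, and each element-vertex $u_i$ is connected only to the set-vertices containing it (and perhaps to $h$). To make two element-vertices $u_i,u_{i'}$ temporally connected within age $2$, there must be a common neighbor $x$ with an edge labeled $1$ on one side and $2$ on the other; the only candidates are set-vertices containing both, or the hub. To kill the ``hub shortcut'' I would make the hub's incident edges expensive or unusable for element-to-element traffic by a gadget — e.g., blow up $h$ into a path of length $3$ or into many copies so that going through it costs more than $a=2$ allows, while set-vertices remain short. After this, the crux becomes: a feasible labeling of small total size corresponds to a small cover, because the set-vertices that carry both a label $1$ and a label $2$ edge (to element-vertices) form a set cover, and conversely a cover of size $k$ yields a labeling whose size is $\Theta(k)$ plus a poly$(N,M)$-free additive term. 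I would choose the gadget sizes so that $|V|$ is polynomial in $N+M$ and the additive overhead in $|\lambda|$ is dominated by $\Theta(k)$ when $k=\Theta(N)$, which is the regime where the \textsc{Set Cover} hardness bites; a logarithmic-gap-preserving reduction then transfers the $(1-o(1))\ln N$ inapproximability, and since $N\leq |V|\leq \mathrm{poly}(N)$ we get a bound of the form $\epsilon\log|V|$ for every $\epsilon<1/4$ (the $1/4$ absorbing the polynomial blow-up $|V|=N^{O(1)}$ and the $\ln$ vs $\log$ conversion).

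The step I expect to be the main obstacle is controlling the additive overhead and the ``free'' connectivity so tightly that the reduction is approximation-preserving, not merely hardness-of-exact-solution preserving. In particular I must ensure: (a) there is a canonical cheap way to handle all pairs \emph{other} than element--element pairs (set--set, set--element, and anything involving gadget vertices) using a number of labels that does not depend on the cover size, so that the only ``interesting'' cost is the element--element connectivity; and (b) an optimal \MAL{} solution cannot cheat by using element-vertices themselves as intermediate vertices (an element-vertex $u_i$ could serve as a relay between $u_{i'}$ and $u_{i''}$ if both are adjacent to it, but adjacency to $u_i$ only happens through shared set-membership, so I need the incidence structure to prevent this, or to show such relays also encode a valid cover). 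Handling (b) likely needs the element-vertices to form an independent set and to have their only length-$2$ connections pass through set-vertices or the hub; then ruling out the hub via the length-$3$ gadget forces set-vertices, which is exactly the cover. I would also double-check the boundary claim that $a=2$ forces $D_G=2$ is consistent (if $D_G=1$ the graph is a clique and the problem is trivial, as the paper's footnote notes), so the constructed $G$ should have diameter exactly $2$, which the hub-plus-incidence design achieves as long as every two set-vertices share the hub as a common neighbor and every element-vertex is within distance $2$ of everything.

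Finally, I would verify the quantitative bookkeeping: if $\mathrm{OPT}_{\mathrm{SC}}=k$ then $\mathrm{OPT}_{\MAL}= \gamma k + \beta$ for explicit constants/polynomials $\gamma,\beta$ with $\beta$ of lower order in the hard instances, so an $\alpha$-approximation for \MAL{} with $\alpha=\epsilon\log|V|$ yields an $O(\alpha)$-approximation for \textsc{Set Cover} on $N$ elements; choosing the gadget so that $|V|\le N^{4-\delta}$ for some $\delta>0$ makes $\epsilon\log|V|<(1-\delta')\ln N$ for suitable $\epsilon<1/4$, contradicting \textsc{Set Cover} inapproximability unless $\mathrm{P}=\mathrm{NP}$. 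The write-up would present the construction, prove the two directions (cover $\Rightarrow$ labeling, labeling $\Rightarrow$ cover) as lemmas, and then assemble the inapproximability bound.
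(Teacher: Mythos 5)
Your high-level plan (gap-preserving reduction from \textsc{Set Cover} via a diameter-$2$ incidence graph with a cheap hub, plus careful control of the additive overhead) is the same skeleton as the paper's proof, and you correctly flag the two dangerous spots. However, the mechanism you propose for forcing a set cover does not work. You want the element--element pairs to be the cover-forcing pairs, after ``killing'' the hub for element-to-element traffic. But if two elements $u_i,u_{i'}$ share no set, their only common neighbour is the hub; disable that and they cannot be temporally connected within age $2$ at all, so the instance is infeasible (equivalently, $D_G>2$). Even for pairs that do share sets, the resulting condition is not \textsc{Set Cover}: connecting $u_{i'}\to s\to u_i$ for all $i'$ requires that the sets \emph{containing $u_i$} which carry the right labels cover $U$, a restricted covering problem that need not be feasible and whose optimum is unrelated to the original cover. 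The paper resolves this by inverting your design: element--element (and set--set, etc.) pairs are made \emph{cheap} via a hub $t_1$ adjacent to $U\cup\mathcal{C}$, and the cover-forcing demand comes from a new set $W$ of $x$ auxiliary vertices, each adjacent to \emph{all} set-vertices and to \emph{no} element-vertex. For each $w_l\in W$ and each label $p\in\{1,2\}$, the set-vertices $s_j$ with $p\in\lambda(\{w_l,s_j\})$ must form a genuine set cover, since every length-$2$ path from $w_l$ to any $u_i$ passes through a set containing $u_i$.

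The second issue is your overhead accounting. You plan to argue the additive term $\beta$ is dominated by $\Theta(k)$ ``when $k=\Theta(N)$,'' but the Dinur--Steurer hardness gives no such guarantee on the optimal cover size, and with $\beta=\Theta(N+M)$ and $k^*$ possibly constant the reduction would not be approximation-preserving. The paper's replication is what fixes this: taking $x=\eta+\mu+1$ copies in $W$ makes the covering cost at least $2xk$ while the entire overhead (all edges incident to $t_1,t_2,t_3$ and the $U$--$\mathcal{C}$ edges) is $O(x)$, yielding $k\le\alpha(k^*+3)$ for \emph{every} value of $k^*$. Your ``blow $h$ up into many copies'' is in the right neighbourhood, but you deploy the copies to make the hub expensive rather than as the replicated demand vertices whose reachability encodes the cover; that redirection is the missing idea.
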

Observe that Theorem~\ref{thm:apx-mal-dss} together with Theorem~\ref{thm:malinapprox} imply that a logarithmic lower bound also holds for \DCSS when $d=2$.

To prove Theorem~\ref{thm:malinapprox}, we introduce the \emph{Set Cover} (SC) problem.
In SC we are given a universe $U=\{u_1,\dots,u_\eta\}$, with $|U|=\eta$, and a collection of $\mu$ subsets of $U$, $\mathcal{C} = \{s_1,\dots,s_\mu \}\subseteq 2^U$. 
The goal is to find a minimum-size subcollection $\mathcal{C}^*\subseteq \mathcal{C}$ such that $\bigcup_{s_i\in\mathcal{C}^* }s_i=U$. 

The next theorem proves the intractability of approximating SC to a factor smaller than $\log N$, where $N$ is the size of an instance.

\begin{theorem}[\cite{DinurSteurer}, Corollary 1.5]\label{lemma:sc}
    For every $\alpha \in(0,1)$, it is \emph{NP}-hard to approximate \emph{SC} to within $(1-\alpha)\log N$, where $N$ is the size of the instance. The reduction runs in time $N^{O(1/\alpha)}$.
\end{theorem}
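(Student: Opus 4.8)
The plan is to follow the now-standard architecture for tight Set Cover hardness: compose a hard instance of a two-prover projection game (\emph{Label Cover}) with a combinatorial gadget called a \emph{partition system}, and then optimize parameters to obtain both the $(1-\alpha)\log N$ gap and the $N^{O(1/\alpha)}$ running time. First I would start from the inapproximability of Label Cover. By the PCP theorem together with a parallel repetition analysis — and, to get the \emph{sharp} constant under $\text{P}\neq\text{NP}$ rather than a weaker complexity assumption, the moderately exponential repetition/agreement analysis that is the technical heart of Dinur–Steurer — one obtains a projection game $\Phi$ on variable sets $X,Y$ with label sets of size $\ell$ such that it is NP-hard to distinguish the case where $\Phi$ is perfectly satisfiable from the case where no assignment satisfies more than a $\delta$-fraction of constraints. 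Here $\delta$ can be driven down as a function of the repetition count $r$ (roughly $\delta\approx \ell^{-\Omega(r)}$), and producing this gap instance takes time $N^{O(r)}$, which is what will ultimately yield the $N^{O(1/\alpha)}$ bound once $r$ is tied to $\alpha$.

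Next I would introduce a \emph{partition system} $B(m,L,k)$: a ground set of $m$ points together with $L$ partitions, each splitting the ground set into $k$ blocks, satisfying the covering property that no collection of blocks drawn from fewer than $(1-o(1))\,k\ln m$ distinct partitions can cover all $m$ points. A probabilistic argument shows such systems exist and can be built in time polynomial in $m$, provided $m$ is polynomially large in $k$ and $L$. The parameter $k$ will be identified with the label-set size $\ell$, so that the optimal covering number $\approx \ln m$ is exactly the logarithmic gap we are targeting; under the natural-logarithm normalization this $\ln m$ matches the $\log N$ of the statement.

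I would then perform the \emph{composition}. Attach a private copy $B_x$ of the partition system to each variable $x\in X$, so the Set Cover universe is $U=\bigcup_x B_x$. For each constraint edge $(x,y)$ and each candidate label, form a set consisting of the block of $B_x$ indexed by that label (via the partition associated to the edge) together with the block of $B_y$ dictated by the projection constraint. A consistent global labeling then selects, in each copy, blocks coming from a single partition. The analysis is two-sided and is the crux of the argument. For \emph{completeness}, a satisfying assignment picks one label per variable and covers every copy using $k=\ell$ sets, giving a small cover. For \emph{soundness} I argue the contrapositive: any cover that is too small must, in most copies $B_x$, use blocks drawn from fewer than $(1-\alpha)\ln m$ partitions, so by the partition-system property those copies cannot be covered unless the incident constraints are mutually agreed upon; a counting/averaging step then extracts an assignment satisfying more than a $\delta$-fraction of $\Phi$, a contradiction. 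The multiplicative gap between the two cover sizes is precisely the covering number $(1-o(1))\ln m=(1-o(1))\log N$.

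The final step is parameter bookkeeping: set $k=\ell$, choose $m$ polynomially large so that $\ln m=(1-o(1))\log N$, and pick $r=\Theta(1/\alpha)$ so that the soundness $\delta$ is small enough to force the gap to be at least $(1-\alpha)\log N$ while keeping the construction time $N^{O(r)}=N^{O(1/\alpha)}$. The main obstacle I expect is exactly this interplay: achieving the \emph{exact} constant $(1-\alpha)$ rather than some fixed fraction requires soundness $\delta$ tiny relative to the partition-system parameters, and a naive parallel repetition either loses too much or inflates the instance beyond $N^{O(1/\alpha)}$. Controlling the repetition so that the soundness decays fast enough \emph{without} an excessive size blow-up is the delicate technical core, and it is precisely where the refined analysis is needed.
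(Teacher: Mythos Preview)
The paper does not prove this statement at all: Theorem~\ref{lemma:sc} is quoted verbatim from Dinur and Steurer (Corollary~1.5 in their paper) and used as a black box in the subsequent reductions to \MAL and \DCSS. There is therefore no ``paper's own proof'' to compare against.

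That said, your sketch is an accurate high-level summary of how Dinur and Steurer actually obtain the result: start from Label Cover with sufficiently small soundness (their contribution being a parallel-repetition/agreement analysis that achieves this under $\text{P}\neq\text{NP}$ with only a polynomial blow-up of exponent $O(1/\alpha)$), compose with Feige-style partition systems to translate the Label Cover gap into a $(1-\alpha)\ln N$ Set Cover gap, and set the repetition parameter $r=\Theta(1/\alpha)$. You have also correctly identified the crux, namely that pushing the constant to $(1-\alpha)$ rather than some fixed fraction is precisely where the Dinur--Steurer repetition analysis is needed. For the purposes of the present paper, however, none of this is required: the result is invoked, not reproved.
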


\begin{figure}[t]
    \centering
    \pgfdeclarelayer{nodelayer}
    \pgfdeclarelayer{edgelayer}
    \pgfsetlayers{nodelayer,edgelayer}
    \includegraphics[width=0.5 \textwidth]{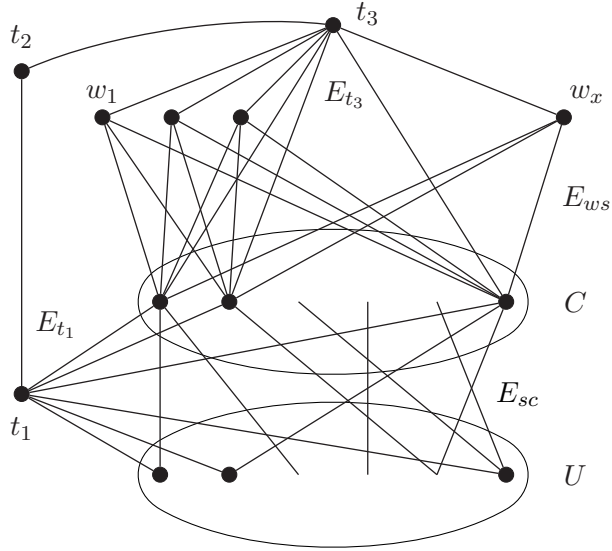}
    \caption{The graph $G$ constructed starting from a SC instance $(U,\mathcal{C})$ in the proof of Theorem~\ref{thm:malinapprox}.}
    \label{fig1}
\end{figure}

Let $(U,\mathcal{C})$ be an instance of the SC problem. We construct a graph $G=(V,E)$ which we will use to prove Theorem~\ref{thm:malinapprox}. See Figure~\ref{fig1} for a reference of the construction.
Graph $G$ contains a vertex $s_j$ for each set of $\mathcal{C}$, a vertex $u_i$ for each element of the universe $U$, and two sets of  new vertices $T=\{t_1,t_2,t_3\}$ and $W=\{w_1,w_2,\dots,w_x\}$ where $x$ is an integer to be defined later. Formally, $V(G)= \mathcal{C}\cup U\cup T \cup W$.
The set of edges of $G$ contains the edges induced by the instance of SC, that is $E_{\text{sc}}=\{\{u_i,s_j\} \mid u_i\in U \land s_j\in \mathcal{C} \land u_i\in s_j\}$. It also contains edges $E_{t_1}=\{\{t_1,z\}\mid z\in U\cup \mathcal{C} \cup \{t_2\}\}$ and the edges $E_{t_3}=\{\{t_3,z\}\mid z\in W \cup \mathcal{C} \cup \{t_2\}\}$. Moreover, we add to $G$ the edges between each $w_l\in W$ and all vertices in $\mathcal{C}$, more formally $E_{\text{ws}}=\{\{w_l,s_j\}:w_l \in W \land s_j\in \mathcal{C}\}$. Finally, the edge set of $G$ is equal to $E(G)=E_{\text{sc}}\cup E_{t_1} \cup E_{t_3}\cup E_{\text{ws}}$.
The reader can check that the diameter of $G$ is equal to 2.

%INFORMAL IDEA OF THM2
The idea behind the reduction is that, since all paths of length 2 between any $ w \in W $ and any $ u \in U $ pass through $ C $, we must select two (possibly identical) set covers for each $ w \in W $—one to allow $ w $ to reach all vertices in $ U $ and another to allow all vertices in $ U $ to reach $ w $. The vertices in $ T $ help connect the other pairs of vertices. However, the labels of the edges incident to the vertices in $ T $ might contribute a dominant term to the solution size. We address this by adding a sufficient number of vertices in $ W $.

Before proving Theorem~\ref{thm:malinapprox}, we need the following lemma.

\begin{lemma}\label{lem:mal2}
    Let $(U,\mathcal{C})$ be an instance of SC and let $G$ be the graph constructed as described above. Given an optimal set cover of size $k^*$ for $(U,\mathcal{C})$, we can construct a feasible solution $\lambda$ to the instance $(G,2)$ of MAL with $|\lambda|\leq 6x+2xk^*$.
\end{lemma}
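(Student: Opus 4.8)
The plan is to take an optimal set cover $\mathcal{C}^*\subseteq\mathcal{C}$ of size $k^*$ and use it to design the labeling $\lambda$ on the constructed graph $G$, keeping the budget of $6x+2xk^*$ in mind: the $2xk^*$ term should come from putting $2$ labels on the $k^*x$ edges connecting the sets in $\mathcal{C}^*$ to the $x$ vertices of $W$, and the $6x$ term should be a constant number of labels on a small gadget of edges incident to $T$ that handles all the remaining pairs. Since $D_G=2$ and $a=2$, every temporal path we construct must have length at most $2$, so I only get to use labels $1$ and $2$, and a length-$2$ temporal path must traverse an edge labeled $1$ followed by an edge labeled $2$.

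First I would describe the labeling. Fix any $s^*\in\mathcal{C}^*$. On the edges $E_{t_1}$ and $E_{t_3}$ incident to $t_1,t_3$, and on the edges $\{t_1,t_2\},\{t_2,t_3\}$ (and perhaps $\{t_1,s^*\},\{t_3,s^*\}$ or a few edges among $\mathcal{C}$), I would assign the label set $\{1,2\}$; this is a bounded number of edges — a quick count should show at most $3x$-ish... no, wait: the edges $E_{t_3}$ already include all $x$ edges to $W$, so I must be careful. Let me instead put only label $1$ or only label $2$ on most $t$-edges and reserve both labels for just a constant number of "hub" edges, so that the total over $T$-incident edges is $\le 6x$; actually the cleanest accounting is: assign $\{1,2\}$ to the $\le 6$ edges forming the path structure through $t_2$ and to the edges $\{t_1,s^*\}$, $\{t_3,s^*\}$, assign label $1$ to every edge $\{t_1,z\}$ with $z\in U\cup\mathcal{C}$ and label $2$ to every edge $\{t_3,z\}$ with $z\in W\cup\mathcal{C}$, which is $O(x)$ in total — I will do the precise bookkeeping to land at $\le 6x$. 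Then for the $W$–$\mathcal{C}$ edges, assign label set $\{1,2\}$ to every edge $\{w_l,s\}$ with $s\in\mathcal{C}^*$, contributing exactly $2xk^*$, and assign no label to the other $E_{\text{ws}}$ edges and no label to the $E_{\text{sc}}$ edges (or a single auxiliary label if needed for $U$–$U$ reachability via $t_1$).

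Next I would verify temporal connectivity for every ordered pair. The cases to check: (a) $u_i\to u_j$ and $u_i\to t_\cdot$, $t_\cdot\to u_i$, etc., all routed through $t_1$ using a label-$1$ then label-$2$ step; (b) $w_l\to u_i$: pick a set $s\in\mathcal{C}^*$ containing $u_i$ (exists since $\mathcal{C}^*$ covers $U$), use $\{w_l,s\}$ labeled $1$ then $\{s,u_i\}$ labeled $2$ — but $\{s,u_i\}\in E_{\text{sc}}$ has no label in my scheme, so I must route instead $w_l\to s\to t_1\to u_i$, which is length $3>2$. This is the crux: the length-$2$ constraint forces $E_{\text{sc}}$ edges to carry label $2$ (for $w\to u$) and label $1$ (for $u\to w$), i.e. $\{1,2\}$, on edges incident to covering sets. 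So the correct scheme is: for each $s\in\mathcal{C}^*$, put $\{1,2\}$ on all edges of $E_{\text{sc}}$ incident to $s$ — but there can be up to $\eta$ such edges per set, blowing the budget. The resolution must be that $w$–$u$ traffic goes $w\to s\to u$ with $\{w,s\}$ labeled $1$ and $\{s,u\}$ labeled $2$, while $u$–$w$ traffic goes $u\to t_1\to\dots$ no. I expect the intended routing is: $u\to w$ via $u\to s\to w$ with $\{u,s\}$ labeled $1$, $\{s,w\}$ labeled $2$, and $w\to u$ via $w\to s'\to u$ with $\{w,s'\}$ labeled $1$, $\{s',u\}$ labeled $2$; so each $E_{\text{sc}}$ edge incident to $\mathcal{C}^*$ needs $\{1,2\}$ and each $E_{\text{ws}}$ edge incident to $\mathcal{C}^*$ needs $\{1,2\}$. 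The $E_{\text{sc}}$ part is $\le 2\eta k^*$, not $\le 2xk^*$ — unless $x$ is chosen $\ge \eta$, which the "defined later" remark suggests. So I will assume $x\ge\eta$ (or $x\ge\mu$), making $2\eta k^*\le 2xk^*$, and then the $E_{\text{ws}}$-to-$\mathcal{C}^*$ edges are another $\le 2xk^*$... that's $4xk^*$. Hmm — so perhaps only one direction uses $\mathcal{C}^*$ and the other is absorbed into the $6x$ term via $t_1,t_3$; I'd need $x\ge$ something to absorb the single-direction cost too.

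Given this tension, the main obstacle is clearly the precise routing-and-counting: deciding, for each of the $O(1)$ types of vertex pairs, which length-$\le 2$ path realizes it, and choosing the value of $x$ (as a function of $\eta,\mu,k^*$, or just $\eta+\mu$) large enough that all the "overhead" edges incident to $T$ and all the non-$W$ covering edges fit into the $6x$ slack while the genuinely $k^*$-dependent cost — the $\{1,2\}$-labels on the $x$ edges from each of the $k^*$ chosen sets to $W$ — is exactly the $2xk^*$ term. I would organize the write-up as: (1) state the choice of $x$; (2) define $\lambda$ explicitly by edge class; (3) a short paragraph per pair-type verifying a length-$\le 2$ strictly-increasing path exists; (4) sum $|\lambda|$ by edge class and check it is $\le 6x+2xk^*$. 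The connectivity verification is routine once the scheme is right; the real work is getting the scheme and the value of $x$ consistent, which I'd pin down by first writing the count and back-solving for the constraints on $x$.
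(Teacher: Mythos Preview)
Your overall plan is on the right track, but you never commit to a labeling scheme, and the schemes you do sketch either fail or overcount. Two concrete points:

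\emph{First}, your attempt to save labels by putting only a single label on most edges of $E_{t_1}$ and $E_{t_3}$ does not work. For any two vertices $a,b$ both adjacent to $t_1$, you need both the temporal path $a\to t_1\to b$ (so $1\in\lambda(\{a,t_1\})$ and $2\in\lambda(\{t_1,b\})$) and the reverse $b\to t_1\to a$ (so $1\in\lambda(\{b,t_1\})$ and $2\in\lambda(\{t_1,a\})$). Hence essentially every edge of $E_{t_1}$ and $E_{t_3}$ must carry $\{1,2\}$. The paper does exactly this: it assigns $\{1,2\}$ to \emph{all} of $E_{t_1}\cup E_{t_3}$, which costs $2(\eta+\mu+1)+2(x+\mu+1)$ labels.

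\emph{Second}, and this is the point on which you get stuck, the $E_{\text{sc}}$ cost is not $k^*$-dependent. You do not label all $E_{\text{sc}}$ edges incident to $\mathcal{C}^*$; instead, for each $u_i\in U$ you pick \emph{one} set $s_{u_i}\in\mathcal{C}^*$ covering it and put $\{1,2\}$ only on $\{u_i,s_{u_i}\}$. That is $2\eta$ labels total, independent of $k^*$. Together with the $2xk^*$ labels on the $W$-to-$\mathcal{C}^*$ edges, the total is $4(\eta+\mu+1)+2x+2xk^*$, and since $x=\eta+\mu+1$ (this is the value ``to be defined later''), the overhead is exactly $4x+2x=6x$. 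Once you have this scheme, the case analysis for temporal connectivity is straightforward: pairs within $U\cup\mathcal{C}\cup\{t_1,t_2\}$ route through $t_1$; pairs within $W\cup\mathcal{C}\cup\{t_2,t_3\}$ route through $t_3$; and each $u_i\leftrightarrow w_l$ pair routes through $s_{u_i}$.
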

\begin{proof}

    Let $\mathcal{C}^* \subseteq \mathcal{C}$ be a set cover of optimal size equal to $k^*$. We construct $\lambda$ as follows (see Figure~\ref{fig:rid_mal_2}).
    \begin{figure}[t]
        \centering
        \pgfdeclarelayer{nodelayer}
        \pgfdeclarelayer{edgelayer}
        \pgfsetlayers{nodelayer,edgelayer}
        \includegraphics[width=0.5 \textwidth]{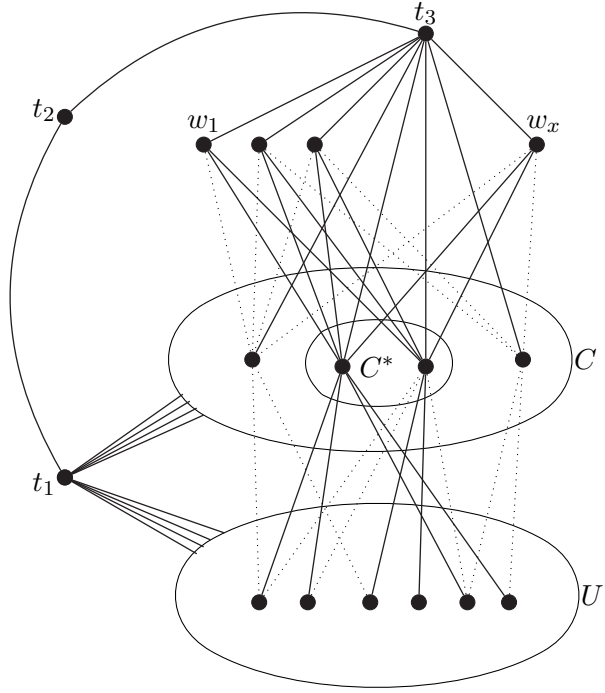}
        \caption{The labeling $\lambda$ constructed from an optimal set cover in the proof of Lemma~\ref{lem:mal2}. Each non-dashed edge in the \MAL feasible solution is assigned the labels $\{1,2\}$.}
        \label{fig:rid_mal_2}
    \end{figure}
    For all $e\in E_{t_1}\cup E_{t_3}$ we set $\lambda(e)=\{1,2\}$, obtaining a total of $2(\eta+\mu+1)+2(\mu+x+1)$ such labels. 
    For every vertex $u_i\in U$, we consider an arbitrary set $s_{u_i} \in \mathcal{C}^*$ that covers $u_i$, i.e. $u_i\in s_{u_i}$, and assign the labels $\lambda(e)=\{1,2\}$ to the corresponding edge $e=\{u_i,s_{u_i}\}$, accounting for a total of $2\eta$ such labels.    
    %we take exactly one edge with endpoints $u_i$ and a arbitrarily chosen vertex $s_{u_i}$ in $\mathcal{C}^*$ such that $\{u_i,s_{u_i}\}\in E(G)$ and we assign labels 1 and 2, accounting for $2n$ labels.     
    Finally, we assign the set of labels $\{1,2\}$ to all edges of the form $\{w_l,s_j\}$ with $w_l\in W$ and $s_j\in \mathcal{C}^*$, accounting for $2xk^*$ labels. Thus, in total we have that $|\lambda|=4(\eta+\mu+1)+2x+2xk^* \leq 6x+2xk^*$.
    
    It remains to prove that $(G,\lambda)$ is temporally connected. Denote by $G'$ the spanning subgraph of $G$ containing all the edges that receive the labels $\{1,2\}$ in $\lambda$. Observe that proving that $G'$ has diameter 2 is equivalent to proving that $(G,\lambda)$ is temporally connected, since every path $v_1,v_2,v_3$ of length 2 in $G'$ correspond to a valid temporal path $(\{v_1,v_2\},1),(\{v_2,v_3\},2)$ of length 2 in $(G,\lambda)$.

    Let $u_i\in U$. In graph $G'$, the vertex $u_i$ can reach every other vertex in at most 2 steps. To see this, observe that $u_i$ can reach vertex $v \in \mathcal{C}\cup U\cup \{t_1,t_2\}$ using the edge $\{u_i,t_1\}$ and then (if necessary) $\{t_1,v\}$. Moreover, $u_i$ can reach any vertex $v\in W\cup \{t_3\}$ by first moving into the arbitrarily chosen vertex $s_{u_i}$ in $\mathcal{C}^*$ (see the construction of $\lambda$) and then reaching $v$, thanks to the fact that all edges of the form  $\{s_{u_i},v\}$ with $v\in W\cup \{t_3\}$ belongs to $G'$.
    %as $s_{u_i}\in \mathcal{C}^*$, we observe that all edges of the form  $\{s_{u_i},v\}$ with $v\in W\cup \{t_3\}$ belongs to $G'$.

    Let $s_j\in \mathcal{C}$. The vertex $s_j$ can reach every vertex $v\in \mathcal{C}\cup U\cup \{t_1,t_2\}$ with a path of length 2, by using the edges in $E_{t_1}$. Moreover, $s_j$ can reach every vertex $v\in W\cup \{t_3\}$ with a path of length 2 by using the edges in $E_{t_3}$.

    The vertex $t_1$ can reach every vertex $v\in \mathcal{C}\cup U \cup \{t_2\}$ using one edge in $E_{t_1}$. Moreover, $t_1$ can reach vertex $t_3$ using the path $\{t_1,t_2\},\{t_2,t_3\}$, and it can reach every vertex $w\in W$ with the path $\{t_1,s_j\},\{s_j,w\}$ where $s_j\in \mathcal{C}^*$.

    The vertex $t_2$ can reach all other vertices by first moving to either $t_1$ or $t_3$, and then using the edges in $E_{t_1}\cup E_{t_3}$.

    It remains to check the vertices $W\cup \{t_3\}$. Since all other vertices can reach these vertices with a path of length at most 2, we only need to check that $w\in W\cup \{t_3\}$ can reach $w' \in W\cup \{t_3\}$. Indeed, these vertices are connected with a path of length 2 via the edges in $E_{t_3}$.
    %The edges in $E_{t_3}$ are sufficient to guarantee this condition.
\end{proof}

Now, we are ready to prove Theorem~\ref{thm:malinapprox}. For convenience of the reader, we restate the statement of Theorem~\ref{thm:malinapprox}.
\malinapprox*
\begin{proof}[Proof of Theorem~\ref{thm:malinapprox}]
    Let $(U,\mathcal{C})$ be an instance of SC with $\eta=|U|$ and $\mu = |C|$, let $\mathcal{C}^*$ be an optimal set cover, and let $k^*=|\mathcal{C}^*|$. Let $G$ be the graph constructed as previously described with $x=\eta+\mu+1$. Let $\lambda^*$ denote an optimal solution to MAL on instance $(G,2)$. Assume there is an $\alpha$-approximation algorithm for MAL and denote the solution given by such an algorithm with $\lambda_{\APX}$.     
    Denote with $G_p$ where $p\in [2]$ the subgraph of $G$ having only the edges that in $\lambda_{\APX}$ contain label $p$, that is $E(G_p)=\{e\in E(G) \mid p\in \lambda_{\APX}(e)\}$. Observe that in $G_p$ the existing edges going from $w_l$, $l\in[x]$, to the vertices in $\mathcal{C}$ must induce a set cover as the only paths of length 2 between $w_l$ and any vertex in $U$ must go through a vertex in $\mathcal{C}$.     
    More formally, the set of vertices $C_{p,l}=\{s_j\in \mathcal{C}\mid p \in \lambda_{\APX}(\{w_l,s_j\})\}$ is a set cover for every $p\in [2],l\in [x]$. Denote with $k$ the minimum among the size of sets $C_{p,l}$, that is $k=\min_{p\in [2],l\in [x]}|C_{p,l}|$.
    Observe that for each $p\in [2]$ and $l\in [x]$, $\lambda$ contains $|C_{p,l}|$ distinct labels to connect vertex $w_l$ to all vertices in $C_{p,l}$ with label $p$.
    %Observe that each element $s_j$ of $C_{p,l}$ corresponds to a distinct edge $\{w_l,s_j\}$ and a distinct label $p\in \lambda(\{w_l,s_j\})$,
    Hence, $|\lambda_{\APX}| \geq \sum_{p\in[2]}\sum_{l\in[x]}|C_{p,l}| \geq 2xk$.

    By Lemma~\ref{lem:mal2}, we can compute a labeling $\lambda$ such that $|\lambda|\leq 6x+2xk^*$ and therefore, $|\lambda^*|\leq 6x+2xk^*$.
    
    As $\lambda_{\APX}$ is an $\alpha$-approximation algorithm for $(G,2)$, then $|\lambda_{\APX}| \leq \alpha |\lambda^*|$, and therefore $2xk\leq |\lambda_{\APX}|\leq \alpha |\lambda^*| \leq \alpha(6x +2xk^*)$. Dividing by $2x$, we get that $k\leq \alpha (k^*+3)$.

    Let $\alpha \leq \epsilon \log(|G|)$, where $|G| = O(N^2)$. We have $k\leq \epsilon'\log(N^2)k^* = 2\epsilon'\log(N)k^*$, for any $\epsilon'=\epsilon + o(1)$, which contradicts Theorem~\ref{lemma:sc} for any $\epsilon'\in (0,1/2)$. Hence, we have a contradiction for any $\alpha \leq \epsilon \log(|G|)$ and $\epsilon\in (0,1/2)$. The theorem follows by observing that $|G|=O(|V(G)|^2)$.   
\end{proof}

In the next theorem, we extend our logarithmic lower bound to the case where $a$ is any fixed value greater than or equal to 3. Specifically, we prove this result for \DCSS and then use Theorem~\ref{thm:apx-mal-dss} to show the same result for \MAL.

\begin{theorem}\label{thm:dcssinapprox}
For every $\epsilon\in(0,1/6)$, there is no polynomial-time $(\epsilon \log |V|)$-approximation algorithm for \DCSS, unless $\text{P} = \text{NP}$. The hardness holds even when the required diameter $d$ is a fixed parameter greater or equal to 3.
\end{theorem}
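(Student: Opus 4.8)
The plan is to build a reduction from Set Cover that generalizes the graph $G$ of Theorem~\ref{thm:malinapprox} to an arbitrary fixed target diameter $d\geq 3$, and then transfer the bound to \MAL via Theorem~\ref{thm:apx-mal-dss} using the fact that the factor $b=d$ is a constant. The key structural idea from the $d=2$ construction is that the set-cover vertices $\mathcal{C}$ form a bottleneck: every short route from a ``sink/source'' vertex $w$ to any universe vertex $u$ must pass through $\mathcal{C}$, so any diameter-$d$ spanning subgraph must keep enough $\{w,s_j\}$ edges to encode a set cover. To force this for $d\geq 3$, I would stretch the $w$--to--$u$ connection so it has length exactly $d$ through a unique path whose only ``choice point'' is the edge into $\mathcal{C}$: e.g. attach to each $s_j\in\mathcal{C}$ a private path of length $d-2$ ending at a vertex $p_j$, make each $u_i$ adjacent to $p_j$ whenever $u_i\in s_j$, and connect each $w_l\in W$ directly to every $s_j$. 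Then a $w_l$--$u_i$ walk of length $\le d$ is forced to be $w_l - s_j - (\text{private path of length }d-2) - u_i$, which exists in a subgraph $H$ iff $\{w_l,s_j\}\in E(H)$ and $u_i\in s_j$; so the surviving edges $\{s_j : \{w_l,s_j\}\in E(H)\}$ must form a set cover. As before, a $T$-gadget (a small constant-size hub structure, now using paths of the appropriate length to realize diameter $d$) cheaply connects all remaining pairs, and we pad $W$ with $x=\Theta(\eta+\mu)$ copies so that the cost of the hub edges and the private paths is $O(x)$, hence lower-order compared to the $xk^*$ term.

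The counting argument then mirrors the $d=2$ case. First, an analogue of Lemma~\ref{lem:mal2}: given an optimal set cover $\mathcal{C}^*$ of size $k^*$, keep all hub edges, all private paths of the $s_j$'s, one covering edge $\{u_i,p_{s_{u_i}}\}$ per universe element, and all edges $\{w_l,s_j\}$ with $s_j\in\mathcal{C}^*$; one checks this spanning subgraph has diameter $\le d$ and has $O(x) + xk^*$ edges, say $|E(H)|\le c x + xk^*$ for a constant $c=c(d)$. Conversely, from any feasible \DCSS solution $H$, for each $w_l$ the set $C_l=\{s_j:\{w_l,s_j\}\in E(H)\}$ is a set cover, so $|E(H)|\ge \sum_{l\in[x]}|C_l|\ge xk$ where $k=\min_l|C_l|$. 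An $\alpha$-approximation therefore yields $xk\le |E(H)|\le \alpha|E(H^*)|\le \alpha(cx+xk^*)$, i.e. $k\le\alpha(k^*+c)$. Taking $\alpha\le\epsilon\log|V|$ and noting $|V|=O(N^2)$ (since $x$, the private paths, and $T$ all contribute only polynomially, and for fixed $d$ the path lengths are constants) gives $k\le 2\epsilon'\log N\cdot k^*$ with $\epsilon'=\epsilon+o(1)$, contradicting Theorem~\ref{lemma:sc} once $\epsilon'<1/2$; the $1/6$ in the statement absorbs the constant $c=c(d)$ and the $|V|=O(N^2)$ blow-up for the relevant range of $d$. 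Finally, the \MAL consequence follows immediately: if \MAL admitted an $(\epsilon\log|V|)$-approximation, then by Theorem~\ref{thm:apx-mal-dss}-($ii$)... wait — rather, Theorem~\ref{thm:apx-mal-dss}-($i$) turns an $\alpha$-approximation for \MAL into an $(\alpha d)$-approximation for \DCSS, and since $d$ is a fixed constant $\alpha d$ is still $O(\log|V|)$, contradicting the \DCSS bound just proved (one must also check $d=D_G$ holds in the construction so $(G,d)$ is a legitimate instance of both problems and Theorem~\ref{thm:apx-mal-dss} applies).

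The main obstacle I anticipate is getting the gadget geometry exactly right: I need the constructed graph to have diameter \emph{exactly} $d$ (not less, or the reduction targets the wrong $d$; not more, or the ``YES'' instance has no feasible solution), I need the $w_l$--$u_i$ shortest route to have length precisely $d$ and to genuinely force a choice of an edge into $\mathcal{C}$ (no ``shortcut'' through the hub $T$ or through another $w_{l'}$ or private path should give an alternative length-$\le d$ route that bypasses the cover constraint), and I need the completeness direction (the Lemma~\ref{lem:mal2} analogue) to simultaneously certify diameter $\le d$ for \emph{all} pairs — including $p_j$--$p_{j'}$, $w$--$w'$, and universe--universe pairs — using only $O(x)+xk^*$ edges. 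Handling the smallest cases $d=3,4$ separately may be cleanest, since there the private paths are very short (length $1$ or $2$) and the interaction between the $T$-hub and the $\mathcal{C}$-bottleneck is tightest; the larger-$d$ cases should then follow by uniformly lengthening the private paths and the hub spokes.
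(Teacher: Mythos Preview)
Your high-level strategy is exactly the paper's: reduce from Set Cover by placing $x=\Theta(\eta+\mu)$ ``witness'' vertices each adjacent to all of $\mathcal{C}$, stretch the instance so that every length-$\le d$ route from a witness to a universe element must pass through a covering set, and then pad with a cheap hub so that all remaining pairs are within distance $d$. The counting argument you give ($xk\le|E(H_{\APX})|\le\alpha|E(H^*)|\le\alpha(O(x)+xk^*)$, hence $k\le\alpha(k^*+O(1))$) is also the paper's, and your closing remark about transferring the bound to \MAL via Theorem~\ref{thm:apx-mal-dss}-($i$) with constant $d$ is exactly Corollary~\ref{cor:malinapprox}.

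Where you diverge is in the gadget geometry, and this is where the proposal is still a sketch rather than a proof. You stretch the \emph{set} side: a private path of length $d-2$ hangs off each $s_j$, and $u_i$ sits at the far end adjacent to all $p_j$ with $u_i\in s_j$. The paper instead stretches the \emph{universe} side: it makes $d-1$ copies $u_{i,0},\dots,u_{i,d-2}$ of each element, with $u_{i,0}$ adjacent to the relevant $s_j$'s, and the witness vertices $t_l$ must reach the far copy $u_{i,d-2}$. The ``hub'' you leave unspecified is precisely the part the paper works hardest on: it introduces an auxiliary path $z_0,\dots,z_{d-2}$ with $z_0$ adjacent to all of $\mathcal{C}$ and $z_{d-2}$ adjacent to all $u_{i,d-2}$, together with a single vertex $w$ adjacent to the middle layer $U_{\lfloor(d-2)/2\rfloor}\cup\{z_{\lfloor(d-2)/2\rfloor}\}$. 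This cycle-through-$w$ structure is what makes the completeness lemma go through for \emph{every} pair, including the awkward ones (two intermediate path vertices at the same depth on different paths), while keeping $t_l$ at distance exactly $d$ from each $u_{i,d-2}$.

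The obstacle you flag is real for your variant: with private paths on the set side, two intermediate vertices $q_k,q'_{k'}$ on distinct private paths can be at distance up to roughly $2(d-2)$ through either endpoint, and a single ``middle connector'' adjacent to position $\lfloor(d-2)/2\rfloor$ on every path overshoots by one when $d$ is odd. You would end up needing something isomorphic to the paper's $Z$-path-plus-$w$ gadget anyway. So the proposal is on the right track and not wrong, but the missing piece is exactly the piece that carries the technical weight; looking at how the paper wires $z_0,z_{d-2},w$ should let you either complete your dual construction or simply adopt the universe-side stretching directly.
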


\begin{figure}[t]
    \centering
    \pgfdeclarelayer{nodelayer}
    \pgfdeclarelayer{edgelayer}
    \pgfsetlayers{nodelayer,edgelayer}

    \includegraphics[width=0.5 \textwidth]{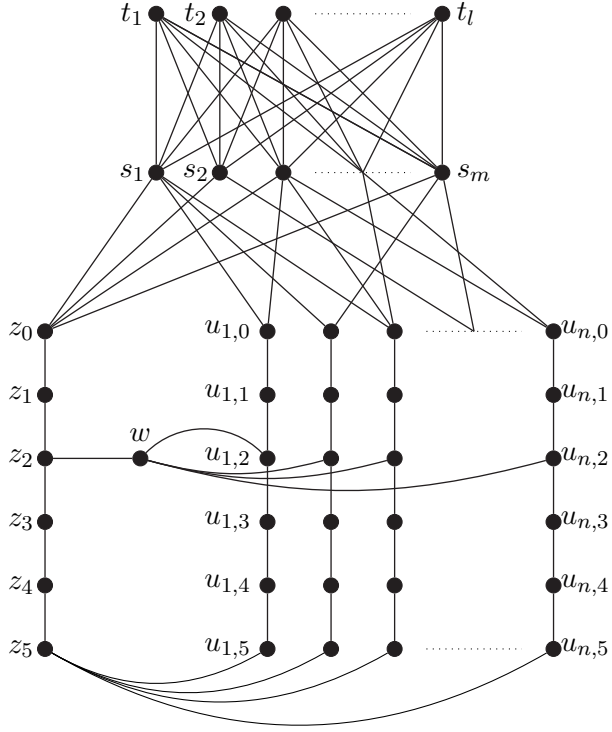}
    \caption{Example of the construction used in Theorem~\ref{thm:dcssinapprox} for $ d = 7 $.}
    \label{fig4}
\end{figure}

Let $(U,\mathcal{C})$ be an instance of SC. We construct an instance $(G=(V,E),d)$ of \DCSS for any fixed parameter $d\geq 3$. Fix $x\geq 3d\eta+\mu$ be fixed. We construct the graph $G=(V,E)$ as follows. See Figure~\ref{fig4} for a reference of the construction.

The set of vertices of $G$ consists of a vertex $s_i$ for each set in $\mathcal{C}$, a set of new vertices $T=\{t_1,t_2,\dots,t_x\}$, and $d-1$ copies of each vertex in $\Universe$. Specifically, for each $j\in[d-2]_0$, we define $\Universe_j=\{u_{i,j} \mid  u_i \in \Universe \}$. The vertices $u_{i,0}$ correspond to the elements of $\Universe$ and are connected to the sets in $\Sets$ according to the natural set-cover correspondence. Additionally, we introduce a set of new vertices $Z=\{z_0,z_1,\dots,z_{d-2}\}$ and one more new vertex, denoted $w$. Thus, the set of vertices of $G$ is $V(G) = T \cup \Sets \cup (\bigcup_{j\in[d-2]_0} \Universe_j) \cup Z \cup \{w\}$.

%We now define the set of edges of $G$. Let $E_{sc}$ be the edges corresponding to the instance of set cover, that is, $E_{sc}= \{(s_j,u_{i,0})\,|\, s_j\in \Sets \land u_i\in \Universe \land u_i\in s_j \}$. Next, we connect the vertices in $T$ to each of the vertices in $\Sets$ by defining $E_t=\{(t_l,s_j)\,|\,s_j\in \Sets \land l\in [x]\}$. We now add edges to form a path between the copies $u_{i,0}$ and $u_{i,d-2}$ of the same element $u_i\in \Universe$. Specifically, $E_p=\{(u_{i,q},u_{i,q+1})\,|\, u_i\in \Universe \land q\in [d-3]_0\}$. The vertex $z_{d-2}$ will be connected to all the vertices in $\Universe_{d-2}$, i.e., $E_{zu}=\{(z_{d-2},u)\, |\, u \in \Universe_{d-2}\}$. While, vertex $z_{0}$ will be connected to all the vertices in $\Sets$, $E_{zs}=\{(z_{0},s)\, |\, s \in \Sets\}$. Moreover, $z_0$ and $z_{d-2}$ will be connected through a path that crosses the other vertices in $Z$, i.e., $E_{zz}=\{(z_q,z_{q+1})\,|\, q\in[d-3]_0 \}$. Finally, we add the edges from vertex $w$ to all the vertices in $\Universe_{\lfloor \frac{d-2}{2} \rfloor} \cup \{z_{\lfloor \frac{d-2}{2} \rfloor}\}$, that is, $E_w = \{(w,u)\, |\, u\in \Universe_{\lfloor \frac{d-2}{2} \rfloor}\} \cup \{(w,z_{\lfloor \frac{d-2}{2} \rfloor})\}$. In conclusion, the edges of $G$ are $E=E_{sc}\cup E_t \cup E_p \cup E_{zu}\cup E_{zs} \cup E_{zz} \cup E_{w}$. It is easy to verify that the diameter of $G$ is equal to $d$.
We now define the set of edges of $G$. Let $E_{\text{sc}}$ be the edges corresponding to the instance of set cover, that is,
\[
E_{\text{sc}} = \{(s_j, u_{i,0}) \mid s_j \in \mathcal{S} \land u_i \in \mathcal{U} \land u_i \in s_j \}.
\]
Next, we connect the vertices in $T$ to each of the vertices in $\mathcal{S}$ by defining
\[
E_t = \{(t_l, s_j) \mid s_j \in \mathcal{S} \land l \in [x]\}.
\]
We now add edges to form a path between the copies $u_{i,0}$ and $u_{i,d-2}$ of the same element $u_i \in \mathcal{U}$. Specifically,
\[
E_p = \{(u_{i,q}, u_{i,q+1}) \mid u_i \in \mathcal{U} \land q \in [d-3]_0\}.
\]
The vertex $z_{d-2}$ will be connected to all the vertices in $\mathcal{U}_{d-2}$, i.e.,
\[
E_{\text{zu}} = \{(z_{d-2}, u) \mid u \in \mathcal{U}_{d-2}\}.
\]
Similarly, vertex $z_0$ will be connected to all the vertices in $\mathcal{S}$, that is,
\[
E_{\text{zs}} = \{(z_0, s) \mid s \in \mathcal{S}\}.
\]
Moreover, $z_0$ and $z_{d-2}$ will be connected through a path that crosses the other vertices in $Z$, i.e.,
\[
E_{\text{zz}} = \{(z_q, z_{q+1}) \mid q \in [d-3]_0\}.
\]
Finally, we add the edges from vertex $w$ to all the vertices in $\mathcal{U}_{\lfloor \frac{d-2}{2} \rfloor} \cup \{z_{\lfloor \frac{d-2}{2} \rfloor}\}$, that is,
\[
E_w = \{(w, u) \mid u \in \mathcal{U}_{\lfloor \frac{d-2}{2} \rfloor}\} \cup \{(w, z_{\lfloor \frac{d-2}{2} \rfloor})\}.
\]
In conclusion, the edges of $G$ are
\[
E = E_{\text{sc}} \cup E_t \cup E_p \cup E_{\text{zu}} \cup E_{\text{zs}} \cup E_{\text{zz}} \cup E_w.
\]

We now prove that for any $d\geq 3$, if ($\Universe,\Sets$) admits a SET COVER of size $k^*$, then we can compute a spanner of $G$ with diameter $d$ and at most $\eta(d+1)+xk^*+\mu+d-2$ edges.

\begin{lemma}\label{lem:dss2}
    Let $(U,\mathcal{C})$ be an instance of SC and let $G$ be the graph constructed as described above. Given a set cover for $(U,\mathcal{C})$ of optimum size $k^*$, we can construct a feasible solution $H$ to the instance $(G,d)$ of \DCSS such that $|E(H)|\leq \eta(d+1)+xk^*+\mu+d-2$.
\end{lemma}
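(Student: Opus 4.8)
The plan is to exhibit an explicit feasible subgraph $H$ of $G$, count its edges, and verify that $D_H \le d$. Given an optimal set cover $\mathcal{C}^*\subseteq\mathcal{C}$ with $|\mathcal{C}^*|=k^*$, I would build $H$ by keeping the following edges: (1) for each $u_i\in\Universe$, a single edge $\{s_{u_i},u_{i,0}\}$ from $E_{\text{sc}}$, where $s_{u_i}\in\mathcal{C}^*$ is an arbitrary set covering $u_i$; (2) for each $u_i\in\Universe$, the whole path in $E_p$ joining $u_{i,0}$ to $u_{i,d-2}$ (that is $d-2$ edges each); (3) all of $E_{\text{zu}}$ and all of $E_{\text{zz}}$ and the single edge $\{z_0,s\}$ for one fixed $s\in\mathcal{C}^*$ from $E_{\text{zs}}$; (4) all edges of $E_w$; and (5) for each $t_l\in T$, all edges $\{t_l,s_j\}$ with $s_j\in\mathcal{C}^*$, i.e. $xk^*$ edges from $E_t$. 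The edge count is then $\eta$ (from (1)) $+\ \eta(d-2)$ (from (2)) $+\ \eta$ (from $E_{\text{zu}}$) $+\ (d-2)$ (from $E_{\text{zz}}$) $+\ 1$ (the chosen $E_{\text{zs}}$ edge) $+\ (\eta+1)$ (from $E_w$) $+\ xk^*$ (from (5)), which after collecting terms is at most $\eta(d+1)+xk^*+\mu+d-2$ — I would double-check that the slack absorbed in $\mu$ covers the few extra constant edges from $Z$, $w$, and $E_{\text{zs}}$; if not, I would adjust which $E_{\text{zs}}$-edges or $E_w$-edges are kept, but the bound has room because it carries a full $+\mu$.

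Next I would argue $H$ is spanning and has diameter at most $d$. Spanning is immediate: every $s_j$ is hit by some $t_l$-edge or the $z_0$-edge once $s_j\in\mathcal{C}^*$, and every $s_j\notin\mathcal{C}^*$ is still reached via the $u$-paths (each $u_{i,0}$ attaches to a set in $\mathcal{C}^*$, and every vertex $s_j\in\mathcal{C}$ is incident to some $u_{i,0}$, so keep at least one such edge per set — this is why we need $\mu$ extra edges, one per set $s_j\notin\mathcal{C}^*$, which is exactly the $+\mu$ slack). Let me re-examine: we keep one $E_{\text{sc}}$ edge per $u_i$ (that is $\eta$ edges) and additionally one $E_{\text{sc}}$ edge per $s_j\notin\mathcal{C}^*$ to attach it — the latter contributes at most $\mu$ edges, matching the accounting. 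For the diameter, the key observation is that $G$ was designed so that the "hub" vertices are $T\cup\mathcal{C}^*\cup Z$: every vertex of $H$ is within distance $\lceil (d-2)/2\rceil+1$ of the spine $Z\cup\{w\}$, and any two spine vertices are within distance $d-2$ along $E_{\text{zz}}$, with $w$ a shortcut through the midpoint $z_{\lfloor (d-2)/2\rfloor}$. I would then do a short case analysis on the vertex types ($t_l$, $s_j$, $u_{i,q}$, $z_q$, $w$) and bound pairwise distances, using that $w$ sits at graph-distance $\lfloor (d-2)/2\rfloor+1$ from the "top" and "bottom" of each $u$-path and of the $Z$-path, so e.g. $u_{i,0}$ to $u_{i',0}$ routes $u_{i,0}\to s_{u_i}\to t_l\to s_{u_{i'}}\to u_{i',0}$ in length $4$ when $d$ is large, or via $w$ and the copies when that is shorter, and in all cases the total stays $\le d$.

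The main obstacle I anticipate is the diameter verification for small $d$ (especially $d=3,4$), where the path gadgets $E_p$ and $E_{\text{zz}}$ collapse to single edges and the floor function $\lfloor(d-2)/2\rfloor$ behaves differently for even versus odd $d$; the routing through $w$ versus through $T$ versus through $z_0$ must be checked to never exceed $d$ in each parity class. A second, more delicate point is confirming the upper bound on distances between a copy $u_{i,q}$ with $q$ far from $0$ and an arbitrary set vertex $s_j$: one goes up the $u$-path to $u_{i,0}$ (cost $q$), then to $s_{u_i}\in\mathcal{C}^*$ (cost $1$), then across a $t_l$ or through $z_0$ to $s_j$ (cost $2$), for a total of $q+3\le (d-2)+3$, which is too big, so instead I would route through $w$: from $u_{i,q}$ to $w$ costs $|q-\lfloor(d-2)/2\rfloor|+1\le \lceil(d-2)/2\rceil+1$, then from $w$ symmetrically down to some $u_{i',0}$ and out — I would carefully verify that picking whichever of the two routes is shorter always yields $\le d$, which is precisely the reason $w$ is attached at the midpoint. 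Once the distance bounds are established for all $O(1)$ vertex-type pairs, the lemma follows.
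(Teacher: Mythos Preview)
Your construction differs from the paper's in one crucial place: you keep only a single edge of $E_{\text{zs}}$ and spend the $+\mu$ budget on extra $E_{\text{sc}}$-edges, one per set $s_j\notin\mathcal{C}^*$, to attach those sets. The paper instead keeps \emph{all} of $E_{\text{zs}}$ (this is exactly where the $+\mu$ goes), so that $z_0$ is adjacent to every vertex of $\mathcal{C}$, cover or not. That single design choice is what makes the diameter argument go through.

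Your alternative breaks already at $d=3$. Take two sets $s_j,s_{j'}\notin\mathcal{C}^*$ with $s_j\cap s_{j'}=\emptyset$, attached respectively to $u_{i,0}$ and $u_{i',0}$. In your $H$ the only neighbour of $s_j$ is $u_{i,0}$, whose neighbours in $H$ are $s_{u_i}\in\mathcal{C}^*$, $u_{i,1}$, and $w$ (for $d=3$ we have $\lfloor(d-2)/2\rfloor=0$). None of these is adjacent to $s_{j'}$: set vertices are never adjacent to each other, $u_{i,1}$ is adjacent only to $u_{i,0}$ and $z_1$, and $w$'s neighbours are the $u_{\cdot,0}$'s and $z_0$. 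Hence the shortest $s_j$--$s_{j'}$ path is $s_j,u_{i,0},w,u_{i',0},s_{j'}$ of length $4>3$. Your own ``spine'' heuristic foreshadows this: for odd $d$ you get every vertex within $\lceil(d-2)/2\rceil+1=(d+1)/2$ of the spine, which only yields diameter $\le d+1$.

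With all of $E_{\text{zs}}$ present, the paper simply observes that any two set vertices are at distance $2$ through $z_0$, and more generally every vertex in $\mathcal{C}\cup\bigcup_j U_j\cup Z\cup\{w\}$ lies on a short cycle through $w$ and either $z_0$ or $z_{d-2}$, of length at most $d+4$, so any two such vertices are at distance $\le\lfloor(d+4)/2\rfloor\le d$ for $d\ge 3$. The $T$-vertices are then handled separately via the cover. So the fix is easy: drop the extra $E_{\text{sc}}$-edges and keep $E_{\text{zs}}$ in full; your edge count and routing then line up with the paper's.
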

\begin{proof}
Let $\Sets^* = \{s^*_1,s^*_2,\dots,s^*_{k^*}\} \subseteq \Sets$ be an optimal set cover of size $k^*$. We construct a solution $H=(V,E_H)$ to the \DCSS problem in the following way. The set of vertices of $H$ is the same as that of $G$, while the set of edges will contain the sets $E_{w}$, $E_{p}$, $E_{zu}$, $E_{zp}$, and $E_{zs}$. $H$ will also contain all the edges of $E_t$ of the form $(t_l,s^*_j)$ for $l\in [x]$ with $s^*_j\in \Sets^*$. Finally, for every vertex $u_i\in \Universe_0$, we arbitrarily select from the set $E_{sc}$ one edge $(s^*_j,u_i)$  such that $s^*_j\in S^*$ and $u_i\in s^*_j$. Summing up the size of the chosen edges, we get that $|E_H|\leq \eta(d+1)+xk^*+\mu+d-2$.

It remains to prove that $H$ is connected and has diameter at most $d$. 

First, observe that the vertices in $T$ can reach, in at most $d$ steps, any vertex in $\bigcup\limits_{j\in[d-2]_0} \Universe_j \cup \{w\}$, since they are incident to a set cover. Moreover, they can also reach all vertices in $Z$ as well as the vertices $s\not \in S^*$ via a path passing through $z_{d-2}$. Therefore, we can restrict our attention to the vertices in $ \Sets \cup ( \bigcup\limits_{j\in[d-2]_0} \Universe_j ) \cup Z \cup \{w\}$.

Observe that $w$ can reach every other vertex through a path of length at most $\lfloor \frac{d}{2} \rfloor +1$. Moreover, a vertex in $\bigcup\limits_{j=0}^{\lfloor \frac{d-2}{2} \rfloor} \Universe_j \cup \{z_j\}$ can reach a vertex in $\Sets\cup (\bigcup\limits_{j=\lfloor \frac{d-2}{2} \rfloor +1}^{d-2} \Universe_j ) \cup \{z_j\}$ via a path traversing $w$, with length at most $d$. Therefore we may restrict our attention to the case where both vertices belong to one of the two aforementioned sets of vertices. 

Let $x$ and $y$ be any two vertices in $\bigcup\limits_{j=0}^{\lfloor \frac{d-2}{2} \rfloor} \Universe_j \cup \{z_j\}$. Observe that both $x$ and $y$ lie on a cycle of length at most $2*\lfloor \frac{d-2}{2} \rfloor +4$ that traverses vertices $z_{d-2}$ and $w$. Since we can upper-bound the length of the cycle by $d+3$, it follows that the distance between $x$ and $y$ is at most $\lfloor \frac{d+3}{2} \rfloor$, which is no greater than $d$ (using the fact that $d\geq 3$).

The only remaining case is when both $x$ and $y$ belong to $\Sets\cup (\bigcup\limits_{j=\lfloor \frac{d-2}{2} \rfloor +1}^{d-2} \Universe_j )\cup \{z_j\}$. In this case, both $x$ and $y$ lie on a common cycle traversing $w$ and $z_0$ with length at most $2 \lfloor \frac{d}{2}\rfloor +4 \leq d+4$. Therefore, the distance between $x$ and $y$ is at most $\lfloor \frac{d+4}{2} \rfloor$, which is at most $d$ as long as $d\geq 3$. This completes the analysis of all cases and concludes the proof.
\end{proof}

We can now prove Theorem~\ref{thm:dcssinapprox}.
\begin{proof}[Proof of~Theorem~\ref{thm:dcssinapprox}]
Let $d\geq 3$ be a fixed integer. Let ($\Universe,\Sets$) be an instance of SC, for which we can assume, without loss of generality, that $\eta+\mu\geq d$. Let $G$ be the graph constructed as described above with $x=\eta d+\mu$. Let $k^*$ be the optimum value for SC on ($\Universe,\Sets$). Let $H^*$ denote an optimal solution to \DCSS for the corresponding instance $(G,d)$. Assume there exists an $\alpha$-approximation algorithm for \DCSS and let $H_{\APX}$ denote the solution produced by this algorithm for $(G,d)$.

Observe that in any subgraph of $G$ with diameter at most $d$, the edges going from each $t_l\in T$ with $l\in [x]$ to the vertices in $\Sets$ must induce a set cover, as the only paths of length $d$ between $t_l$ and any vertex in $\Universe_{d-2}$ must traverse a vertex in $\Sets$. 

Let us denote by $C_l:=\{s \mid s\in \Sets \land (t_l,s)\in \DCSS_{\APX}\}$  and set $k=\min\limits_{l=1,\dots,x} |C_l|$. Observe that each element $s$ of $C_{l}$ corresponds to a distinct edge $\{t_l,s\}$. This implies that $|H_{\APX}|\geq\sum_{l\in[x]} |C_{l}|\geq kx$.

Moreover, $|H^*|\leq \eta (d+1)+\mu+k^*x+d-2$, since, by Lemma~\ref{lem:dss2}, we can construct a feasible solution to $(G,d)$ with the number of edges given on the right-hand side of the inequality.

Since we have that $|H_{\APX}|\leq \alpha|H^*|$, it follows that $$kx\leq \alpha(xk^*+\eta d+\mu +\eta+d-2),$$ which simplifies to
$$k\leq \alpha(k^*+\frac{\eta d+\mu}{x}+\frac{\eta +d-2}{x}).$$
As $x=\eta d+\mu$, we have that $k\leq \alpha(k^*+2)$.

Let $\alpha \leq \epsilon \log(|G|)$, where $|G| = O(dN^2)$. We have that $k\leq \epsilon'\log(dN^2)k^* \leq 3\epsilon'\log(N)k^*$, for any $\epsilon'=\epsilon + o(1)$, assuming that $d\leq N$. This is a contradiction to Theorem~\ref{lemma:sc} for any $\epsilon'\in (0,1/3)$. Therefore, we obtain a contradiction for any $\alpha \leq \epsilon \log(|G|)$, where $\epsilon\in (0,1/3)$. The statement follows by observing that $|G|=O(|V(G)|^2)$.
\end{proof}

Combining Theorem~\ref{thm:dcssinapprox} with Theorem~\ref{thm:apx-mal-dss}, we obtain the following corollary.

\begin{corollary}\label{cor:malinapprox}
  For every $\epsilon\in(0,1/(6a))$, there is no polynomial-time $(\epsilon \log |V|)$-approximation algorithm for \MAL, unless $\text{P}\!=\!\text{NP}$. The hardness holds even when the maximum allowed age $a$ is a fixed parameter greater or equal to 3.
\end{corollary}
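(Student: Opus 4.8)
The plan is to derive Corollary~\ref{cor:malinapprox} directly from the hardness of \DCSS proved in Theorem~\ref{thm:dcssinapprox}, pushed through the approximation-preserving reduction of Theorem~\ref{thm:apx-mal-dss}. Fix a constant $a\geq 3$ and assume, for contradiction, that for some $\epsilon\in(0,1/(6a))$ there is a polynomial-time $(\epsilon\log|V|)$-approximation algorithm for \MAL on instances with maximum allowed age $a$.

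First I would apply Theorem~\ref{thm:apx-mal-dss}-($i$) with $b=a$. Since $a$ is a fixed constant, the hypothesis $b=O(|V(G)|)$ is trivially satisfied, so the assumed \MAL algorithm yields a polynomial-time $(a\cdot\epsilon\log|V|)$-approximation algorithm for \DCSS on instances $(G,a)$. Setting $\epsilon':=a\epsilon$, we have $\epsilon'\in(0,1/6)$. I would also note that the reduction behind Theorem~\ref{thm:apx-mal-dss}-($i$) (which goes through Lemma~\ref{lem:mal-dss}-($ii$)) leaves the vertex set of the graph unchanged, so $|V|$ is the same quantity in the \DCSS instance as in the \MAL instance, and the guarantee is genuinely $(\epsilon'\log|V|)$ relative to the size of the \DCSS instance.

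This gives a polynomial-time $(\epsilon'\log|V|)$-approximation algorithm for \DCSS with required diameter $d=a\geq 3$ fixed and $\epsilon'\in(0,1/6)$, which contradicts Theorem~\ref{thm:dcssinapprox} unless $\text{P}=\text{NP}$. Hence no $(\epsilon\log|V|)$-approximation algorithm for \MAL with $\epsilon\in(0,1/(6a))$ can exist unless $\text{P}=\text{NP}$, establishing the corollary.

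There is no substantive obstacle here; the statement is essentially a bookkeeping consequence of the two cited results. The only points requiring a line of care are (a) that $b=a=O(|V|)$, which holds because $a$ is constant, and (b) that the vertex count is preserved by the reduction so the logarithmic factor transfers without any loss — both are immediate from the lemmas and theorems already established.
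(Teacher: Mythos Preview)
Your proposal is correct and follows essentially the same route as the paper: assume an $(\epsilon\log|V|)$-approximation for \MAL, invoke Theorem~\ref{thm:apx-mal-dss}-($i$) to obtain an $(a\epsilon\log|V|)$-approximation for \DCSS, and contradict Theorem~\ref{thm:dcssinapprox} since $a\epsilon\in(0,1/6)$. Your added remarks on $b=O(|V|)$ and preservation of the vertex set are sound bookkeeping that the paper leaves implicit.
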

\begin{proof}
    Let $(G,a)$ be an instance of \DCSS where $a=D_G$. Let $\epsilon \in (0,1/(6a))$ and assume that we have a $(\epsilon \log |V|)$-approximation algorithm for \MAL. By Theorem~\ref{thm:apx-mal-dss}, we know that there exists an $(a\epsilon \log |V|)$-approximation algorithm for \DCSS. Since $\epsilon\in (0,1/(6a))$, this contradicts Theorem~\ref{thm:dcssinapprox}.
\end{proof}

The next theorem presents our second lower bound on the approximation of \DCSS, based on a stronger complexity hypothesis. Again, we prove the hardness for \DCSS and then combine this result with Theorem~\ref{thm:apx-mal-dss} to obtain a hardness bound for \MAL.

\begin{restatable}{theorem}{stronginapprox}\label{thm:stronginapprox}
 For any constant $\epsilon\in (0,1)$, there is no polynomial-time $2^{\log ^{1-\epsilon} n}$-approximation algorithm for  \DCSS, unless $\text{NP}\subseteq \text{DTIME}(2^{\text{polylog}(n)})$.  The hardness holds even when the required diameter $d$ is a fixed parameter greater or equal to 3.
\end{restatable}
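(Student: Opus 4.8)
The plan is to mirror the structure of the proof of Theorem~\ref{thm:dcssinapprox}, but to replace \emph{Set Cover} as the source of hardness with a problem that is hard to approximate within a factor of $2^{\log^{1-\epsilon} n}$ under the assumption $\text{NP}\not\subseteq\text{DTIME}(2^{\text{polylog}(n)})$ --- the natural candidate being the \emph{Min-Rep} (a.k.a. \emph{Label-Cover-min}) problem, or equivalently the Group Steiner Tree / Directed Steiner problems, for which such quasi-polynomial hardness is classical. Concretely, I would take a \emph{Min-Rep} instance and build a graph $G$ that, exactly as in the $d$-\DCSS reduction above, attaches a large block of ``pendant'' vertices $T=\{t_1,\dots,t_x\}$ forced to route through the ``representative-selection'' layer via length-$d$ paths, so that in any diameter-$d$ spanning subgraph the edges leaving each $t_l$ must encode a feasible Min-Rep solution; choosing $x$ polynomially larger than all the other structural edges of $G$ makes the optimum of the \DCSS instance dominated by $x$ times the Min-Rep optimum. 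The copies $\Universe_j$, the $z$-path, and the apex vertex $w$ are reused verbatim to guarantee that every \emph{other} pair of vertices is within distance $d$ regardless of which representatives are picked, so that correctness of the reduction is unchanged from Lemma~\ref{lem:dss2}.

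The key steps, in order, are: (1) state the Min-Rep hardness we invoke, namely that for every $\epsilon\in(0,1)$ there is no polynomial-time $2^{\log^{1-\epsilon} N}$-approximation for Min-Rep unless $\text{NP}\subseteq\text{DTIME}(2^{\text{polylog}(N)})$, with the reduction running in quasi-polynomial time $N^{O(\text{polylog} N)}$; (2) describe the graph $G$ obtained from a Min-Rep instance by the same gadget construction as in Theorem~\ref{thm:dcssinapprox}, with $d\geq 3$ fixed and $x$ set to a polynomial in $N$ that dominates $\eta(d+1)+\mu+d-2$; (3) prove the analogue of Lemma~\ref{lem:dss2}: a Min-Rep solution of value $k^*$ yields a diameter-$d$ spanning subgraph with at most $x k^* + (\text{lower-order})$ edges, using exactly the cycle-through-$w$-and-$z$ distance arguments already written; (4) prove the converse packing bound: any diameter-$d$ spanning subgraph restricts, on the edges incident to each $t_l$, to a feasible Min-Rep solution, hence has at least $kx$ edges where $k$ is the minimum over $l$; (5) combine: an $\alpha$-approximation for \DCSS gives $kx\leq \alpha(xk^*+o(x))$, so $k\leq \alpha(k^*+o(1))$, and with $|G|=\text{poly}(N)$ we get $\alpha\geq 2^{\log^{1-\epsilon'}|V(G)|}$ implies a $2^{\log^{1-\epsilon}N}$-approximation for Min-Rep, a contradiction; (6) finish by noting that the whole pipeline (Min-Rep reduction composed with our polynomial-time graph construction composed with the assumed algorithm) runs in quasi-polynomial time, which is exactly what the hypothesis $\text{NP}\not\subseteq\text{DTIME}(2^{\text{polylog}(n)})$ rules out. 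The extension to \MAL then follows immediately, as in Corollary~\ref{cor:malinapprox}, by composing with Theorem~\ref{thm:apx-mal-dss}, since multiplying the inapproximability factor by the constant $a$ leaves a bound of the form $2^{\log^{1-\epsilon} n}$ intact.

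The main obstacle I expect is the bookkeeping in step (4): I must argue that the edges incident to a single $t_l$ in a diameter-$d$ subgraph genuinely encode a \emph{feasible} Min-Rep solution (covering all super-edge constraints), not merely a set cover. This requires the copy-paths $\Universe_j$ and the $z$-gadget to be arranged so that the \emph{only} way for $t_l$ to reach the ``far end'' vertices within $d$ steps is through a selection of representatives that is consistent across \emph{all} groups simultaneously --- i.e., the distance-$d$ constraint from $t_l$ to every far vertex must translate into one Min-Rep constraint, and collectively into all of them. Getting this encoding right may force a slightly more elaborate gadget than the Set-Cover one (e.g. separate length-$d$ ``check paths'' for each super-edge, or a two-sided construction with $t$-blocks on both the $A$-side and $B$-side of the Min-Rep bipartition), and verifying that these additions do not blow up the number of non-$t$ edges beyond $o(x)$ --- so that the optimum stays dominated by $xk^*$ --- is the delicate part. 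A secondary, purely arithmetic point is checking that the lower-order terms introduced remain genuinely $o(x)$ after choosing $x=\text{poly}(N)$ large enough, and that $|V(G)|$ stays polynomial in $N$ so that $\log|V(G)|=\Theta(\log N)$ and the exponent $1-\epsilon$ is preserved up to a rescaling of $\epsilon$.
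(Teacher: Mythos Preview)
Your high-level strategy --- reduce from Min-Rep and engineer the instance so that each of $x$ ``copies'' is forced to select a REP-cover --- is exactly what the paper does. However, the concrete gadget differs substantially from what you propose, and the obstacle you flag in step~(4) is precisely where your plan of reusing the Set-Cover gadget of Theorem~\ref{thm:dcssinapprox} verbatim breaks down: a single pendant $t_l$ routing through a ``sets'' layer encodes a one-sided covering constraint, not the two-sided consistency constraint of a super-edge.

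The paper sidesteps this by a different construction. Rather than pendants $t_l$, it creates $x$ copies $(u,i)$ of every \emph{condensed} vertex $u\in U\cup W$, each adjacent to the vertices of its group $\Gamma(u)\subseteq A\cup B$. It also introduces a vertex $s_u$ per condensed vertex $u$ (adjacent to all copies $(u,i)$) and a single hub $t$ adjacent to $A\cup B\cup S$. The key trick you are missing is the \emph{complement edge set} $E_{\text{ss}}$: one puts an edge $\{s_u,s_{u'}\}$ if and only if $\{u,u'\}$ is \emph{not} a super-edge. This guarantees that every pair $(u,i),(u',j)$ with $\{u,u'\}\notin E(G')$ is at distance~$3$ for free via $s_u,s_{u'}$, while pairs with $\{u,u'\}\in E(G')$ have \emph{no} length-$3$ path except through an edge of $\tilde E$, forcing each copy index $i$ to select a REP-cover. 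Soundness (Lemma~\ref{lemma:sound}) then reads off a REP-cover $C_i$ from the $E_{\text{con}}$-edges incident to the $i$-th copies, and completeness (Lemma~\ref{lemma:complete}) uses a YES-instance cover of size $2r$ to build a diameter-$3$ subgraph of size $O(rx)$. The final argument uses the gap (promise) form of Min-Rep from~\cite{kortsarz} --- distinguishing $|C^*|=2r$ from $|C^*|\ge r\cdot 2^{\log^{1-\epsilon'}\tilde n}$ --- rather than the optimization form you invoke, which makes the arithmetic in your step~(5) cleaner.

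In short: your plan is sound at the level of ``Min-Rep $+$ replication'', but the Set-Cover gadget cannot be reused; you would need to discover the complement-edge trick (or an equivalent device) to make step~(4) go through.
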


In order to prove Theorem~\ref{thm:stronginapprox}, we use a reduction from a problem called MIN-REP, introduced in~\cite{kortsarz}.
In MIN-REP, we are given a bipartite graph $\Tilde{G}=(A,B,\Tilde{E})$, where $A$ and $B$ are disjoint sets of vertices, and $\Tilde{E}$ represents the edges between vertices in $A$ and $B$. The set $A$ is partitioned into $r$ groups $A_1,A_2,\dots,A_r$ and the set $B$ is partitioned into $r$ groups $B_1,B_2,\dots,B_r$. Each set $A_i$ and $B_j$ has the same size, denoted by $\sigma$. 

The graph $\Tilde{G}$ induces a bipartite \emph{condensed graph} $G' = (U, W, E')$ defined as follows: 
\[
U = \{a_i \mid i \in [r]\}, \quad W = \{b_j \mid j \in [r]\}, 
\]
and $E'$ contains an edge between vertices $a_i \in U$ and $b_j \in W$ if and only if there is an edge in $\Tilde{E}$ between some vertex in $A_i$ and some vertex in $B_j$. 

The vertices in $V(G')$ are called \emph{condensed vertices}, and the edges in $E(G')$ are called \emph{condensed edges}.

A \emph{REP-cover} is a set $C\subseteq A \cup B$ of vertices with the property that it ``covers'' every condensed edge. Specifically, for all condensed edges $\{a_i,b_j\}$ there exist vertices $a\in C\cap A_i$ and $b\in C\cap B_j$ such that $\{a,b\}\in \Tilde{E}$. The objective of MIN-REP is to construct a REP-cover of minimum size.
Let us denote with $C^*$ an optimal solution to MIN-REP.

We say that an instance of MIN-REP is a YES-instance if $|C^*|=2r$ (one vertex for each group) and is a NO-instance if $|C^*| \geq 2^{\log ^{1-\epsilon'} n}r$. The following theorem is due to Kortsarz~\cite{kortsarz}. We use the statement given in \cite{CDKL} Theorem 6.1.

\begin{theorem}[\cite{kortsarz}]\label{thm:minrep}
    For any constant $\epsilon'\in (0,1)$, there is no polynomial-time algorithm that can distinguish between YES and NO instances of \text{MIN-REP}, unless $\text{NP}\subseteq \text{DTIME }(2^{\text{polylog}(n)})$.
\end{theorem}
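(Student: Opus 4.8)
The statement is the \text{MIN-REP} hardness dichotomy of Kortsarz, which I would establish by a gap-amplifying reduction from a constraint-satisfaction problem. First I would make explicit that a \text{MIN-REP} instance is, up to renaming, a two-prover one-round game (equivalently a Label Cover instance): each group $A_i$ and $B_j$ is a ``question,'' each vertex inside a group is an ``answer''/label, and an edge of $\tilde E$ records that a pair of answers is consistent. Under this dictionary a REP-cover using a single vertex per group is exactly a labeling that satisfies every condensed edge, i.e. a prover strategy winning the game with probability $1$; hence $|C^*|=2r$ if and only if the underlying game has value $1$. This pins down the completeness side: a satisfiable source formula yields value $1$ and therefore a YES-instance.

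The second step makes the correspondence quantitative on the soundness side. Assuming, as one may after a standard regularization, that the condensed graph is regular, if a REP-cover uses on average $t$ representatives per group then choosing one representative uniformly at random in each group produces a labeling satisfying an $\Omega(1/t^2)$ fraction of condensed edges; hence the game has value $\Omega(1/t^2)$. Contrapositively, a game of value $v$ forces every REP-cover to have total size $\Omega(r/\sqrt v)$. Thus a gap between value $1$ and value $v=2^{-\Omega(k)}$ becomes a REP-cover gap between $2r$ and $\Omega(2^{\Omega(k)} r)$, which is the shape of the claimed YES/NO separation.

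Third I would supply the value gap itself. Starting from $3$SAT, the PCP theorem gives in polynomial time a Label Cover / two-prover game with perfect completeness and soundness bounded below $1$ by an absolute constant. Applying Raz's parallel repetition theorem to the $k$-fold repeated game drives soundness down to $2^{-\Omega(k)}$ while preserving perfect completeness, at the cost of inflating the instance size from $N$ to $N^{O(k)}$. Translating the repeated game back into \text{MIN-REP} notation produces an instance of size $n=N^{O(k)}$ that is a YES-instance when the formula is satisfiable and, by the second step, a NO-instance with $|C^*|=\Omega(2^{\Omega(k)}r)$ otherwise.

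The main obstacle is the parameter balancing in the last step. To realize the target separation $2^{\log^{1-\epsilon'}n}r$ I must choose $k$ with $2^{\Omega(k)}\ge 2^{\log^{1-\epsilon'}n}$ while $\log n=\Theta(k\log N)$; solving $k\gtrsim (k\log N)^{1-\epsilon'}$ forces $k=\mathrm{polylog}(N)$. With such a super-constant number of repetitions the reduction no longer runs in polynomial time but in time $N^{O(k)}=2^{\mathrm{polylog}(N)}$, which is precisely why the conclusion is conditioned on $\text{NP}\not\subseteq \text{DTIME}(2^{\mathrm{polylog}(n)})$ rather than on $\text{P}\neq\text{NP}$: a polynomial-time distinguisher for \text{MIN-REP} would decide $3$SAT in time $2^{\mathrm{polylog}(N)}$. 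Verifying that the Raz soundness exponent, the $N^{O(k)}$ blow-up, and the $\sqrt v$ loss of the second step compose to leave a net gap of $2^{\log^{1-\epsilon}n}$ for every $\epsilon\in(0,1)$ (after relating $\epsilon$ to $\epsilon'$) is the delicate bookkeeping on which the argument rests.
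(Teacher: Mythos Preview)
The paper does not prove this theorem at all: it is quoted verbatim as a known result, attributed to Kortsarz~\cite{kortsarz} (in the formulation of~\cite{CDKL}, Theorem~6.1), and used as a black box in the reduction to \DCSS. There is therefore no ``paper's own proof'' to compare against.

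That said, your sketch is a faithful outline of the standard argument that underlies Kortsarz's result: the identification of MIN-REP with Label Cover / two-prover games, the averaging argument linking REP-cover size to game value, and the PCP theorem followed by Raz's parallel repetition with $k=\mathrm{polylog}(N)$ rounds to manufacture the $2^{\log^{1-\epsilon'} n}$ gap, which is exactly what forces the $\text{DTIME}(2^{\mathrm{polylog}(n)})$ hypothesis rather than $\text{P}\neq\text{NP}$. For the purposes of this paper none of that machinery needs to be reproduced; citing the theorem suffices.
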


Let $\Tilde{G}=(A,B,\Tilde{E})$ be an instance of MIN-REP with the associated condensed graph $G'=(U,W,E')$. We denote by $\Gamma(w)$ the group in $\Tilde{G}$ corresponding to the condensed vertex $w \in U \cup W$. Specifically, for $u \in U$, we have $\Gamma(u) = A_i$ for some $i \in [r]$, and for $w \in W$, we have $\Gamma(w) = B_j$ for some $j \in [r]$. Let $x$ be a parameter that we will set later.

%%%INFORMAL DESCRIPTION OF THE REDUCTION
Before we go into the details of the reduction from MIN-REP to \DCSS, we provide some intuition behind the construction. We start with a MIN-REP instance $\Tilde{G}=(A,B,\Tilde{E})$ and construct an instance $(G,3)$ of \DCSS. The graph $G$ contains all the vertices and edges of $\Tilde{G}$, along with a vertex for every group (corresponding to the condensed vertex), which is connected to all the vertices in its group. 

The idea is that in the \DCSS instance, we want two condensed vertices to be connected through the edges in $\Tilde{G}$ if they are connected by a condensed edge in $G'$. If they are not connected by a condensed edge, we want to have some other path between them. This ensures that the selected edges will correspond to a REP-cover. To achieve this, we add an edge between two condensed vertices if and only if they are not connected by a condensed edge in $G'$.

To connect all the vertices in $G$ that are not condensed vertices, we introduce a vertex $t$, which will act as the center of a star connecting all such vertices. This new vertex will be adjacent to all the vertices in $A$ and $B$, and it will be connected to the condensed vertices through disjoint paths of length two. In this way, the condensed vertices will be the only vertices that are not connected through a path of length at most 3 passing through the star.

The labels of edges incident to $t$ may introduce a dominant term in the size of the solution. To address this, we replicate the condensed vertices into $x + 1$ copies, where $x$ copies have the same role as described before (i.e., they try to reach the other condensed vertices through the MIN-REP instance, selecting their own vertices for the REP-cover), and the remaining copy has the role of connecting its copies to the other vertices.

We now construct our \DCSS instance. See Figure~\ref{fig_rid_3dpss} for an illustration. For simplicity, we first describe the reduction for the case of diameter 3, and then discuss how to extend it to any fixed $d\geq 3$.
\begin{figure}[t]
    \centering
    \pgfdeclarelayer{nodelayer}
    \pgfdeclarelayer{edgelayer}
    \pgfsetlayers{nodelayer,edgelayer}

    \includegraphics[width=0.8 \textwidth]{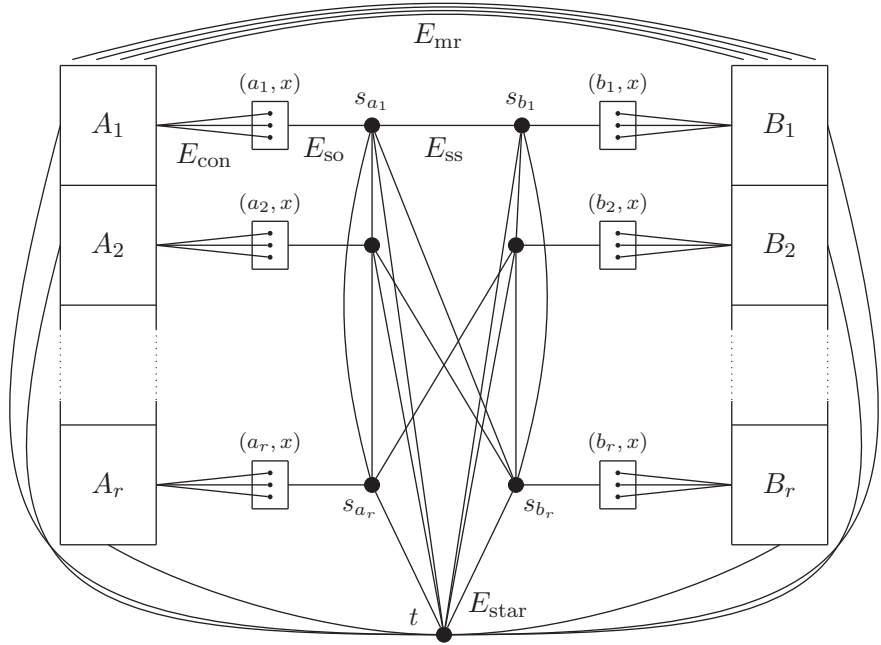}
    \caption{The graph used in the proof of Theorem~\ref{thm:stronginapprox}. Edges between a vertex and a set indicate that the vertex is connected to all vertices in the set.}
    \label{fig_rid_3dpss}
\end{figure}

We begin by defining the vertex set $V=A\cup B \cup V^L\cup V^R\cup S \cup  \{t\}$, where $V^L = U \times [x]$, $V^R = W \times [x]$, and $S=\{s_u : u\in U\cup W\}$. In other words, the vertices of $V^L$ and $V^R$ are $x$ copies of the condensed vertices,  $S$ contains one vertex for each condensed vertex, and we add a special vertex $t$ that will help us easily connect the majority of the vertices in the instance. 

Now, we define the edge set of $G$. We start with the edges of the MIN-REP instance, which we denote as $E_{\text{mr}} = \Tilde{E}$.
Next, we add the edges connecting the vertices of $V^L$ and $V^R$ to their corresponding group. More formally, let
$ E_{\text{con}} = \{ \{(u,i),a \} \mid u \in U \cup W, \, a \in \Gamma(u), \, i \in [x] \}.$
We then connect the vertices in $S$ using two edge sets. The first set connects each $s_u \in S$ to the vertices $(u,i) \in V^L \cup V^R$, and is defined as
$ E_{\text{so}} = \{ \{(u,i),s_u \} \mid u \in U \cup W, \, i \in [x] \}. $
The second set connects the vertices within $S$ such that there is an edge between two vertices in $S$ if and only if there is no condensed edge connecting the corresponding condensed vertices in $G'$. Formally, this set is
$ E_{\text{ss}} = \{ \{s_u,s_{u'}\} \mid u,u' \in U \cup W, \, \{u,u'\} \not\in E(G') \}. $

Observe that for two vertices in $U$, the corresponding vertices in $S$ will be connected by an edge, and the same holds for the vertices in $W$.

Finally, we add the edges that have $t$ as an endpoint. These edges form the set
$ E_{\text{star}} = \{ \{t,y\} \mid y \in A \cup B \cup S \}. $

The final edge set of $G$ is the union of all the previous edge sets:
$ E(G) = E_{\text{mr}} \cup E_{\text{con}} \cup E_{\text{so}} \cup E_{\text{ss}} \cup E_{\text{star}}. $

The next lemma shows that the diameter of $G$ is equal to 3. 
\begin{lemma}\label{lemma:diameterthree}
The graph $G$ constructed above has diameter 3.
\end{lemma}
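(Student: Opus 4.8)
## Proof Plan for Lemma~\ref{lemma:diameterthree}

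The plan is to verify that every pair of vertices in $G$ is at distance at most $3$, and then exhibit one pair at distance exactly $3$ to conclude the diameter equals $3$. I will organize the upper-bound argument around the special vertex $t$, which is adjacent to all of $A\cup B\cup S$, so any two vertices in $A\cup B\cup S$ are at distance at most $2$ via $t$. The remaining work is to handle pairs where at least one endpoint lies in $V^L\cup V^R$ (the copies of condensed vertices), since these are the only vertices not adjacent to $t$.

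First I would observe that each copy vertex $(u,i)\in V^L\cup V^R$ is adjacent to $s_u\in S$ (via $E_{\text{so}}$) and to all of $\Gamma(u)\subseteq A\cup B$ (via $E_{\text{con}}$); since $\Gamma(u)$ is nonempty and each such neighbor is adjacent to $t$, the vertex $(u,i)$ is at distance exactly $2$ from $t$. Consequently, for a copy vertex $(u,i)$ and any vertex $y\in A\cup B\cup S\cup\{t\}$, we get a path of length at most $3$ by going $(u,i)\to s_u\to t\to y$ (or shorter if $y$ is already a neighbor). This handles all pairs with exactly one endpoint among the copies. For two copy vertices $(u,i)$ and $(u',i')$, I would split into cases: if $u=u'$, then both are adjacent to $s_u$, giving distance at most $2$; if $u\neq u'$ and $\{u,u'\}\notin E(G')$, then $\{s_u,s_{u'}\}\in E_{\text{ss}}$, so the path $(u,i)\to s_u\to s_{u'}\to (u',i')$ has length $3$; if $u\neq u'$ and $\{u,u'\}\in E(G')$, then by definition of the condensed graph there exist $a\in\Gamma(u)$ and $b\in\Gamma(u')$ with $\{a,b\}\in\Tilde E=E_{\text{mr}}$, and since $(u,i)\sim a$, $a\sim b$, $b\sim (u',i')$ we again get a path of length $3$. (Note one must check that $u,u'$ in the same side $U$ or $W$ always falls in the "not a condensed edge" case, as the excerpt remarks, so that case is covered; when $u\in U,u'\in W$ both subcases can arise.) This exhausts all pairs, so $D_G\le 3$.

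For the lower bound $D_G\ge 3$, I would exhibit a pair at distance $3$. A natural candidate: take $(u,i)\in V^L$ and a vertex $a'\in A$ with $a'\notin\Gamma(u)$ and $a'$ not incident in $\Tilde E$ to any vertex of $\Gamma(u)$ — or more robustly, take two copy vertices $(u,i),(u',i')$ with $u\in U$, $u'\in W$, $\{u,u'\}\notin E(G')$: their only common-length-$2$ connections would require a shared neighbor, but the neighbor sets $\{s_u\}\cup\Gamma(u)$ and $\{s_{u'}\}\cup\Gamma(u')$ are disjoint (distinct $S$-vertices, and $\Gamma(u)\subseteq A$, $\Gamma(u')\subseteq B$ are disjoint), so the distance is at least $3$; combined with the upper bound it is exactly $3$. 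I would also double-check that $G$ is connected (immediate, since everything reaches $t$ within two steps).

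The main obstacle I anticipate is purely a matter of careful case analysis rather than any deep idea: ensuring the case split on pairs of copy vertices is exhaustive and that the ``condensed edge'' case genuinely yields the length-$3$ path using the definition of $G'$ (this is exactly where the design choice of putting an $E_{\text{ss}}$ edge precisely when there is \emph{no} condensed edge does its work — the two mechanisms, $E_{\text{ss}}$ and $E_{\text{mr}}$, must jointly cover all pairs). I would present the copy-vertex-to-copy-vertex case as the heart of the proof and treat the $t$-based cases briskly.
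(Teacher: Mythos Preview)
Your proposal is correct and follows essentially the same approach as the paper: route through $t$ for any pair not entirely inside $V^L\cup V^R$, and for two copy vertices case-split on whether the underlying condensed vertices coincide, lie on the same side, or lie on opposite sides with/without a condensed edge (using $E_{\text{ss}}$ in one subcase and $E_{\text{mr}}$ in the other). The paper in fact only argues the upper bound and then asserts $D_G=3$, so your explicit lower-bound witness is extra; note that the condition $\{u,u'\}\notin E(G')$ you impose there is unnecessary, since the neighbor sets of $(u,i)\in V^L$ and $(u',i')\in V^R$ are disjoint regardless.
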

\begin{proof}
First, observe that vertex $t$ can reach every vertex in $A \cup B \cup S$ with one edge, and it can reach every vertex in $V^L \cup V^R$ with a path of two edges going through $S$. Let $v_1, v_2$ be two vertices of $V(G) \setminus \{t\}$. If at most one of $v_1$ and $v_2$ belongs to $V^L \cup V^R$, then there exists a path between $v_1$ and $v_2$ of length at most 3 going through vertex $t$. 

The only remaining case is when both $v_1$ and $v_2$ belong to $V^L \cup V^R$. Therefore, $v_1$ and $v_2$ are of the form $v_1 = (w, i)$ and $v_2 = (w', j)$ with $w, w' \in U \cup W$. If $w = w'$, then $v_1$ and $v_2$ can reach each other in 2 steps using the edges incident to $s_w$. If both $w$ and $w'$ lie in $U$ (analogously in $W$), then we can use the edge $\{s_w, s_{w'}\} \in E_{\text{ss}}$ to connect $v_1$ to $v_2$ with a path of length 3. 

Finally, we have the case where $w \in U$ and $w' \in W$ (or analogously $w \in W$ and $w' \in U$). If there exists an edge between $s_w$ and $s_{w'}$, then the path of length 3 is $(w, i), s_w, s_{w'}, (w', j)$. Otherwise, if there is no edge between $s_w$ and $s_{w'}$, then, by the construction of $G$, there exists a superedge between $w$ and $w'$ in $G'$, and therefore there exists an edge $\{a, b\} \in \Tilde{E} = E_{\text{mr}}$ with $a \in \Gamma(w)$ and $b \in \Gamma(w')$. In conclusion, the path of length 3 between $v_1$ and $v_2$ in $G$ is $(w, i), a, b, (w', j)$, and therefore $D_G = 3$.
\end{proof}

We are going to show that if there exists a $2^{\log ^{1-\epsilon} n}$-approximation algorithm for an instance $(G,3)$ of \DCSS, then Theorem \ref{thm:minrep} is contradicted. To this aim, we need two lemmas.

Let $H$ be an arbitrary spanning subgraph of $G$ with $D_H=3$. The first lemma proves that the size of $E(H)$ is lower bounded by $x$ times the size of an optimal solution to the MIN-REP instance, i.e., $|E(H)| > x \cdot |C^*|$, where $x$ is the number of copies in $G$ of each condensed vertex.

\begin{lemma}\label{lemma:sound}
    For any feasible solution $H$ to the instance $(G,3)$ of \DCSS we have $|E(H)|> x \cdot |C^*|$.
\end{lemma}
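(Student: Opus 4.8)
The plan is to show that any feasible subgraph $H$ of $G$ with $D_H = 3$ must, for each copy index $i\in[x]$, implicitly select enough edges of $\Tilde G$ to cover all condensed edges, so that these edges restricted to the $i$-th copies plus the edges of $E_{\text{con}}$ they force encode a REP-cover. Summing over all $x$ copies then yields the factor $x$ blow-up. The key structural observation to establish first is: for a condensed edge $\{u,u'\}\in E(G')$ with $u\in U$, $u'\in W$, the vertices $s_u$ and $s_{u'}$ are \emph{not} adjacent in $G$ (by definition of $E_{\text{ss}}$), and moreover no path of length $\le 3$ in $G$ between a copy $(u,i)\in V^L$ and $(u',i)\in V^R$ can avoid using an edge $\{a,b\}\in E_{\text{mr}}$ with $a\in\Gamma(u)$, $b\in\Gamma(u')$. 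I would prove this by case analysis on the possible length-$3$ paths from $(u,i)$ to $(u',i)$: the neighbours of $(u,i)$ are vertices in $\Gamma(u)$ and the single vertex $s_u$; from $\Gamma(u)\subseteq A$ one reaches (in two more steps) either via $t$ (but $t$'s neighbours are $A\cup B\cup S$, and the only way to then reach $(u',i)$ in one step is through $s_{u'}$, which is not adjacent to $(u,i)$'s side appropriately — careful bookkeeping needed), via $s_u$ (then $s_u$ must connect to something reaching $(u',i)$ in one step, i.e.\ to $s_{u'}$, impossible since $\{u,u'\}\in E(G')$), or via another vertex of $\Gamma(u)$'s neighbourhood. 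The upshot is that the only surviving length-$3$ paths are of the form $(u,i),a,b,(u',i)$ with $\{a,b\}\in\Tilde E$, $a\in\Gamma(u)$, $b\in\Gamma(u')$.

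Given that observation, for each fixed $i\in[x]$, define $C_i := \{a\in A\cup B : \{(w,i),a\}\in E(H) \text{ for the unique } w \text{ with } a\in\Gamma(w)\}$, i.e.\ the set of vertices of $\Tilde G$ that are joined to their $i$-th copy in $H$. For every condensed edge $\{u,u'\}$, feasibility of $H$ forces a length-$\le 3$ path between $(u,i)$ and $(u',i)$, which by the observation must be $(u,i),a,b,(u',i)$ with $a\in\Gamma(u)\cap C_i$, $b\in\Gamma(u')\cap C_i$ and $\{a,b\}\in\Tilde E$; hence $C_i$ is a REP-cover, so $|C_i|\ge |C^*|$. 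Moreover each vertex $a\in C_i$ contributes the distinct edge $\{(w,i),a\}\in E_{\text{con}}\cap E(H)$, and these edges are pairwise distinct across different $a$ and different $i$ (they involve distinct copy vertices). Therefore
\[
|E(H)| \;\ge\; \sum_{i\in[x]} |C_i| \;\ge\; x\cdot |C^*|.
\]
To get the \emph{strict} inequality one observes that $E(H)$ must additionally contain at least one edge not of the form $\{(w,i),a\}$ with $a\in\Gamma(w)$ — for instance, some edge incident to $t$, or an edge incident to a vertex of $S$, since the $x\cdot|C^*|$ edges counted above all lie in $E_{\text{con}}$ and a connected spanning subgraph containing $t$ needs at least one edge of $E_{\text{star}}$. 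This strictness is minor; the substantive content is the REP-cover argument.

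The main obstacle I expect is the path-structure case analysis showing that no length-$3$ path between $(u,i)$ and $(u',i)$ (for a condensed edge, with $u\in U,u'\in W$) can ``cheat'' by routing through $t$, through $S$, or through a mismatched copy vertex — i.e.\ that the path genuinely must use an $E_{\text{mr}}$ edge respecting the groups $\Gamma(u),\Gamma(u')$. One has to carefully enumerate the neighbourhoods: $N((u,i)) = \Gamma(u)\cup\{s_u\}$, $N(t)=A\cup B\cup S$, $N(s_u)=\{(u,j):j\in[x]\}\cup\{s_{u''}:\{u,u''\}\notin E(G')\}$, and check that the only internal-vertex pattern $\big((u,i),p,q,(u',i)\big)$ that closes up is $p\in\Gamma(u)$, $q\in\Gamma(u')$, $\{p,q\}\in\Tilde E$. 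In particular the case $p=a\in\Gamma(u)$, $q=t$ fails because $t$ is not adjacent to $(u',i)\in V^R$; the case $p=s_u$ fails because $s_u$'s only neighbours outside its own copies are $s_{u''}$ with $\{u,u''\}\notin E(G')$, none of which is adjacent to $(u',i)$ unless $u''=u'$, contradicting $\{u,u'\}\in E(G')$. Once this is nailed down, assembling the sum is routine.
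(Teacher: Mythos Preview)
Your proposal is correct and follows essentially the same approach as the paper: define for each copy index $i$ the set $C_i$ of vertices of $\Tilde G$ adjacent in $H$ to the $i$-th copy of their condensed vertex, argue via the path-structure constraint that each $C_i$ is a REP-cover (hence $|C_i|\ge |C^*|$), and sum the disjoint $E_{\text{con}}$-edges across all $i$, with strictness coming from the need to connect $t$. Your case analysis of the length-$3$ paths is more explicit than the paper's, which simply asserts that the only such paths between $(u,i)$ and $(u',i)$ for a condensed edge $\{u,u'\}$ must use an edge of $E_{\text{mr}}$; just be sure in the write-up to also dispose of the subcase $p=s_u$, $q=t$ (which fails since $t\notin N((u',i))$).
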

\begin{proof}
    For $i \in [x]$, let $C_i \subseteq A \cup B$ be the set of vertices in $A \cup B$ that are adjacent to any $i$-th copy of a supervertex in $H$. Formally, 
    $C_i = \{ c \in A \cup B \mid w \in U \cup W \land c \in \Gamma(w) \land \{(w,i), c\} \in H \}.$
    We will prove that $C_i$ is a REP-cover of $\Tilde{G}$ for all $i \in [x]$. Fix an $i \in [x]$ and a superedge $\{u,v\} \in E(G')$. Observe that $\{u,v\}$ does not belong to $G$ and that the only possible paths between $(u,i)$ and $(v,i)$ in $H$ of length at most 3 are those passing through an edge in $E_{\text{mr}}$. Hence, there must exist two vertices $a$ and $b$ such that $a \in \Gamma(u)$, $b \in \Gamma(v)$, and the path $(u,i),a,b,(v,i)$ is in $H$. By definition of $C_i$, both $a$ and $b$ belong to $C_i$, and $C_i$ covers the superedge $\{u,v\}$. Hence, for all $i \in [x]$, we have that $C_i$ is a valid REP-cover, and consequently $|C_i| \geq |C^*|$. 

    Observe that every vertex $c$ belonging to $C_i$ corresponds to a distinct edge $\{c, (u,i)\}$ in $H$, for some $u \in U \cup W$. Moreover, for each $u \in U \cup W$, and $i,j \in [x]$ with $i \neq j$, the set of edges of $G$ incident to $(u,i)$ and $(u,j)$ are disjoint. Therefore, the number of edges in $H$ is at least the sum, over all $i \in [x]$, of the size of $C_i$, that is, 
    $|E(H)| > \sum_{i=1}^{x} |C_i|$, where the inequality is strict as we need to add more edges in order to connect the other vertices (e.g., $t$). Finally, recalling that $|C_i| \geq |C^*|$, we conclude that $|E(H)| > x \cdot |C^*|$.
\end{proof}

The second lemma shows that there exists a spanning subgraph of diameter 3 which does not cost much more than $C^*$ in a YES-instance of MIN-REP.

\begin{lemma}\label{lemma:complete}
    Assume that $\Tilde{G}$ is a YES-instance of MIN-REP, then there exists a feasible solution $H$ to the instance $(G,3)$ of \DCSS such that $|E(H)|\leq 4rx + 5r^2\sigma$.
\end{lemma}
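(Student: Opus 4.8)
The plan is to exhibit one explicit spanning subgraph $H$ of $G$, assembled from the REP-cover $C^*$ of size $2r$ that a YES-instance guarantees, and then bound $|E(H)|$ by adding up the contributions of a handful of edge classes. Write $C^*=\{a^*_1,\dots,a^*_r,b^*_1,\dots,b^*_r\}$, where $a^*_i$ is the \emph{unique} vertex of $C^*$ in the group $A_i$ and $b^*_j$ the unique vertex of $C^*$ in $B_j$; uniqueness is precisely the defining condition $|C^*|=2r$ of a YES-instance, and it is what makes the construction go through.

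First I would fix the edges of $H$ as follows: (a) keep the whole star $E_{\text{star}}$; (b) keep the whole set $E_{\text{so}}$ joining every copy $(u,i)\in V^L\cup V^R$ to its hub $s_u$; (c) keep the whole set $E_{\text{ss}}$; (d) for every copy $(u,i)$, keep the single $E_{\text{con}}$-edge joining $(u,i)$ to the vertex of $C^*$ lying in $\Gamma(u)$; (e) for every condensed edge $\{u,u'\}\in E(G')$, keep one edge of $E_{\text{mr}}=\Tilde E$ between the $C^*$-vertex of $\Gamma(u)$ and the $C^*$-vertex of $\Gamma(u')$ — such an edge exists exactly because $C^*$ is a REP-cover and, meeting each group in a single vertex, the covering edge for $\{u,u'\}=\{a_i,b_j\}$ must be $\{a^*_i,b^*_j\}$. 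Counting: (a) gives $|A|+|B|+|S|=2r\sigma+2r$, (b) gives $2rx$, (c) gives at most $\binom{2r}{2}<2r^2$, (d) gives $2rx$, and (e) gives at most $|E(G')|\le r^2$. Summing yields $4rx+O(r^2+r\sigma)\le 4rx+5r^2\sigma$, where the low-order terms are absorbed because $r,\sigma\ge 1$ (the case $r=1$ being trivial).

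It then remains to show $D_H=3$, which I would do by replaying the case analysis of Lemma~\ref{lemma:diameterthree} and verifying that each short path it uses survives in $H$: $t$ reaches $A\cup B\cup S$ in one step via (a) and $V^L\cup V^R$ in two via $t$–$s_u$–$(u,i)$ through (a),(b); two vertices outside $V^L\cup V^R$ are joined in at most two steps through $t$ via (a); a copy $(u,i)$ and a vertex $y\notin V^L\cup V^R$ are joined by $(u,i)$–$s_u$–$t$–$y$ via (b),(a); two copies of the same condensed vertex by $(u,i)$–$s_u$–$(u,j)$ via (b); two copies $(u,i),(u',j)$ with $u,u'$ on the same side of the bipartition, or with no condensed edge between them, by $(u,i)$–$s_u$–$s_{u'}$–$(u',j)$ via (b),(c),(b); and two copies with $\{u,u'\}\in E(G')$ by $(u,i)$–$a^*$–$b^*$–$(u',j)$, a path that lies in $H$ by (d) and (e). Since $H$ is a spanning subgraph of $G$, which has diameter $3$ by Lemma~\ref{lemma:diameterthree}, we get $D_H\ge 3$ as well, hence $D_H=3$ and $H$ is feasible for $(G,3)$.

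I do not expect a real obstacle, since the construction is essentially forced: the only delicate point is the last diameter case, where one must be sure the $\Tilde E$-edge retained by rule (e) is exactly the one incident to the two hubs selected by rule (d) for the relevant copies — and this is guaranteed precisely because in a YES-instance $C^*$ hits each group in a single vertex. The only other thing to check carefully is the elementary estimate $2r\sigma+2r+3r^2\le 5r^2\sigma$ for $r\ge 2$, which is immediate.
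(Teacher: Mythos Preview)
Your proposal is correct and takes essentially the same approach as the paper: the subgraph $H$ you build (edge classes (a)--(e)) coincides with the paper's choice $E_{\text{star}}\cup E_{\text{so}}\cup E_{\text{ss}}$ together with the single $E_{\text{con}}$-edge per copy and the $E_{\text{mr}}$-edges spanned by $C^*$, and your diameter case analysis is the same as Lemma~\ref{lemma:diameterthree}. The only cosmetic difference is that the paper counts the $E_{\text{ss}}$ and selected $E_{\text{mr}}$ edges jointly as $\binom{2r}{2}$ (since for every pair of condensed vertices exactly one of the two types is present), whereas you bound them separately by $\binom{2r}{2}+r^2$; both land comfortably under $5r^2\sigma$.
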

\begin{proof}
    Let $\Tilde{G}$ be a YES-instance of MIN-REP, so there exists a REP-cover $C$ of size $2r$. Since $|C| = 2r$, it follows that $|C \cap \Gamma(w)| = 1$ for each $w \in U \cup W$. Let $c_w$ denote the unique element of $C \cap \Gamma(w)$ for each $w \in U \cup W$. 
    
    We construct $H$ starting from a subgraph of $G$ that includes the edges $E_{\text{star}} \cup E_{\text{ss}} \cup E_{\text{so}}$. We then add the subset of edges in $E_{\text{con}}$ of the form $\{(w,i), c_w\}$ for all $w \in U \cup W$ and $i \in [x]$. Finally, we add the edges $\{a,b\} \in E_{\text{mr}}$ such that $a, b \in C$.
    
    Let us now analyze the size of $H$. The edges in $E_{\text{star}}$ are $2r + 2r\sigma$, the edges in $E_{\text{so}}$ contribute $2rx$, and we select $2rx$ edges from $E_{\text{con}}$. Finally, $H$ will include $\frac{2r(2r-1)}{2}$ edges in $E_{\text{ss}} \cup E_{\text{mr}}$. To see this, observe that for each vertex $s_w \in S$ with $w \in U \cup W$, we include $(r-1) + (r - \deg_{G'}(w))$ edges from $E_{\text{ss}}$ plus $\deg_{G'}(w)$ edges incident to $c_w$ in $E_{\text{mr}}$ (one for each superedge to be covered). 
    
    Therefore, in total, $H$ will contain at most $4rx + 2r^2 + r + 2r\sigma$ edges. Clearly, we can upper bound the total number of edges in $H$ by $4rx + 5r^2\sigma$.

    It remains to prove that $H$ has diameter at most 3. The proof is essentially the same as that of Lemma~\ref{lemma:diameterthree}.
\end{proof}

We are now ready to prove Theorem~\ref{thm:stronginapprox}. For convenience of the reader, we restate the statement of Theorem~\ref{thm:stronginapprox}.

\stronginapprox*

\begin{proof}
    Fix $\epsilon \in (0,1)$. Let $\Tilde{G}$ be an instance of MIN-REP of size $\Tilde{n}$ such that $\Tilde{n} \geq 2^{2^{\frac{2}{\epsilon(1-\epsilon)}}}$. 
    Let $x = r\sigma$. Clearly, the graph $G$ created as described earlier has size $n$, which is polynomial in $\Tilde{n} = 2r\sigma$. Specifically, we have $n \leq (2r\sigma)^2$.
    
    Assume there exists a polynomial-time $(2^{\log^{1-\epsilon} n})$-approximation algorithm for the \DCSS problem, which we apply to the instance $(G,3)$ previously defined. Let $H_{\text{APX}}$ denote the approximate solution, and let $H^*$ denote an optimal solution to the \DCSS instance on $(G,3)$.
    
    Theorem~\ref{thm:minrep} implies that, under the assumption that $\text{NP} \not\subseteq \text{DTIME}(2^{\text{polylog}(n)})$, there is no polynomial-time algorithm that can distinguish between the YES case, where $|C^*| = 2r$, and the NO case, where $|C^*| \geq r 2^{\log^{1-\epsilon'} (2r\sigma)}$. Since Theorem~\ref{thm:minrep} holds for any choice of $\epsilon'$, we fix $\epsilon' = \epsilon^2$, and we have $|C^*| \geq r 2^{\log^{1-\epsilon^2} (2r\sigma)}$ when $\Tilde{G}$ is a NO-instance for MIN-REP.
    
    By Lemma~\ref{lemma:complete}, if $\Tilde{G}$ is a YES-instance of MIN-REP, then $|H^*| \leq 9xr$. Therefore, when $\Tilde{G}$ is a YES-instance, the solution of our approximation algorithm on $G$ has size:
    \[
    |H_{\text{APX}}| \leq 9xr \cdot 2^{\log^{1-\epsilon} n} \leq 9xr \cdot 2^{2\log^{1-\epsilon} (2r\sigma)}.
    \]
    
    On the other hand, when $\Tilde{G}$ is a NO-instance of MIN-REP, by Lemma~\ref{lemma:sound}, we have 
    \[ 
    |H^*| > x \cdot |C^*| \geq xr 2^{\log^{1-\epsilon^2} (2r\sigma)} = xr 2^{(\log^{1-\epsilon} (2r\sigma))^{1+\epsilon}}.
    \]
    
    Now, using the fact that $2r\sigma = n' \geq 2^{2^{\frac{2}{\epsilon(1-\epsilon)}}}$ and $0 < \epsilon < 1$, we get the inequality:
    \[
    9xr 2^{2\log^{1-\epsilon} (n')} < xr 2^{(\log^{1-\epsilon} (n'))^{1+\epsilon}},
    \]
    which implies that we can use our approximation algorithm to distinguish between YES and NO instances of MIN-REP, contradicting Theorem~\ref{thm:minrep}. Hence, for any $\epsilon$, no polynomial-time $(2^{\log^{1-\epsilon} n})$-approximation algorithm can exist, and the theorem follows for $d = 3$.
    
    To extend the hardness to the case where the required diameter $d$ is any fixed value $d \geq 3$, we need to adjust the reduction slightly. For simplicity, assume that $d = 2k + 1$ for some integer $k \geq 2$. We change the reduction by adding, for each vertex $(u,i) \in V^L \cup V^R$, a path of length $k-1$ with copies of vertex $(u,i)$. Specifically, we add vertices $(u,i,1), (u,i,2), \dots, (u,i,k-1)$ and edges $\{(u,i), (u,i,1)\}, \{(u,i,j), (u,i,j+1)\}$ for $j \in [k-2]$. In this graph, vertices $(u,i,k-1)$ must select a REP-cover in order to connect to every other vertex with a path of length at most $d$. 
    
    As before, by choosing $\epsilon$ small enough and an instance of MIN-REP large enough, we can prove that $d$-\DCSS cannot be approximated with a factor better than $2^{\log^{1-\epsilon} n}$ for any odd $d \geq 3$. A similar construction can be done when $d$ is even, though more technicalities arise due to the different path lengths based on whether $(u,i) \in V^L$ or $(u,i) \in V^R$.    
\end{proof}

We now combine Theorem~\ref{thm:stronginapprox} with Theorem~\ref{thm:apx-mal-dss} to obtain a hardness lower bound for \MAL. We observe that in this result, differently from Corollary~\ref{cor:malinapprox}, the interval for $\epsilon$ does not depend on $a$ as $\epsilon$ appears in the exponent.
\begin{corollary}\label{cor:stronginapproxmal}
 For any constant $\epsilon\in (0,1)$, there is no polynomial-time $2^{\log ^{1-\epsilon} n}$-approximation algorithm for \MAL, unless $\text{NP}\subseteq \text{DTIME}(2^{\text{polylog}(n)})$. The hardness holds even when the maximum allowed age $a$ is a fixed parameter greater or equal to 3.

\end{corollary}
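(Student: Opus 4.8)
The plan is to obtain this corollary from Theorem~\ref{thm:stronginapprox} (the $2^{\log^{1-\epsilon} n}$ inapproximability of \DCSS with fixed required diameter $d\geq 3$) and the approximation-preserving reduction of Theorem~\ref{thm:apx-mal-dss}-($ii$), exactly in the spirit of Corollary~\ref{cor:malinapprox}. Fix a constant $\epsilon\in(0,1)$ and a constant $a\geq 3$, and suppose for contradiction that there is a polynomial-time $2^{\log^{1-\epsilon} n}$-approximation algorithm for \MAL valid whenever the maximum allowed age equals $a$.

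The key step is to route the \DCSS-hard instances through this \MAL algorithm. Take any instance $(G,a)$ of \DCSS with $a=D_G$ arising from the reduction of Theorem~\ref{thm:stronginapprox} on a MIN-REP instance, and set $n=|V(G)|$. Since $a$ is a constant, trivially $a=O(|V(G)|)$, so Theorem~\ref{thm:apx-mal-dss}-($ii$) applies: viewing $(G,a)$ as a \MAL instance and post-processing the returned labeling via Lemma~\ref{lem:mal-dss}-($ii$) yields, in polynomial time, an $(a\cdot 2^{\log^{1-\epsilon} n})$-approximation algorithm for \DCSS on $(G,a)$.

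It remains only to absorb the constant factor $a$ into the exponent, which is the single point to state with care (and exactly where this corollary differs from Corollary~\ref{cor:malinapprox}: here the loss does not shrink the admissible range of $\epsilon$, it only perturbs the exponent by a lower-order amount). Choose any $\epsilon'\in(0,\epsilon)$, say $\epsilon'=\epsilon/2$. Because $(\log n)^{1-\epsilon'}-(\log n)^{1-\epsilon}\to\infty$ as $n\to\infty$ and $a$ is fixed, we have $a\cdot 2^{(\log n)^{1-\epsilon}}\leq 2^{(\log n)^{1-\epsilon'}}$ for all sufficiently large $n$. Since MIN-REP instances—hence the graphs produced by the reduction of Theorem~\ref{thm:stronginapprox}—form an infinite family of unbounded size, this gives a polynomial-time $2^{\log^{1-\epsilon'} n}$-approximation algorithm for \DCSS on those hard instances, contradicting Theorem~\ref{thm:stronginapprox} applied with $\epsilon'$ in place of $\epsilon$. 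Hence no such \MAL algorithm exists.

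I do not anticipate a genuine obstacle: the reductions of Theorems~\ref{thm:apx-mal-dss} and~\ref{thm:stronginapprox} are already in place, and the only mild care needed is to apply the composed reduction to instances large enough for the inequality $a\cdot 2^{\log^{1-\epsilon} n}\leq 2^{\log^{1-\epsilon'} n}$ to hold, which is harmless.
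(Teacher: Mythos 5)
Your proposal is correct and follows essentially the same route as the paper: apply Theorem~\ref{thm:apx-mal-dss} to turn a hypothetical $2^{\log^{1-\epsilon}n}$-approximation for \MAL into an $(a\cdot 2^{\log^{1-\epsilon}n})$-approximation for \DCSS, then absorb the constant factor $a$ into the exponent by passing to a slightly smaller hardness parameter $\epsilon'$, contradicting Theorem~\ref{thm:stronginapprox}. The only slip is a reference label: the direction you use (a \MAL approximation yields a \DCSS approximation) is Theorem~\ref{thm:apx-mal-dss}-($i$), not ($ii$), though the post-processing step you describe correctly invokes Lemma~\ref{lem:mal-dss}-($ii$).
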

\begin{proof}
We prove that the corollary holds for any constant $\epsilon \in (0,1)$. To this end, assume by way of contradiction that there exists $\epsilon \in (0,1)$ such that there is a $2^{\log^{1-\epsilon} n}$-approximation algorithm for \MAL. By Theorem~\ref{thm:apx-mal-dss}, this implies the existence of a $(a \cdot 2^{\log^{1-\epsilon} n})$-approximation algorithm for \DCSS.
Observe that
$$a \cdot 2^{\log^{1-\epsilon} n} = 2^{\log^{1-\epsilon} n + \log a}.$$
We aim to find $\epsilon' \in (0,1)$ such that
$$2^{\log^{\epsilon'} n} > 2^{\log^{1-\epsilon} n + \log a},$$
which would contradict Theorem~\ref{thm:stronginapprox} as $ a \cdot 2^{\log^{1-\epsilon} n} < 2^{\log^{\epsilon'} n}$. Thus, it suffices to show that there exists $\epsilon'$ such that
$$\log^{\epsilon'} n >  \log^{1-\epsilon} n + \log a.$$
As $a$ is a fixed parameter, it suffices to pick a sufficiently large $\epsilon'\in (1-\epsilon,1)$.
%Let us set $\epsilon' = 1- \frac{\epsilon}{2}$, so that
%$$\log^{(1 - \epsilon)/2} n < \log^{1 - \epsilon} n + \log a.$$
%Dividing both sides by $\log^{(1 - \epsilon)/2} n$, we obtain
%$$1 < \log^{(1 - \epsilon)/2} n + \frac{\log a}{\log^{(1 - \epsilon)/2} n}.$$
%This implies that there exists $\epsilon' \in (0,1)$ contradicting Theorem~\ref{thm:stronginapprox}. Therefore, no such $\epsilon$ can exist.
\end{proof}

\section{Approximation algorithms for \DCSS}\label{sec:dcss}
In this section, we present approximation algorithms for \DCSS by exploiting related problems from the literature. In the following section, we show how to adapt the algorithms for \DCSS to \MAL by using Theorem~\ref{thm:apx-mal-dss}.

To the best of our knowledge, the \DCSS problem first appeared, in a slightly different form, only in~\cite{ElkinPeleg}. In particular, the version considered by~\cite{ElkinPeleg}, called the $d$-DSS problem, takes in input only a graph and asks for the minimum-size spanning subgraph of diameter at most $d$, where $d$ is fixed in the definition of the problem. In~\cite{ElkinPeleg} the results for $d$-DSS are obtained through a reduction from the All-Client $k$-spanner ($k$-AC) problem contained in an unpublished manuscript.
In $k$-AC, we are given a graph and a subset of the edges called servers, and we want to find a subgraph containing only server edges, in which the distance between any pair of vertices is at most $k$ times their original distance and the number of edges is minimized. 
The authors of \cite{ElkinPeleg}, show that $k$-AC can be approximated within a factor $O(\sqrt{n})$ with an additive violation on the required distances of at most 2 edges, and it is hard to approximate within a factor $2^{\log ^{1-\epsilon} n}$, unless $\text{NP}\subseteq \text{DTIME}(2^{\text{polylog}(n)})$. In the previous section, we presented a different reduction to derive this last result.% We recover this last result in Section \ref{sec:hardness} using a different reduction.

There are few other known optimization problems that generalize \DCSS, and thus, we can apply the approximation algorithms developed for these problems to \DCSS as well.

%In this paper, we recover these unpublished results for $d$-DSS. Later in this section, we observe that one can obtain the same approximation factors by reducing \DSS{d} from the additive spanner problem. Moreover, our results on the hardness of approximating \DSS{d} also imply that, for any fixed $d\geq 3$, $d$-DSS is not approximable within a factor $2^{\log ^{1-\epsilon} n}$, unless $\text{NP}\subseteq \text{DTIME}(2^{\text{polylog}(n)})$, and within a factor better than $1/6 \log (n+m)$, unless $\text{P} = \text{NP}$.
%Therefore we are going to close this small gap by proving some inapproximability result for the \DSS{d} problem. 
\paragraph*{Mincost Diameter-$d$ problem.}
Dodis and Khanna~\cite{DodisKhanna}, studied a generalization of the \DCSS problem called \emph{Mincost Diameter-$d$} in which we are given a graph $G=(V,E)$ with a cost function $c\xspace:\xspace \overline{E} \rightarrow \mathbb{R}^+$, where $\overline{E}=\{\{u,v\}\xspace:\xspace\{u,v\}\not \in E\}$, and we want to find a set of edges $E'\subseteq \overline{E}$ of minimum cost such that $G'=(V,E\cup E')$ has diameter at most $d$. There is a correspondence between \DCSS and Mincost Diameter-$d$ when the initial graph $G$ provided as input to the latter is empty. More precisely, let $(G,d)$ be an instance of \DCSS, we construct graph $G'=(V(G),\emptyset)$ and cost function $c$ such that $c(e)=1$ if $e\in E(G)$ and $c(e)=m+1$ otherwise. Observe that any feasible solution of cost at most $m$ for Mincost Diameter-$d$ problem on instance $(G',c)$ corresponds to a feasible solution for \DCSS on instance $(G,d)$ with the same cost, and vice versa. Moreover, we can disregard solutions with cost greater than $m$, since selecting all edges $e$ such that $c(e)=1$ provides a feasible solution to the Mincost Diameter-$d$ (recall that $d\geq D_G$) with strictly smaller cost.
%notice that the  Mincost Diameter-$d$ problem on $G'=(V(G),\emptyset)$ with cost function $c$ such that $c(e)=1$ if $e\in E(G)$ and $c(e)=m+1$ if $e\not \in E(G)$, is essentially the same of \DSS{d}. %Therefore, when the initial graph for Mincost Diameter-$d$ is empty, and the integer in input to \DSS{d} is exactly $d$, the two problem coincide. 

Dodis and Khanna~\cite{DodisKhanna} proved that Mincost Diameter-$d$ is hard to approximate within $O(2^{\log^{1-\epsilon} n})$ unless $ \text{NP} \subseteq \text{DTIME}(n^{\text{poly log}(n)})$. However, the reduction used to establish this hardness result relies on an initial non-empty subgraph, and it does not appear to be trivially adaptable to the \DCSS problem. On the approximation side, the authors also proved the following result, which is of particular interest to us.

\begin{theorem}[\cite{DodisKhanna}, Section 5]%\label{thm:dodiskhanna}
    There exists a $O(\log n)$-approximation algorithm for Mincost Diameter-2.
\end{theorem}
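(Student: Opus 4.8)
The plan is to recast Mincost Diameter-$2$ as a weighted Set Cover instance whose optimum is within a constant factor of $\mathrm{OPT}$, and then run the greedy algorithm; the one subtlety is that the natural covering ``sets'' are exponentially many, so the greedy step must be implemented with care. (Note that a direct LP relaxation with randomized rounding fails here: routing a pair $\{u,v\}$ through a common neighbor $w$ needs \emph{both} edges $\{u,w\}$ and $\{v,w\}$, and independent rounding of the fractional edge weights only yields an $O(\sqrt n)$ guarantee.)

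First I would fix the set of pairs to be covered. Since adding edges only shrinks distances, a supergraph of $G$ has diameter at most $2$ if and only if every pair $\{u,v\}$ with $d_G(u,v)\ge 3$ ends up at distance at most $2$. So the ground set is $P=\{\{u,v\}:d_G(u,v)\ge 3\}$, with $|P|\le\binom n2$. There are two kinds of covering objects. A \emph{direct edge} $\{u,v\}\in\overline E$ costs $c(\{u,v\})$ and covers only the pair $\{u,v\}$. A \emph{star} $St(w,S)$, for a center $w\in V$ and a leaf set $S\subseteq V\setminus\{w\}$, costs $\sum_{s\in S,\ \{w,s\}\notin E}c(\{w,s\})$ and covers every pair $\{x,y\}\subseteq S$, for which $w$ becomes a common neighbor. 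Distinct objects may reuse the same edge, so the total cost of a collection of objects only \emph{upper bounds} the cost of the union of the corresponding edges.

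Next I would relate the two optima. From an optimal augmentation $H^*=(V,E\cup E^*)$ of cost $\mathrm{OPT}$, take the stars $St(w,N_{H^*}(w))$ over all $w$ with $\deg_{H^*}(w)\ge 1$: a demand pair at distance $1$ in $H^*$ is covered as a leaf pair of one endpoint's star, and a demand pair at distance $2$ is covered by a star centered at one of its common neighbors, so $P$ is covered; and the Set Cover cost of this collection equals exactly $2\,\mathrm{OPT}$, since every edge of $E^*$ is counted once from each of its two endpoints. Hence $\mathrm{OPT}_{\mathrm{SC}}\le 2\,\mathrm{OPT}$. Conversely, any feasible Set Cover collection yields a diameter-$2$ augmentation whose true cost is at most the collection's Set Cover cost. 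Therefore a Set Cover solution of cost at most $O(\log|P|)\cdot\mathrm{OPT}_{\mathrm{SC}}=O(\log n)\cdot\mathrm{OPT}$ translates into a feasible solution of Mincost Diameter-$2$ with the same guarantee, \emph{provided} we can run greedy in polynomial time.

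I expect the main obstacle to be exactly that last point: there are exponentially many stars, yet each greedy iteration must output an object whose ratio of cost to number of newly covered pairs is (near-)minimum. For direct edges this is trivial. For a fixed center $w$, the best star asks for a leaf set $S$ maximizing $p(S)/q(S)$, where $p(S)$ is the number of still-uncovered demand pairs inside $S$ and $q(S)$ is the current cost of the not-yet-bought edges from $w$ to $S$; vertices $s$ with $\{w,s\}$ already bought or in $E$ contribute $0$ to $q$, and one checks that no still-uncovered demand pair lies entirely among them (it would already have $w$ as a common neighbor), so $p/q$ is a genuine maximum-density objective with vertex costs over a fixed cost-free part. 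This is solvable in polynomial time by parametric min-cut: for a candidate ratio $\rho$, one max-flow computation decides whether some $S$ has $p(S)-\rho\,q(S)>0$, and one searches over $\rho$. Feeding the exact best ratio found this way into the standard greedy Set Cover analysis yields the claimed $O(\log n)$ approximation; the only loss beyond the greedy logarithm is the constant factor $2$ from double-counting shared edges in the reduction above.
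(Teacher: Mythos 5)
The paper does not prove this statement; it is imported verbatim from Dodis and Khanna (Section 5 of their paper), so there is no in-paper proof to compare against. Your reconstruction is essentially the standard argument for this result (cover the pairs at $G$-distance $\geq 3$ greedily by direct edges and by stars, with the min-ratio star found per center via a densest-subgraph/parametric min-cut oracle), and the overall structure is sound: the density subproblem is a genuine maximum-density objective because, as you note, any pair whose endpoints are both already adjacent to $w$ is already covered, and an exact (or even constant-factor) density oracle suffices for the $H_{|P|}=O(\log n)$ greedy bound.

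One small but real slip in your cost comparison: a demand pair $\{u,v\}$ with $\{u,v\}\in E^*$ (distance $1$ in $H^*$) is \emph{not} covered by the stars $St(w,N_{H^*}(w))$ as you defined them, since a center is not a leaf of its own star and the stars only cover pairs contained in the leaf set. The fix is one line — either let $St(w,S)$ also cover the pairs $\{w,s\}$ for $s\in S$, or add the direct-edge object for each such pair of $E^*$, which raises the bound from $\mathrm{OPT}_{\mathrm{SC}}\leq 2\,\mathrm{OPT}$ to at most $3\,\mathrm{OPT}$ and changes nothing asymptotically.
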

As observed earlier, the Mincost Diameter-$d$ problem can be viewed as a generalization of \DCSS, and therefore, the following result is implied by the previous one.
\begin{corollary}\label{cor:dodiskhanna}
    There exists a $O(\log n)$-approximation algorithm for \DCSS when $d=D_G=2$.
\end{corollary}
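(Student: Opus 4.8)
The plan is to invoke the reduction from $d$-\DCSS to the Mincost Diameter-$d$ problem established in the paragraph immediately preceding the statement, specialized to $d=D_G=2$, and then apply the theorem of Dodis and Khanna on Mincost Diameter-$2$. Concretely, let $(G,2)$ be an instance of \DCSS with $D_G=2$. First I would construct the Mincost Diameter-$2$ instance $(G',c)$ where $G'=(V(G),\emptyset)$ is the edgeless graph on the same vertex set and the cost function is $c(e)=1$ for $e\in E(G)$ and $c(e)=m+1$ for $e\notin E(G)$, with $m=|E(G)|$. This is exactly the correspondence spelled out in the excerpt: a feasible solution of cost at most $m$ to Mincost Diameter-$2$ on $(G',c)$ is in bijection with a feasible solution to \DCSS on $(G,2)$ of the same cost, and vice versa, and any feasible solution of cost exceeding $m$ can be discarded because choosing all unit-cost edges already yields a feasible Mincost Diameter-$2$ solution of cost $m$ (using $d=2\geq D_G$).

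Next I would run the $O(\log n)$-approximation algorithm for Mincost Diameter-$2$ on $(G',c)$, obtaining a solution $E'$. By the previous remark we may assume $c(E')\le m$, which forces $E'\subseteq E(G)$, so $H=(V(G),E')$ is a feasible solution to \DCSS on $(G,2)$ with $|E(H)|=c(E')$. Let $\mathrm{OPT}_{\DCSS}$ be the optimum for \DCSS on $(G,2)$ and $\mathrm{OPT}_{\mathrm{MD}}$ the optimum for Mincost Diameter-$2$ on $(G',c)$; the bijection gives $\mathrm{OPT}_{\mathrm{MD}}\le \mathrm{OPT}_{\DCSS}$ (in fact equality, since $\mathrm{OPT}_{\DCSS}\le m$). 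Hence $|E(H)|=c(E')\le O(\log n)\cdot \mathrm{OPT}_{\mathrm{MD}}\le O(\log n)\cdot \mathrm{OPT}_{\DCSS}$, which is precisely the claimed $O(\log n)$-approximation for \DCSS when $d=D_G=2$.

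There is essentially no obstacle here beyond bookkeeping: the content of the corollary is entirely carried by the cited theorem, and the only thing to verify carefully is that the reduction preserves approximation ratios, i.e. that the cost blow-up trick (assigning cost $m+1$ to non-edges) genuinely prevents the approximation algorithm from ever using a non-edge while not inflating the optimum. Since $d\geq D_G$ guarantees that the all-unit-edges solution is feasible for Mincost Diameter-$2$ with cost exactly $m$, and the approximation algorithm returns a solution within an $O(\log n)$ factor of the optimum (which is at most $m$), the returned cost is at most $O(m\log n)$; for this to force $E'\subseteq E(G)$ one notes that a single non-edge already contributes $m+1>m\cdot O(\log n)$ only for $n$ small, so strictly speaking one should first replace the $O(\log n)$ factor by an explicit constant $c\log n$ and use non-edge cost $c m\log n+1$ instead — an entirely routine adjustment that does not change the asymptotics. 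With that caveat handled, the corollary follows directly.
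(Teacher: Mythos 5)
Your argument is correct and is exactly the route the paper takes: the corollary is presented as an immediate consequence of the Dodis--Khanna theorem via the edgeless-graph, unit-cost reduction spelled out in the preceding paragraph. Your closing caveat about non-edges appearing in an $O(\log n)$-approximate solution is legitimate, but the cost-inflation fix is unnecessary: as the paper's remark already indicates, one can simply post-process by replacing any returned solution of cost exceeding $m$ with the all-unit-cost-edges solution (feasible since $d\geq D_G$, of cost exactly $m$, hence no costlier than the returned one), which lies entirely in $E(G)$ and so preserves the approximation ratio.
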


\paragraph*{Pairwise Spanner problem.}
In~\cite{CDKL}, the authors considered a problem called \emph{Pairwise Spanner}, which generalizes many other network design problems, including \DCSS. In the Pairwise Spanner problem, we are given a (possibly directed) graph $ G $, a collection of ordered vertex pairs $ P \subseteq V \times V $, and a function $ \Delta : P \to \mathbb{N} $, and we aim to compute a subgraph $ H $ of $ G $ that minimizes $ |E(H)| $, such that $ d_H(s,t) \leq \Delta(s,t) $ for all pairs $ (s,t) \in P $. In the same paper, the authors proved the following result regarding Pairwise Spanner.

\begin{theorem}[\cite{CDKL}, Theorem 1.8] \label{thm:pairwisespanner}
    For any constant $ \epsilon > 0 $, there exists a $ O(n^{3/5+\epsilon}) $-approximation algorithm for the \text{Pairwise Spanner} problem.
\end{theorem}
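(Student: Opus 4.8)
The plan is to follow the by-now standard paradigm for approximating spanner-type problems (as in \cite{CDKL} and its predecessors): combine a \emph{local} shortest-path routing scheme for demand pairs that admit a cheap feasible connection with \emph{global} junction-tree gadgets for the remaining pairs, and then tune a single threshold so that the worst case over the two regimes is $O(n^{3/5+\epsilon})$.

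First I would set up the natural flow LP relaxation of \emph{Pairwise Spanner}: for every pair $(s,t)\in P$, route one unit of flow from $s$ to $t$ using only $s$--$t$ walks of length at most $\Delta(s,t)$; for every edge $e$ impose $x_e \geq$ (flow of commodity $(s,t)$ through $e$), and minimize $\sum_e x_e$. Any optimal pairwise spanner $H^*$ induces a feasible fractional solution, so the LP value is at most $|E(H^*)|$ and serves as a lower bound; since its integrality gap is polynomial, it is used only as a guide. In parallel I would fix a threshold $\ell$, a suitable power of $n$ to be optimized at the end, and proceed iteratively: at each step let $P'$ be the set of still-unsatisfied pairs.

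The core is the junction-tree step. A \emph{junction tree} rooted at a vertex $r$ is the union of an in-branching carrying several sources to $r$ and an out-branching carrying $r$ to several sinks; it \emph{covers} a pair $(s,t)$ if the associated $s$-to-$r$ and $r$-to-$t$ paths have total length at most $\Delta(s,t)$, and its \emph{density} is its number of edges divided by the number of pairs it covers. Using the LP solution restricted to $P'$ and a greedy/randomized-rounding argument in the spirit of Dinitz--Krauthgamer and Chekuri et al.\ (after a height-reduction of the length-bounded flow), I would show that in polynomial time one can either (i) produce a junction tree of density $O(n^{3/5+\epsilon})\cdot |E(H^*)|/|P'|$, in which case we add it, charge its cost against $|E(H^*)|$, and recurse, losing only an $O(\log|P|)$ factor over the $O(\log|P|)$ iterations; or (ii) certify that a constant fraction of the LP mass on $P'$ is carried on feasible connections that pass through few internal vertices, in which case those "thin'' pairs can be satisfied all at once by adding an approximate (pairwise) shortest-path preserver, whose size is bounded in terms of $n$, $|P|$, and $\ell$, and is again $O(n^{3/5+\epsilon})\cdot|E(H^*)|$ once $\ell$ is chosen to balance the two estimates.

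The hard part will be two-fold. First, obtaining the approximate min-density junction tree: this requires rounding the length-bounded flow LP while respecting the \emph{per-pair} budgets $\Delta(s,t)$, which is more delicate than in the uniform $k$-spanner setting and is where the bulk of the technical work lies. Second, proving the dichotomy quantitatively, so that the failure to find a cheap junction tree genuinely forces the ``spread-out'' structure that the local scheme exploits, with all exponents lining up to yield exactly $3/5$; this is the point at which one typically needs a finer-than-binary partition of the demand pairs (e.g.\ by the number of internal vertices on their cheapest feasible path, or by a separate treatment of pairs routed through high-degree vertices). The remaining bookkeeping -- handling directedness, the $O(\log|P|)$ iterations, and assembling the gadgets and the preserver into a single subgraph $H$ -- is routine.
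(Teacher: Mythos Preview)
The paper does not prove this theorem at all: it is quoted verbatim as Theorem~1.8 of~\cite{CDKL} and used as a black box to derive Corollary~\ref{cor:ps-approx-alg} (and later Corollary~\ref{cor:approxmal-via-dss}(ii)). There is therefore nothing in the present paper to compare your attempt against.

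For what it is worth, your outline is a faithful high-level description of the paradigm actually used in~\cite{CDKL}: an LP relaxation with length-bounded flows, an iterative junction-tree step with density guarantees, and a balancing threshold that yields the $3/5$ exponent. But as you yourself note, the two ``hard parts'' you flag---rounding the length-constrained LP while respecting per-pair budgets $\Delta(s,t)$, and proving the quantitative dichotomy that makes the exponents line up---are precisely where the substance of~\cite{CDKL} lies, and your proposal stops short of carrying them out. So what you have written is a plausible plan, not a proof; in the context of the present paper, no proof is required.
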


Observe that if we set $ P $ to be all ordered pairs of distinct vertices in $ G $ and fix $ \Delta(s,t) = d $ for all pairs $ (s,t) \in P $, we obtain the \DCSS problem with parameter $ d $, and hence the above approximation also applies to this case.

\begin{corollary}\label{cor:ps-approx-alg}
    For any constant $ \epsilon > 0 $, there exists a $ O(n^{3/5+\epsilon}) $-approximation algorithm for the \DCSS problem.
\end{corollary}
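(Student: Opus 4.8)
The plan is to recognize \DCSS as exactly the special case of \emph{Pairwise Spanner} in which $P$ consists of all ordered pairs of distinct vertices and the distance budget is the uniform constant $d$, and then to invoke Theorem~\ref{thm:pairwisespanner} as a black box. This is precisely the correspondence sketched in the paragraph preceding the statement; the proof only needs to make it formal.

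First I would take an instance $(G,d)$ of \DCSS (recall $d\ge D_G$, so the instance is feasible, $G$ itself being a solution) and form the Pairwise Spanner instance on the \emph{same} graph $G$ by setting $P := \{(s,t)\in V(G)\times V(G) : s\neq t\}$ and $\Delta(s,t):=d$ for every $(s,t)\in P$. The one point to verify is that the feasible solutions of the two instances coincide and have equal cost: a subgraph $H\subseteq G$ satisfies $d_H(s,t)\le \Delta(s,t)=d$ for all $(s,t)\in P$ exactly when $D_H\le d$, which in turn forces $H$ to be connected and to contain every vertex of $G$, i.e.\ to be a spanning subgraph of diameter at most $d$; the converse is immediate. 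Since the objective $|E(H)|$ is literally the same function in both problems, the two optimum values agree.

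Consequently, running the $O(n^{3/5+\epsilon})$-approximation algorithm of Theorem~\ref{thm:pairwisespanner} on the constructed instance returns, in polynomial time, a subgraph $H$ with $D_H\le d$ and $|E(H)| \le O(n^{3/5+\epsilon})\,|E(H^*)|$, where $H^*$ is an optimal \DCSS solution and $n=|V(G)|$; this is exactly an $O(n^{3/5+\epsilon})$-approximation for \DCSS. I do not expect a genuine obstacle here. The only mild care points are that Pairwise Spanner is stated for possibly directed graphs, which is why we must take $P$ to be all \emph{ordered} pairs (harmless for undirected $G$, where distances are symmetric), and that one should explicitly record that feasibility is never an issue since $d\ge D_G$ guarantees that $G$ itself, hence the Pairwise Spanner instance, is feasible.
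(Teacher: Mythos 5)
Your proposal is correct and is essentially identical to the paper's own argument: the paper likewise observes that taking $P$ to be all ordered pairs of distinct vertices with $\Delta(s,t)=d$ turns the Pairwise Spanner instance into exactly the \DCSS instance, and then applies Theorem~\ref{thm:pairwisespanner} as a black box. The extra care you take in checking feasibility and the coincidence of solution sets is fine but adds nothing beyond what the paper treats as immediate.
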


\paragraph*{Additive $+b$-spanner problem.}
An \emph{additive $+b$-spanner} (or simply $+b$-spanner) is a spanning subgraph that preserves all distances between vertices up to an additive stretch. More formally, a spanning subgraph $H\subseteq G$ is an additive $+b$-spanner  if $d_H(u,v)\leq d_G(u,v)+b$ for all $u,v\in V$. Additive spanners are widely studied in the literature; see the survey~\cite{surveyspanners} and the references therein. The following theorem summarizes several results from the literature on additive spanners.
\begin{theorem}[\cite{surveyspanners}, Section 8]%\label{thm:addspanner}
Let $G$ be any graph with $n$ vertices,
    \begin{enumerate}[i)]
        \item \cite{span-2}, $G$ admits a +2-spanner with $O(n^{3/2})$ edges.
        \item \cite{span-4}, $G$ admits a +4-spanner with $O(n^{7/5})$ edges.
        \item \cite{span-6}, $G$ admits a +6-spanner with $O(n^{4/3})$ edges.
    \end{enumerate}
\end{theorem}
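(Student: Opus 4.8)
The plan is to present a unified clustering-based construction, giving full details for part~($i$) and indicating how additional clustering levels yield parts~($ii$) and~($iii$). Fix a degree threshold $\beta$. Call a vertex \emph{light} if $\deg_G(v)<\beta$ and \emph{heavy} otherwise. Build the spanner $H$ in two parts: (a) add to $H$ every edge incident to a light vertex, which contributes at most $n\beta$ edges; (b) handle the heavy vertices via clustering.

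For part~($i$), choose a set $S\subseteq V$ that \emph{dominates} every heavy vertex (every heavy vertex has a neighbour in $S$) with $|S|=O(n/\beta)$; such a set exists by a standard greedy/sampling argument, at the possible cost of a logarithmic factor that a more careful dominating-set construction removes. Then add to $H$, for each heavy $v$, one edge joining $v$ to a fixed neighbour $c(v)\in S$ ($O(n)$ edges), and add a BFS tree of $G$ rooted at each $s\in S$ ($O(|S|\cdot n)=O(n^2/\beta)$ edges). Setting $\beta=\sqrt n$ balances the two parts and gives $|E(H)|=O(n^{3/2})$. For the $+2$ guarantee, let $P=v_0v_1\cdots v_k$ be a shortest $u$--$v$ path in $G$. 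If every edge of $P$ lies in $H$ we are done; otherwise some edge $\{v_i,v_{i+1}\}\notin H$, which forces both endpoints to be heavy. Let $v_i$ be the first heavy vertex on $P$. Each of $v_0,\dots,v_{i-1}$ is light, so all edges $v_0v_1,\dots,v_{i-1}v_i$ lie in $H$ and $d_H(u,v_i)=i$. Since $\{v_i,c(v_i)\}\in H$ and the BFS tree rooted at $s:=c(v_i)$ lies in $H$, we get $d_H(s,v)=d_G(s,v)\le d_G(s,v_i)+d_G(v_i,v)\le 1+(k-i)$, hence $d_H(u,v)\le i+1+1+(k-i)=d_G(u,v)+2$.

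For parts~($ii$) and~($iii$) the plan is to iterate this idea over a hierarchy of clusterings at geometrically spaced degree thresholds $\beta_1<\beta_2<\cdots$: the dense clusters formed around low-threshold centres are joined to one another through a sparse set of inter-cluster edges, while the larger, sparser clusters receive a cheaper connecting structure; balancing the number of clusters at each level against the per-level edge cost yields $O(n^{7/5})$ with two extra levels and $O(n^{4/3})$ with three. I expect the main obstacle to be the correctness analysis for these multi-level constructions: one must track a shortest path as it is rerouted through successive cluster centres and bound the accumulated additive error by exactly $+4$ and $+6$, which requires a careful charging argument over the sequence of clusters the path visits. These are precisely the constructions of~\cite{span-4} and~\cite{span-6}, so a complete treatment would reproduce their analyses; for our purposes the statement is invoked only as a black box, so citing them suffices.
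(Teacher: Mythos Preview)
The paper does not prove this theorem at all: it is stated purely as a citation of known results from the additive-spanner literature and is used as a black box to derive Corollary~\ref{cor:addspanner}. Your proposal therefore goes beyond what the paper does.

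Your sketch for part~($i$) is correct and is exactly the classical Aingworth--Chekuri--Indyk--Motwani construction: threshold the degrees, keep all edges touching light vertices, dominate the heavy ones by a small set $S$, and hang a BFS tree from each $s\in S$. The $+2$ argument via the first heavy vertex on a shortest path is the standard one and is sound.

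For parts~($ii$) and~($iii$) your description (``two extra levels'', ``three extra levels'') is only a rough caricature of the actual constructions in~\cite{span-4} and~\cite{span-6}; those papers use path-buying and more delicate clustering arguments rather than a simple iteration of the $+2$ scheme. But since you explicitly fall back on citing them, and since the paper itself does nothing more than cite, this is not a gap relative to the paper's treatment. Your final sentence (``citing them suffices'') already matches exactly what the paper does.
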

Observe that, by definition, a $+b$-spanner of a graph $G$ is a feasible solution to \DCSS on input $(G,d)$, when $d\geq D_G+b$. Therefore, the previous theorem, along with the observation that an optimal solution for \DCSS has size at least $n-1$, implies the following corollary.

\begin{corollary}\label{cor:addspanner}
   Let $(G,d)$ be an instance of \DCSS.
     \begin{enumerate}[i)]
     \item If $d\geq D_G+2$ then there exists an $O(n^{1/2})$-approximation algorithm for \DCSS.
     \item If $d\geq D_G+4$ then there exists an $O(n^{2/5})$-approximation algorithm for \DCSS.
     \item If $d\geq D_G+6$ then there exists an $O(n^{1/3})$-approximation algorithm for \DCSS.
     \end{enumerate} 
\end{corollary}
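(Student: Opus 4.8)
The plan is to derive all three parts directly from the additive spanner constructions stated just above, since these come with polynomial-time algorithms. For part ($i$), given an instance $(G,d)$ with $d \geq D_G + 2$, I would run the $+2$-spanner construction of~\cite{span-2} on $G$ to obtain, in polynomial time, a spanning subgraph $H$ with $O(n^{3/2})$ edges. For parts ($ii$) and ($iii$) I would do the same with the $+4$-spanner construction of~\cite{span-4} (yielding $O(n^{7/5})$ edges) and the $+6$-spanner construction of~\cite{span-6} (yielding $O(n^{4/3})$ edges). Note that none of these algorithms needs to know $D_G$ or $d$; the spanner is a property of $G$ alone.

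Next I would check feasibility and bound the ratio. If $H$ is a $+b$-spanner of $G$ with $d \geq D_G + b$, then for every pair $u,v \in V$ we have $d_H(u,v) \leq d_G(u,v) + b \leq D_G + b \leq d$, so $D_H \leq d$, and $H$ is spanning by definition; hence $H$ is a feasible solution to \DCSS on $(G,d)$. For the approximation factor, observe that any feasible solution $H^*$ to \DCSS on $(G,d)$ has finite diameter and contains all $n$ vertices, hence is connected and has $|E(H^*)| \geq n-1$ edges. Combining the edge bounds with this lower bound on the optimum gives $|E(H)|/|E(H^*)| = O(n^{3/2})/(n-1) = O(n^{1/2})$ for ($i$), $O(n^{7/5})/(n-1) = O(n^{2/5})$ for ($ii$), and $O(n^{4/3})/(n-1) = O(n^{1/3})$ for ($iii$), as claimed.

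There is no genuine obstacle here; the corollary is essentially an immediate consequence of the cited spanner results once one pairs them with the trivial $n-1$ lower bound on the optimum, which is exactly the device that turns an absolute edge-count guarantee into a multiplicative approximation ratio. The only two points worth stating with care are (a) that the spanner results in~\cite{span-2,span-4,span-6} are \emph{constructive} in polynomial time (standard in that line of work), so the resulting procedure is genuinely an algorithm, and (b) that the hidden constants in the $O(\cdot)$ bounds on spanner size are absolute, independent of the input graph, so that the ratio is uniformly $O(n^{1/2})$, $O(n^{2/5})$, $O(n^{1/3})$ respectively.
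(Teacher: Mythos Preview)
Your proposal is correct and follows exactly the same route as the paper: use the $+b$-spanner constructions as feasible \DCSS solutions when $d\geq D_G+b$, and compare their edge counts against the trivial $n-1$ lower bound on the optimum. The paper states this even more tersely, as a one-line observation preceding the corollary.
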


\section{Approximation Algorithms for \MAL}\label{sec:alg}

In this section we introduce our approximation algorithms for \MAL. We recall that $D_G\leq a\leq 2D_G +2$.
The following theorem provides our first set of results for the cases in which $a$ is sufficiently large.
\begin{theorem}\label{thm:malradiusdiameter}
    Let $ (G, a) $ denote an instance of \MAL.
    \begin{enumerate}[i)]
        \item If $ a \geq 2R_G $, then we can compute a solution $ \lambda $ for \MAL in polynomial time such that $ |\lambda| \leq |\lambda^*| + 2 $.
        \item If $ a \geq 2R_G + 1 $, then we can compute a solution $ \lambda $ for \MAL in polynomial time such that $ |\lambda| \leq |\lambda^*| + 1 $.
        \item If $ a \geq 2D_G + 2 $, then we can compute a solution $ \lambda $ for \MAL in polynomial time such that $ |\lambda| = |\lambda^*| $.
    \end{enumerate}
\end{theorem}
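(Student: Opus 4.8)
The plan has three ingredients: (a) a polynomial‑time construction that, given a center of $G$, yields a feasible \MAL labeling with $2(n-1)$ labels whenever $a\ge 2R_G$ and with $2(n-1)-1$ labels whenever $a\ge 2R_G+1$; (b) a lower bound $|\lambda^*|\ge 2(n-1)-2$ valid for every feasible instance, which together with (a) gives $(i)$ and $(ii)$; and (c) for $(iii)$, an appeal to the exact polynomial‑time algorithm of~\cite{KMMS} available when $a\ge 2D_G+2$.

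\emph{The construction.} Compute a center $c$ (a vertex of minimum eccentricity, found in polynomial time) and a shortest‑path tree $T$ rooted at $c$; by definition its depth equals $R:=R_G$. For the edge joining a depth‑$(i-1)$ vertex to its depth‑$i$ child, set $\lambda(e)=\{\,R-i+1,\ R+i\,\}$, viewing $R-i+1$ as an \emph{upward} and $R+i$ as a \emph{downward} label. For an ordered pair $(u,v)$, let $w$ be the lowest common ancestor of $u$ and $v$ in $T$; the $u$–$v$ path that climbs from $u$ to $w$ and descends from $w$ to $v$ uses strictly increasing upward labels (all $\le R$) followed by strictly increasing downward labels (all $>R$), the largest being $R+\mathrm{depth}(v)\le 2R\le a$. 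Hence $(G,\lambda)$ is temporally connected and $|\lambda|=2(n-1)$, which proves $(i)$ once the lower bound below is in place. When $a\ge 2R+1$ there is a spare value between the two blocks: shift the downward labels to $R+2,\dots,2R+1$ so that the value $R+1$ is used by no edge, and then replace the two labels of a single root edge $cc_1$ by the single label $R+1$. One checks that siblings of $c_1$ still reach it ($c_1\to c$ at time $R+1$, then $c\to c_j$ at time $R+2$), that $c_1$ and its subtree are reached through $c$ at time $R+1$, and that no other pair is affected; this saves exactly one label — and no more, since at most one root edge can be treated this way, which is all the lower bound permits — giving $2(n-1)-1$ and hence $(ii)$.

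\emph{The lower bound, which I expect to be the main obstacle.} I would prove that every feasible labeling $\lambda$ of a connected graph on $n$ vertices has $|\lambda|\ge 2(n-1)-2$. Let $H$ be the underlying graph and $S$ its set of edges carrying exactly one label; since $|\lambda|\ge 2|E(H)|-|S|$, it suffices to show $|S|\le 2\bigl(|E(H)|-n+1\bigr)+2$, i.e.\ that the number of single‑labeled edges exceeds twice the cyclomatic number of $H$ by at most $2$. The delicate case is a single‑labeled \emph{bridge} $e=\{u_1,u_2\}$ with label $t$: every pair separated by $e$ must cross it at time exactly $t$, which forces each side $H_i$ to reach $u_i$ using only labels $<t$ and to be reached from $u_i$ using only labels $>t$; propagating this time‑split across another bridge of $H_i$ would require that bridge's unique label to be simultaneously $<t$ and $>t$, a contradiction, so single‑labeled bridges are essentially isolated. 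Single‑labeled edges that lie on cycles are then charged to the independent cycles of $H$. Combining the two yields the bound; the part requiring the most care is making the charging precise and controlling the interaction of several single‑labeled edges.

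\emph{Part $(iii)$.} When $a\ge 2D_G+2$ the age bound is loose enough never to be the binding constraint, so an optimal labeling can be produced directly by the polynomial‑time algorithm of Klobas et al.~\cite{KMMS} for this regime; thus $|\lambda|=|\lambda^*|$.
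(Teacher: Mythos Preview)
Your constructions for $(i)$ and $(ii)$ are correct and essentially identical to the paper's: a shortest-path tree at a center with two labels per edge, and the one-label saving on a single root edge after shifting the downward block by one. (Your parenthetical ``siblings of $c_1$ still reach it ($c_1\to c$ \ldots $c\to c_j$)'' actually describes the reverse direction, but both directions check out.)

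Where you diverge is in the lower bound. The paper does not prove $|\lambda^*|\ge 2n-4$; it simply cites~\cite{KMMS}, which establishes both $|\lambda^*|\ge 2n-4$ in general and the sharper $|\lambda^*|\ge 2n-3$ when $G$ contains no $C_4$. Your sketch heads in a reasonable direction, but the charging of single-labeled non-bridge edges to independent cycles and the ``essentially isolated'' claim for single-labeled bridges are not made precise, and you yourself flag this as the main obstacle. Since the result is already available in~\cite{KMMS}, there is no need to redo it; citing it (including the $C_4$-free refinement) is both shorter and safer.

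Your treatment of $(iii)$ is where there is a real gap. There is no single black-box ``algorithm of~\cite{KMMS} for the regime $a\ge 2D_G+2$'' to appeal to. What \cite{KMMS} provides is a construction that, \emph{when $G$ contains a $C_4$}, achieves the optimum $2n-4$ labels, and the paper observes that its largest label is at most $2D_G+2$. When $G$ is $C_4$-free, that construction is unavailable; instead the paper uses the stronger lower bound $|\lambda^*|\ge 2n-3$ from~\cite{KMMS} together with $2D_G+2\ge 2R_G+1$ to conclude that the labeling from part~$(ii)$, which uses exactly $2n-3$ labels, is already optimal. So part~$(iii)$ is a two-case argument that reuses~$(ii)$, not a direct citation.
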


\begin{proof}
Let $r$ be the center of $G$. We observe that $|\lambda^*|\geq 2|V(G)|-4$ since at least $2|V(G)|-4$ labels are necessary for a temporal graph to be temporally connected, even without any constraints on the lifetime (see~\cite{KMMS}). Moreover, if $G$ does not contain $C_4$, then $|\lambda^*|\geq 2|V(G)|-3$.

\emph{i}) We use the following folklore algorithm (see, e.g.,~\cite{KMMS}) that assigns 2 labels to the edges of a shortest path tree $ T $ rooted at the center $ r $ of graph $ G $. Each edge $ e = \{ u, v \} $ of $ T $, where $ d_G(u, r) = d_G(v, r) + 1 $, takes 2 labels: $ R_G - d_G(u, r) + 1 $ and $ R_G + d_G(u, r) $. The largest label used is therefore $ 2R_G $, and hence $ L_\lambda = 2R_G $.
Moreover, each pair of vertices $ (x, y) $ is temporally connected in $ (G, \lambda) $ as there is a temporal path from $ x $ to $ y $ passing through the least common ancestor $ lst $ of $ x $ and $ y $ in $ T $. In fact, let us consider the unique path $ P $ connecting $ x $ to $ y $ in $ T $. If $ x = lst $, then $ P $ has labels $ R_G + d_G(x, r) + 1, R_G + d_G(x, r) + 2, \ldots, R_G + d_G(y, r) $, which are strictly increasing as $ d_G(y, r) \geq d_G(x, r) + 1 $. Analogously, if $ y = lst $, then $ P $ has labels $ R_G - d_G(y, r) + 1, R_G - d_G(y, r) + 2, \ldots, R_G - d_G(x, r) $, which are strictly increasing as $ d_G(x, r) \geq d_G(y, r) $.
If $ lst \neq x, y $, let $ lst_x $ and $ lst_y $ be the two vertices adjacent to $ lst $ in $ P $ that are closer to $ x $ and $ y $, respectively. Then $ P $ has labels $ R_G - d_G(x, r) + 1, \ldots, R_G - d_G(lst_x, r) + 1, R_G + d_G(lst_y, r), \ldots, R_G + d_G(y, r) $, which are strictly increasing as $ d_G(x, r) \geq d_G(lst_x, r) $, $ d_G(lst_x, r) = d_G(lst_y, r) \geq 1 $, and $ d_G(lst_y, r) \leq d_G(y, r) $.
Finally, since each edge of $ T $ has exactly 2 labels, we have $ |\lambda| = 2E(T) = 2|V(G)| - 2 $, and hence $ |\lambda| \leq |\lambda^*| + 2 $.

\emph{ii}) We slightly modify the above algorithm as follows. Let $ r' $ be a vertex adjacent to $ r $ in a shortest path tree $ T $ rooted at $ r $, with $ r $ being the center of $ G $. The edge between $ r $ and $ r' $ takes a single label $ R_G + 1 $. Each edge $ e = \{ u, v \} $ in $ E(T) \setminus \{ \{ r, r' \} \} $, where $ d_G(u, r) = d_G(v, r) + 1 $, takes 2 labels: $ R_G - d_G(u, r) + 1 $ and $ R_G + d_G(u, r) + 1 $. Note that the labels in $ E(T) \setminus \{ \{ r, r' \} \} $ are similar to those used in the proof of (\emph{i}), with the only difference being that the second label is shifted by one time unit. Therefore, the largest label used is $ 2R_G + 1 $, and hence $ L_\lambda = 2R_G + 1 $. Moreover, the proof that each pair of vertices $ (x, y) $ is temporally connected is similar to the previous case (taking into account this difference). Finally, since $ E(T) \setminus \{ \{ r, r' \} \} = n - 2 $, the total number of labels used is $ |\lambda| = 2(|V(G)| - 2) + 1 = 2|V(G)| - 3 \leq |\lambda^*| + 1 $.

\emph{iii}) We distinguish between two cases: whether $ G $ contains $ C_4 $ or not. In the former case, we observe that the algorithm given in~\cite{KMMS} to temporally connect a graph containing $ C_4 $ assigns an optimal number of $ 2|V(G)| - 4 $ labels, where the largest label is no more than $ 2D_G + 2 $. In the latter case, the statement follows by observing that $ 2D_G + 2 \geq 2R_G + 1 $, and hence the labeling used to prove (\emph{ii}), which uses $ 2|V(G)| - 3 $ labels, is optimal.
\end{proof}

The next corollary collects the approximation results for the cases in which $a$ is slightly larger than $D_G$, which are derived from the relation between \MAL and \DCSS established in Theorem~\ref{thm:apx-mal-dss}.
\begin{corollary}\label{cor:approxmal-via-dss}
    Let $(G,a)$ denote an instance of \MAL.
     \begin{enumerate}[i)]
     \item If $a=2$, then there exists an $O(\log(n))$-approximation algorithm for \MAL.
     \item For any $\epsilon>0$, there exists an $O(D_G\cdot n^{3/5+\epsilon})$-approximation algorithm for \MAL.
     \item If $a\geq D_G+2$, then there exists an $O(D_G\cdot n^{1/2})$-approximation algorithm for \MAL.
     \item If $a\geq D_G+4$, then there exists an $O(D_G\cdot n^{2/5})$-approximation algorithm for \MAL.
     \item If $a\geq D_G+6$, then there exists an $O(D_G\cdot n^{1/3})$-approximation algorithm for \MAL.
     \end{enumerate}
\end{corollary}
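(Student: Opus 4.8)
The plan is to derive every item as a direct consequence of Theorem~\ref{thm:apx-mal-dss}-($ii$), combined with the \DCSS approximation results collected in Section~\ref{sec:dcss}. Given an instance $(G,a)$ of \MAL, recall that we work in the regime $D_G\leq a\leq 2D_G+2$, so that $a=O(|V(G)|)$ and $a=\Theta(D_G)$; moreover $a\geq D_G$ means that $(G,a)$ is simultaneously a feasible instance of \DCSS. Hence, whenever \DCSS admits an $\alpha$-approximation algorithm on $(G,a)$, Theorem~\ref{thm:apx-mal-dss}-($ii$) (applied with $b=a$) turns it into an $(\alpha a)=O(\alpha D_G)$-approximation algorithm for \MAL on $(G,a)$. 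The whole proof then reduces to plugging in the appropriate \DCSS approximation for each range of $a$.

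First I would handle item ($ii$), which is stated for every admissible $a$: apply the $O(n^{3/5+\epsilon})$-approximation for \DCSS from Corollary~\ref{cor:ps-approx-alg}, multiply by $a=O(D_G)$, and obtain the claimed $O(D_G\cdot n^{3/5+\epsilon})$ ratio for \MAL. Items ($iii$), ($iv$), and ($v$) follow identically from the additive-spanner bounds of Corollary~\ref{cor:addspanner}: if $a\geq D_G+2$ (respectively $a\geq D_G+4$, $a\geq D_G+6$) then \DCSS on $(G,a)$ has an $O(n^{1/2})$- (respectively $O(n^{2/5})$-, $O(n^{1/3})$-) approximation, and the factor $a=O(D_G)$ from Theorem~\ref{thm:apx-mal-dss}-($ii$) yields $O(D_G\cdot n^{1/2})$, $O(D_G\cdot n^{2/5})$, and $O(D_G\cdot n^{1/3})$, respectively.

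The one case requiring a little care is item ($i$): here $a=2$ together with feasibility ($a\geq D_G$) forces $D_G\in\{1,2\}$. If $D_G=2$, then $a=D_G=2$ and Corollary~\ref{cor:dodiskhanna} gives an $O(\log n)$-approximation for \DCSS, which multiplied by $a=2$ preserves the $O(\log n)$ ratio for \MAL. If instead $D_G=1$, then $G$ is a clique, so $R_G=1$ and $a=2=2R_G$, and Theorem~\ref{thm:malradiusdiameter}-($i$) directly outputs a labeling with at most two labels more than the optimum; since $|\lambda^*|\geq 2|V(G)|-4$, this is an $O(1)$-approximation. Computing $D_G$ in polynomial time and running whichever algorithm applies establishes item ($i$). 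I do not expect any real obstacle beyond this bookkeeping: the only point needing explicit attention is that $a=2$ does not determine $D_G$, so the clique case $D_G=1$ lies outside the scope of Corollary~\ref{cor:dodiskhanna} and must be routed through Theorem~\ref{thm:malradiusdiameter}-($i$) instead.
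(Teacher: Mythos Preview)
Your proposal is correct and follows essentially the same approach as the paper: invoke the \DCSS approximations from Section~\ref{sec:dcss} and multiply by $a=\Theta(D_G)$ via Theorem~\ref{thm:apx-mal-dss}-($ii$). The only difference is that for item~($i$) you explicitly separate the $D_G=1$ clique case and route it through Theorem~\ref{thm:malradiusdiameter}-($i$), whereas the paper's proof simply combines Corollary~\ref{cor:dodiskhanna} with Theorem~\ref{thm:apx-mal-dss} and relegates the $D_G=1$ case to a footnote elsewhere; your treatment is slightly more self-contained.
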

\begin{proof}
i) By combining Corollary~\ref{cor:dodiskhanna} with Theorem~\ref{thm:apx-mal-dss}, we obtain an $O(a \log(n))$-approximation algorithm for \MAL. The statement follows by observing that $a$ is a constant.

ii) By combining Corollary~\ref{cor:ps-approx-alg} with Theorem~\ref{thm:apx-mal-dss}, we obtain an $O(a n^{3/5+\epsilon})$-approximation algorithm for \MAL. The statement follows from the observation that $D_G\leq a \leq 2D_G+2$. 

iii), iv), v) By combining Corollary~\ref{cor:addspanner} with Theorem~\ref{thm:apx-mal-dss}, we obtain approximation factors of $O(a n^{1/2})$, $O(a n^{2/5})$, and $O(a n^{1/3})$, respectively, for \MAL. The claim again follows from the observation that $D_G \leq a \leq 2D_G + 2$.
\end{proof}

Observe that there exists a trivial $O(n)$-approximation algorithm which consists of computing, for each node $s$, a shortest path tree rooted at $s$ and assigning label $d_G(s,v)$ to each edge $\{u,v\}$ in the tree, assuming that $d_G(s,v)=d_G(s,u)+1$. The bounds in Corollary~\ref{cor:approxmal-via-dss} outperforms this trivial bound only when $D_G \cdot n^\alpha \in o(n^2)$. In particular, the bounds in ii)--v) outperform the trivial bound when $D_G\in o(n^{2/5})$, $D_G\in o(n^{1/2})$, $D_G\in o(n^{3/5})$, and $D_G\in o(n^{2/3})$, respectively.

%ii) is nontrivial only when $D_G\in o(n^{2/5})$, iii) is nontrivial only when $D_G\in o(n^{1/2})$, iv) is nontrivial only when $D_G\in o(n^{3/5})$, and v) is nontrivial only when $D_G\in o(n^{2/3})$.

In the next theorem, we present our main algorithmic results, which consist of approximation algorithms for \MAL that do not exploit the relationship with \DCSS and whose approximation ratios do not depend linearly on $ D_G $.

\begin{restatable}{theorem}{malapprox}\label{thm:mal-approx-constr-a}
    Let $ (G, a) $ be an instance of \MAL.
    \begin{enumerate}[i)]
        \item If $ a \geq \lceil 3/2 \cdot D_G \rceil $, then there exists an $ O(\sqrt{n \log n}) $-approximation algorithm for \MAL.
        \item If $ a \geq \lceil 5/3 \cdot D_G \rceil $, then there exists an $ O(\sqrt[3]{D_G n \log^2 n}) $-approximation algorithm for \MAL.
    \end{enumerate}
\end{restatable}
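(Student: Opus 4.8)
The plan is to build, in polynomial time, a feasible labeling with $\tilde{O}(n^{3/2})$ labels for part (i) (resp.\ $\tilde{O}(D_G^{1/3}n^{4/3})$ labels for part (ii)), and then divide by the universal lower bound $|\lambda^*|\geq 2n-4$ on the optimum (every temporally connected graph needs at least this many labels, as already used in Theorem~\ref{thm:malradiusdiameter}); this turns those size bounds into the claimed $O(\sqrt{n\log n})$ and $O(\sqrt[3]{D_G n\log^2 n})$ ratios. First I would reduce to the interesting regime: if $a\geq 2R_G$ then Theorem~\ref{thm:malradiusdiameter} already yields an additively optimal labeling, so I may assume $\lceil 3/2\cdot D_G\rceil\leq a<2R_G$, which in particular forces $R_G>\tfrac34 D_G$, so the radius is within a constant of the diameter and the tight budget $a$ is the source of the difficulty.

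For part (i) I would adapt the classical $+2$ additive‑spanner construction to the temporal setting. Fix a degree threshold $\tau$ and call $v$ \emph{light} if $\deg_G(v)<\tau$. Keep (a) every edge incident to a light vertex, and (b) one shortest‑path tree $T_s$ for each vertex $s$ of a random sample $S$ of size $\Theta((n/\tau)\log n)$, which with high probability contains a neighbour of every heavy vertex. The new ingredient for \MAL is to \emph{time‑label} this sparse subgraph inside the window $[a]$: on each $T_s$, truncated at depth $\approx\lceil D_G/2\rceil$, I would use an \emph{interval labeling} in which the tree edge at depth $j$ carries one ``ascending'' label that is decreasing in $j$ and one ``descending'' label that is increasing in $j$, so that any walk that goes up at most $\lceil D_G/2\rceil$ steps to $s$ and then down at most $\lceil D_G/2\rceil$ steps is strictly time‑increasing and uses only labels in $\{1,\dots,2\lceil D_G/2\rceil\}\subseteq[a]$. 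Correctness then requires routing each pair $(u,v)$ through a hub \emph{close to the midpoint} of a shortest $u$--$v$ path: if such a path has a heavy vertex within $\approx D_G/4$ of its midpoint, its sampled neighbour is within $\lceil D_G/2\rceil$ of both endpoints and the corresponding $T_s$ supplies the temporal path, and the slack $\lceil D_G/2\rceil$ in the hypothesis $a\geq\lceil 3/2\cdot D_G\rceil$ is exactly what pays for the two half‑legs; pairs all of whose shortest paths have their central portion light have that central segment already present in the subgraph (all those edges were kept in (a)), and a logarithmic‑depth recursion on the two flanking segments, whose lengths shrink geometrically, handles them at polylogarithmic cost. Counting labels, (a) contributes $O(n\tau)$ edges and (b) contributes $O(|S|\cdot n)=\tilde{O}(n^2/\tau)$ edges, each carrying $O(1)$ labels; balancing at $\tau=\Theta(\sqrt{n\log n})$ gives $\tilde{O}(n^{3/2})$ labels.

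For part (ii) I would pass to a \emph{two‑level} hierarchy, mirroring the $+4/+6$ additive‑spanner constructions: cluster the heavy vertices into roughly $n^{1/3}$ clusters of constant radius with one shortest‑path tree per cluster centre, together with a second, sparser sample of ``superhubs'' connecting the cluster centres. A pair is then routed $u\to$ cluster centre $\to$ superhub $\to$ cluster centre $\to v$, a walk that, once the distance between the two cluster centres is charged, can have up to $\approx\tfrac53 D_G$ edges; this is precisely why the stronger hypothesis $a\geq\lceil 5/3\cdot D_G\rceil$ is needed, so that the interval labels of the (now deeper) cluster‑centre trees still fit inside $[a]$. The label count splits into three terms, of orders $\tilde{O}(n\tau_1)$, $\tilde{O}(n^2/(\tau_1\tau_2))$, and $\tilde{O}(\tau_2 D_G n)$; the last term comes from the cluster‑centre trees, whose depth is $\Theta(D_G)$ and which therefore need $\Theta(D_G)$ labels on some edges to accommodate a variable entry depth, and it is the origin of the extra factor $D_G$. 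Balancing the three terms (which forces $\tau_2=\Theta((n/D_G^2)^{1/3})$) yields $\tilde{O}(D_G^{1/3}n^{4/3})$ labels, and the two rounds of sampling account for the $\log^2$ in the final $O(\sqrt[3]{D_G n\log^2 n})$.

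The step I expect to be the main obstacle is reconciling sparsity with the tight lifetime. A classical additive spanner only guarantees that \emph{some} heavy vertex lies on each shortest path, whereas a temporal path must be monotone and fit inside $[a]$, which forces the hub used for a given pair to lie near the \emph{midpoint} of the path rather than anywhere on it. Guaranteeing this ``balanced midpoint'' property while keeping the subgraph sparse — via the recursion on the flanking segments, or via a clustering that bounds how far a light segment can be from a hub — and then checking that all the interval labels across all hub trees simultaneously respect the single window $[a]$, is the delicate part; once that structure is in place, pinning down the exact polylogarithmic and $D_G$ overheads is routine bookkeeping.
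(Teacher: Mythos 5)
Your overall frame is the same as the paper's: construct a feasible labeling of size $\tilde{O}(n^{3/2})$ (resp.\ $\tilde{O}(D_G^{1/3}n^{4/3})$) and divide by the universal lower bound $|\lambda^*|\geq 2n-4$, using hub trees with an ``up/down'' interval labeling that fits in the window $[a]$. But the way you obtain the hubs, and the way you route, diverge from the paper, and your version has a genuine gap exactly where you suspected. The problem is the case of pairs whose shortest paths have a \emph{light} central portion. Keeping every edge incident to a low-degree vertex gives a sparse static subgraph, but \MAL needs labels, and a single light edge occurs at many different positions along many different shortest paths; to make all of those central segments simultaneously temporally traversable, composable with the hub-tree labels, and confined to $[a]$, you would in general need up to $\Theta(a)=\Theta(D_G)$ labels on such an edge. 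This is precisely the factor-$a$ loss between \DCSS and \MAL formalized in Theorem~\ref{thm:apx-mal-dss}, i.e.\ exactly the obstruction this theorem is designed to avoid, and the appeal to ``a logarithmic-depth recursion on the flanking segments at polylogarithmic cost'' does not resolve it. A second, smaller gap: degree-based sampling only guarantees a sampled vertex adjacent to \emph{some} heavy vertex on the path (the classical $+2$-spanner argument), not one near the midpoint; your balanced-midpoint property is asserted, not established.

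The paper sidesteps both issues by never routing through midpoints of shortest paths. Via Lemma~\ref{lem:dom} (a hitting set of the $n_2$ nearest neighborhoods of every vertex, in the style of Choudhary--Gold), it obtains a set $S$ of size $O(\sqrt{n\log n})$ with $d_G(v,S)\leq\lceil D_G/2\rceil$ for all $v$. It then labels a \emph{full-depth} shortest-path tree at every $s\in S$ so that every vertex reaches $s$ by time $D_G$ --- the inbound leg is allowed to be as long as $D_G$, so no midpoint condition is needed --- and a single shortest-path forest rooted at $S$ with labels $D_G+1,\dots,\lceil 3D_G/2\rceil$ for the outbound leg of length at most $\lceil D_G/2\rceil$. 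Every edge of the construction gets $O(1)$ labels, giving $O(n\sqrt{n\log n})$ in total. For part~(ii) the dominating set-pair dichotomy with $\delta=1/3$ either yields a $\lceil 2D_G/3\rceil$-dominating set of size $O(\sqrt[3]{D_Gn\log^2 n})$ (handled as above), or a $\lfloor D_G/3\rfloor$-dominating set $S_1$ of size $O(\sqrt[3]{n^2\log n/D_G})$; in the latter case one adds a labeled shortest path between every pair of hubs in $S_1$, occupying the middle time window, and the $|S_1|^2\cdot D_G$ cost of these paths --- not deep cluster trees --- is the sole source of the $D_G^{1/3}$ factor. If you want to salvage your route, you would need to eliminate the light-edge component entirely and replace degree sampling by a domination guarantee of the above type.
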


Differently from Corollary~\ref{cor:approxmal-via-dss}, the bounds provided in Theorem~\ref{thm:mal-approx-constr-a} always outperform the trivial bound, regardless of the value of $D_G$. Moreover, the approximation guarantee of Theorem~\ref{thm:mal-approx-constr-a}~(i) matches or outperforms those of Corollary~\ref{cor:approxmal-via-dss}~(ii)-(v) when $ D_G \in \Omega(\sqrt{n^{1/3} \log n}) $, while Theorem~\ref{thm:mal-approx-constr-a}~(ii) achieves the same when $ D_G \in \Omega(\log n) $.

In order to prove Theorem~\ref{thm:mal-approx-constr-a}, we need to introduce some new concepts. 

The following definition is an adaptation to undirected graphs of Definition 2.2 from~\cite{ChoudharyGold}.

\begin{definition}\label{def:dsp}
    For a graph $G=(V,E)$ and a set-pair $(S_1,S_2)$, where $S_1,S_2\subseteq V$, we say that $(S_1,S_2)$ is a $\langle h_1,h_2\rangle$-dominating set-pair of size-bound $\langle n_1,n_2\rangle$ if $|S_1|=O(n_1)$, $|S_2|=O(n_2)$, and one of the following conditions holds.
    \begin{enumerate}
        \item For each $v\in V(G)$, $d_G(v,S_1)\leq h_1$,
        \item For each $v\in V(G)$, $d_G(v,S_2)\leq h_2$.
    \end{enumerate}
\end{definition}

Moreover, we need the following lemma regarding the well-known \emph{Hitting Set} (HS) problem. In the HS problem, we are given a universe $ U = \{u_1, \dots, u_n\} $ and a collection of $ N $ subsets $ \mathcal{C} = \{S_1, \dots, S_N\}$ , where each $ S_i \subseteq U $ for $i\in [N]$. The goal is to find a minimum-size set $ R \subseteq U $ such that $ R \cap S_i \neq \emptyset $ for every $ i \in [N]$.

\begin{lemma}[\cite{timothy}, Lemma 3.3]\label{lem:timothy}
Given an HS instance where $ U = \{u_1, \dots, u_n\} $ and $ \mathcal{C} = \{S_1, \dots, S_N\} $, with each subset $ S_i $ having size exactly $ \ell $, we can find a subset $ R \subseteq U $ of size $ O\left(\frac{n \log n}{\ell}\right) $ that hits all subsets in the collection, in $ O(N\ell) $ time.
\end{lemma}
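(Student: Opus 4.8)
The plan is to prove the lemma by the probabilistic method, constructing $R$ as a uniform random sample rather than by an iterative greedy rule, since the target running time $O(N\ell)$ only leaves room to read the incidence structure a constant number of times. First I would include each element $u\in U$ in $R$ independently with probability $p=\frac{\alpha\ln n}{\ell}$, where $\alpha$ is a constant to be fixed. The expected size is then $\mathbb{E}[|R|]=np=\frac{\alpha n\ln n}{\ell}$, which already matches the desired bound $O\!\left(\frac{n\log n}{\ell}\right)$ in expectation; a Markov (or Chernoff) bound gives $\Pr\!\left[|R|>\tfrac{4\alpha n\ln n}{\ell}\right]<\tfrac14$, so the sample is small with constant probability.

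The second, and main, step is to show that this same random $R$ hits every set with high probability. For a fixed $S_i$, since $|S_i|=\ell$ and the elements are sampled independently, $\Pr[R\cap S_i=\emptyset]=(1-p)^{\ell}\le e^{-p\ell}=e^{-\alpha\ln n}=n^{-\alpha}$. A union bound over the $N$ sets yields $\Pr[\exists\, i:\,R\cap S_i=\emptyset]\le N\,n^{-\alpha}$. Choosing $\alpha$ so that $N n^{-\alpha}<\tfrac14$ makes the two bad events (too large, or some set missed) each occur with probability below $\tfrac14$, so by a further union bound a sample that is simultaneously a hitting set and of size $O\!\left(\frac{n\log n}{\ell}\right)$ exists, which proves the size claim. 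The only delicate point is reconciling the $\log N$ that naturally appears with the stated $\log n$: in every regime where the lemma is applied the number of constraints is polynomial, $N=n^{O(1)}$, so a constant $\alpha$ suffices and $\log N=\Theta(\log n)$; equivalently one may state the sampling rate as $p=\frac{\alpha\ln N}{\ell}$ and absorb the polynomial bound on $N$ into the constant.

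Finally, for the time bound, note that drawing the sample takes $O(n)$ time and that verifying both properties — scanning all $N\ell$ incidences to check that each $S_i$ is hit, and counting $|R|$ — takes $O(N\ell)$ time, which is proportional to the input size. Since a single sample succeeds with probability greater than $\tfrac12$, repeating the sample-and-verify loop succeeds after $O(1)$ trials in expectation, giving total expected time $O(N\ell)$; if a deterministic guarantee is preferred, one can derandomize by the method of conditional expectations at the cost of a larger running time. I expect the main obstacle to be the clean simultaneous control of the two events together with the $\log N$-versus-$\log n$ accounting, both of which are resolved by the polynomial bound on $N$ and the independence of the coordinate-wise sampling.
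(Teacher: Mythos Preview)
The paper does not prove this lemma; it is quoted from~\cite{timothy} and used as a black box in the proof of Lemma~\ref{lem:dom}. Your probabilistic argument is a correct and standard derivation, with exactly the caveat you already flag: the union bound over the $N$ sets naturally yields a sample of size $O\!\left(\frac{n\log N}{\ell}\right)$, and one needs $N=n^{O(1)}$ to rewrite this as $\log n$. In the paper's only application one has $N=n$ (one set $N_v$ per vertex), so the discrepancy is immaterial. For comparison, the proof in the cited source is the deterministic greedy algorithm---iteratively pick the element lying in the most remaining sets and discard those sets---which also gives $O\!\left(\frac{n\log N}{\ell}\right)$ and meets the $O(N\ell)$ running time in the worst case by maintaining bucketed element counts; your randomized route trades that worst-case guarantee for an expected one but is otherwise equally valid and arguably simpler.
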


The next lemma is similar to Lemma~2.3 in~\cite{ChoudharyGold}. The original lemma gives a randomized linear-time algorithm, while in the following, we state the deterministic version, which requires a higher polynomial running time. For the sake of completeness, we provide a complete proof of the lemma.
\begin{lemma}\label{lem:dom}
     Let $G=(V,E)$ be an unweighted graph on $n$ vertices. For any $\delta \in (0,1)$ and any integer $n_2\in [n]$, we can compute in polynomial-time a $\langle \lfloor \delta  D_G \rfloor , \lceil (1-\delta) D_G \rceil \rangle$-dominating set-pair of size-bound $\langle n_1,n_2\rangle$, where $n_1\leq \frac{n\log n}{n_2}$.
\end{lemma}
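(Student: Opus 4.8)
The goal is to construct, for a given $\delta\in(0,1)$ and target size $n_2\in[n]$, a $\langle\lfloor\delta D_G\rfloor,\lceil(1-\delta)D_G\rceil\rangle$-dominating set-pair $(S_1,S_2)$ with $|S_2|=O(n_2)$ and $|S_1|\le\frac{n\log n}{n_2}$. The natural approach is to pick $S_2$ greedily/randomly-but-derandomized so that it is a "net": either $S_2$ already $\lceil(1-\delta)D_G\rceil$-dominates all of $V$ — in which case we are done by taking $S_1$ to be anything small (even a single vertex, since the definition only requires \emph{one} of the two conditions to hold) — or, if not, the vertices that are \emph{far} from $S_2$ must be dominated within distance $\lfloor\delta D_G\rfloor$ by $S_1$, and we show such vertices can be covered by a small hitting-set argument.

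\textbf{Key steps.} First I would fix the radius parameter $h_2:=\lceil(1-\delta)D_G\rceil$ and, for each vertex $v\in V$, consider its ball $B(v):=\{w\in V: d_G(v,w)\le h_2\}$. Call $v$ \emph{heavy} if $|B(v)|\ge \frac{n}{n_2}$ (intuitively, such a $v$ can be efficiently "hit"), and \emph{light} otherwise. Step 1: handle heavy vertices by a hitting-set computation. Restrict attention to the balls $B(v)$ of heavy vertices, truncating each to exactly $\ell:=\lceil n/n_2\rceil$ elements, and apply Lemma~\ref{lem:timothy} to obtain a set $S_2$ of size $O\!\left(\frac{n\log n}{\ell}\right)=O(n_2\log n)$ that meets every such (truncated) ball. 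Then every heavy $v$ has a vertex of $S_2$ within distance $h_2$, i.e. $d_G(v,S_2)\le h_2$. (A mild technical point: to get $|S_2|=O(n_2)$ rather than $O(n_2\log n)$ one should instead set $\ell:=\lceil\frac{n\log n}{n_2}\rceil$ so that Lemma~\ref{lem:timothy} yields $|S_2|=O(n_2)$, and correspondingly define "heavy" as $|B(v)|\ge\ell$; I will use this calibration.) Step 2: handle light vertices with $S_1$. A light vertex $v$ has $|B(v)|<\ell=O\!\big(\frac{n\log n}{n_2}\big)$; in particular the set $L$ of all light vertices that are not already $h_2$-dominated by $S_2$ is a candidate for $S_1$. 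The crucial observation is that the light vertices form a set of size at most $\frac{n\log n}{n_2}$ — indeed, if there were more than $\ell$ light vertices one could not... — hmm, this is not quite immediate, so instead I would argue: the balls $\{B(v): v\text{ light}\}$ each have size $<\ell$, and a maximal packing of pairwise-disjoint such balls has at most $n/1 = n$... This is the step that needs care; see below.

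\textbf{The cleaner route.} Rather than the packing argument, the robust way is: run the hitting-set algorithm of Lemma~\ref{lem:timothy} on the family $\{B(v):v\in V\}$ after truncating each ball to size exactly $\ell$ \emph{only for balls with $|B(v)|\ge\ell$}; for a ball with $|B(v)|<\ell$, we cannot force it to be hit, so we defer these vertices to $S_1$. Let $S_1$ be exactly the set of vertices $v$ with $|B(v)|<\ell$, i.e. $S_1=\{v: |B(v_{})|<\ell\}$ where the relevant ball radius for $S_1$ is $0$ — but the dominating-set-pair definition asks that \emph{each} $v\in V$ satisfy $d_G(v,S_1)\le\lfloor\delta D_G\rfloor$, not merely the light ones. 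So $S_1$ must dominate \emph{all} of $V$ within $\lfloor\delta D_G\rfloor$, which forces $S_1$ to be large unless the graph has small diameter. This reveals that the disjunction in Definition~\ref{def:dsp} should be read as: we output a pair such that condition 1 holds \emph{with this $S_1$} \emph{or} condition 2 holds \emph{with this $S_2$}; so it suffices to either (a) produce $S_2$ with $d_G(v,S_2)\le h_2$ for all $v$, or (b) produce $S_1$ with $d_G(v,S_1)\le\lfloor\delta D_G\rfloor$ for all $v$ and $|S_1|\le\frac{n\log n}{n_2}$. Thus I would branch: if every ball $B(v)$ has $|B(v)|\ge\ell$, the hitting set gives $S_2$ satisfying (a) with $|S_2|=O(n_2)$, and we pad $S_1$ arbitrarily to meet its size bound trivially. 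Otherwise, there exists $v_0$ with $|B_{h_2}(v_0)|<\ell$; then I claim $V=B_{D_G}(v_0)$ and, more usefully, the ball structure around $v_0$ lets us extract a small set that $\lfloor\delta D_G\rfloor$-dominates everything — specifically take $S_1$ to be a shortest-path-tree-based net at scale $\lfloor\delta D_G\rfloor$ centered appropriately; bounding $|S_1|$ by $\frac{n\log n}{n_2}$ uses $|B_{h_2}(v_0)|<\ell$ together with $h_2+\lfloor\delta D_G\rfloor\ge D_G$.

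\textbf{Main obstacle.} The technical heart — and the step I expect to be hardest — is precisely bounding $|S_1|$ in case (b): showing that once some ball of radius $\lceil(1-\delta)D_G\rceil$ is small (fewer than $\approx n\log n/n_2$ vertices), one can $\lfloor\delta D_G\rfloor$-dominate the \emph{entire} vertex set with only $O(n\log n/n_2)$ vertices. The intuition is that $D_G\le \lceil(1-\delta)D_G\rceil+\lfloor\delta D_G\rfloor$, so every vertex of $G$ lies within distance $\lfloor\delta D_G\rfloor$ of the small ball $B_{\lceil(1-\delta)D_G\rceil}(v_0)$; hence taking $S_1:=B_{\lceil(1-\delta)D_G\rceil}(v_0)$ itself works, and $|S_1|<\ell=O(n\log n/n_2)$. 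Verifying the arithmetic $D_G\le\lceil(1-\delta)D_G\rceil+\lfloor\delta D_G\rfloor$ (immediate, since $\lceil(1-\delta)D_G\rceil\ge(1-\delta)D_G$ and $\lfloor\delta D_G\rfloor\le\delta D_G$ gives $\ge D_G-$ nothing useful — one must instead use $\lceil(1-\delta)D_G\rceil+\lfloor\delta D_G\rfloor\ge(1-\delta)D_G+\delta D_G-1+1=D_G$ via $\lceil x\rceil+\lfloor y\rfloor\ge x+y$ when... actually $\lceil x\rceil\ge x$ and $\lfloor y\rfloor> y-1$ gives $>x+y-1=D_G-1$, hence $\ge D_G$ by integrality) is routine once set up correctly, and the running time is polynomial since computing all balls $B_{h_2}(v)$ takes $O(n\,(n+m))$ time via BFS and Lemma~\ref{lem:timothy} runs in $O(N\ell)$ time with $N=n$.
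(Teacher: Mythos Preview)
Your proposal wanders through a couple of false starts, but the approach you settle on in the ``cleaner route'' and ``main obstacle'' paragraphs is correct: set $h_2=\lceil(1-\delta)D_G\rceil$ and $\ell=\lceil n\log n/n_2\rceil$, and branch on whether every ball $B_{h_2}(v)$ has at least $\ell$ vertices. If so, truncate each ball to size $\ell$ and apply Lemma~\ref{lem:timothy} to obtain $S_2$ of size $O(n_2)$ meeting every ball, so Condition~2 holds. If not, take $S_1:=B_{h_2}(v_0)$ for a vertex $v_0$ with a small ball; then $|S_1|<\ell\le n\log n/n_2$, and since $\lceil(1-\delta)D_G\rceil+\lfloor\delta D_G\rfloor=D_G$ exactly (your hesitant arithmetic at the end does go through), every vertex lies within $\lfloor\delta D_G\rfloor$ of $S_1$, so Condition~1 holds.

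This is a genuinely different route from the paper. The paper works with \emph{size-based} neighborhoods $N_v$ (the $n_2$ nearest vertices to $v$) rather than your radius-based balls, and applies the hitting-set lemma on the other side: it takes $S_1$ to be a hitting set for $\{N_v\}_{v\in V}$, of size $O(n\log n/n_2)$, and sets $S_2:=N_w$ where $w$ is the vertex farthest from $S_1$. The disjunction is then checked post~hoc: if Condition~1 already holds for $S_1$, done; otherwise the radius-$\lfloor\delta D_G\rfloor$ ball $M_w$ around $w$ misses $S_1$, so $M_w\subsetneq N_w=S_2$ (since $S_1$ hits $N_w$), and walking down the SPT rooted at $w$ shows every vertex is within $D_G-\lfloor\delta D_G\rfloor=\lceil(1-\delta)D_G\rceil$ of $S_2$. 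The two arguments are dual --- you hit large radius-balls to build $S_2$, the paper hits fixed-size nearest-neighbor sets to build $S_1$ --- and your Case~2 is arguably more direct; the paper's version has the minor advantage that the family $\{N_v\}$ and hence both $S_1,S_2$ are constructed without reference to $\delta$ or $D_G$ and without a case split.
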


\begin{proof}
    Let $G=(V,E)$ be a graph, let $\delta \in (0,1) $, and let $ n_2 \in [n]$ be an integer.

    For every vertex $ v \in V $, we compute a shortest path tree $ T_v $ rooted at $v$, and let $ N_v $ denote the subset of $ V $ consisting of the $ n_2 $ vertices closest to $ v $ in $ T_v $, with ties broken arbitrarily. Specifically, $ N_v = \{ v_1, v_2, \dots, v_{n_2} \} $, where for every $v_i\in N_v, u\in V\setminus N_v$ it holds $d_{T_v} (v_i,v)\leq d_{T_v} (u,v)$.
    
    Let $ S_1 $ be the hitting-set of size $ O\left(\frac{n \log n}{n_2}\right) $, computed using the algorithm in Lemma~\ref{lem:timothy} on the collection of subsets $ \{ N_v \}_{v \in V} $.
    
    Let $ w $ be one of the farthest vertices from $ S_1 $, that is, $ d_G(S_1, w) \geq d_G(S_1, v) $ for all $ v \in V $. Fix $ S_2 = N_w $ and observe that the set-pair $ (S_1, S_2) $ has the correct size-bound $ \langle n_1, n_2 \rangle $, where $ n_1 $ satisfies the bound $ n_1 \leq \frac{n \log n}{n_2} $.

    It remains to prove that one of the two conditions of Definition~\ref{def:dsp} is satisfied. If for every $ v \in V $ we have that $ d_G(S_1, v) \leq \lfloor \delta D_G \rfloor $, then Condition~1 is satisfied and $ (S_1, S_2) $ is a dominating set-pair. Therefore, we may assume that Condition~1 is violated by at least one vertex, and in particular, $ d_G(S_1, w) > \lfloor \delta D_G \rfloor $ by the choice of $ w $.

    Let $ M_w $ denote the subset of vertices at distance at most $ \lfloor \delta D_G \rfloor $ from $ w $ in $ T_w $, i.e., $ M_w = \{ v \mid d_{T_w}(v, w) \leq \lfloor \delta D_G \rfloor \} $. Observe that $ M_w \cap S_1 = \emptyset $ because $ w $ is at distance greater than $ \lfloor \delta D_G \rfloor $ from $ S_1 $. Moreover, $ M_w \subsetneq N_w $, as otherwise $ N_w \subseteq M_w $ would imply that $ N_w \cap S_1 = \emptyset $, contradicting $ S_1 $ being a hitting-set. 

    In conclusion, since $ M_w \subset N_w = S_2 $ and the height of $ T_w $ is at most $ D_G $, we have that for all $ v \in V \setminus S_2 $, the distance is upper-bounded as $ d_G(v, S_2) \leq d_{T_w}(v, S_2) \leq D_G - \lfloor \delta D_G \rfloor = \lceil (1 - \delta) D_G \rceil $. Thus, Condition~2 of Definition~\ref{def:dsp} is satisfied.
\end{proof}

Before proving Theorem~\ref{thm:mal-approx-constr-a}, we provide an informal description of the underlying algorithm, in which we will ignore the number of used labels and other details.

In part~(i) of Theorem~\ref{thm:mal-approx-constr-a}, we set the parameters of Lemma~\ref{lem:dom} so as to ensure the existence of a set $S$ of size $O(\sqrt{n \log n})$, such that every vertex outside $S$ is reachable via a path of length at most $\lceil \frac{1}{2} \cdot D_G \rceil$. Next, for each vertex $v \in S$, we compute a shortest path tree rooted at $v$ and assign labels to its edges such that every vertex in the tree can reach the root $v$ through a temporal path whose last label is at most $D_G$. Consequently, after this step, every vertex in $G$ can temporally reach every vertex in $S$. The final step is to compute a shortest path forest (as guaranteed by Lemma~\ref{lem:dom}) rooted at $S$ that covers all vertices outside $S$ via temporal paths of length at most $\lceil \frac{1}{2} \cdot D_G \rceil$. We then assign labels to the edges of this forest using labels in the range $D_G + 1, D_G + 2, \dots, \lceil \frac{3}{2} D_G \rceil$, so that the forest can be traversed from the roots to the leaves. Finally, observe that any vertex can temporally reach all vertices in $S$, and from there, using the forest $F$, it can reach every vertex outside $S$. Thus, the resulting temporal graph is temporally connected.

In part~(ii) of Theorem~\ref{thm:mal-approx-constr-a}, we set the parameters of Lemma~\ref{lem:dom} so as to ensure the existence of either a set $S_1$ that can reach all vertices via a path of length at most $\lceil \frac{1}{3} \cdot D_G \rceil$ or a set $S_2$ that can reach all vertices via a path of length at most $\lceil \frac{2}{3} \cdot D_G \rceil$. In the second case, we proceed in the same way as in part~(i). In the first case, we compute a shortest path forest $F$ rooted at set $S_1$ and assign two labels to each edge of $F$ in such a way that every vertex in the forest can reach a vertex in $S_1$ using labels in the range $1, 2, \dots, \lceil \frac{1}{3} \cdot D_G \rceil$. Furthermore, for every non-root vertex in the forest, there exists a vertex in $S_1$ that can reach it using labels in the range $\lceil \frac{4}{3} \cdot D_G \rceil + 1, \lceil \frac{4}{3} \cdot D_G \rceil + 2, \dots, \lceil \frac{5}{3} \cdot D_G \rceil$. Finally, to ensure temporal connectivity, we compute a shortest path for every unordered pair of distinct vertices in $S_1$ and assign labels to the paths in such a way that the two vertices can reach each other using labels in the range $\lceil \frac{1}{3} \cdot D_G \rceil + 1, \lceil \frac{1}{3} \cdot D_G \rceil + 2, \dots, \lceil \frac{4}{3} \cdot D_G \rceil$. In conclusion, a vertex not in $S_1$ can temporally reach a vertex in $S_1$, then reach all vertices in $S_1$, and finally reach all vertices outside of $S_1$.

We restate Theorem~\ref{thm:mal-approx-constr-a} below and proceed with its proof.
\malapprox*
\begin{proof}
$i$)
We apply the algorithm in Lemma \ref{lem:dom} with $n_2= \lceil\sqrt{n\log n}\rceil$, and $\delta = 1/2$. With this choice of parameters we obtain a $\langle \lfloor 1/2 \! \cdot\! D_G  \rfloor , \lceil 1/2 \! \cdot\! D_G \rceil \rangle$-dominating set-pair $(S_1,S_2)$ where both $S_1$ and $S_2$ have size $O(\sqrt{n\log n})$ and one of them can reach every other vertex with a path of length bounded by $\lceil 1/2 \! \cdot\! D_G \rceil$. Let us denote with $S$ the set satisfying one of the two conditions in Definition~\ref{def:dsp}. The labeling $\lambda$ is then constructed in the following way.

For every vertex $v\in S$ we compute a shortest path tree $T_v$ rooted at $v$. Let $e=\{u,w\}$ be an edge of $T_v$ with $d_{T_v}(v,u)=d_{T_v}(v,w)+1$, we add to $\lambda(e)$ the label $D_G-d_{T_v}(v,u)+1$. Observe that, with this assignment of labels, each vertex $w\in V(T_v)$ can temporally reach the root $v$ through a temporal path ending with label $D_G$.
Since this assignment procedure is repeated for all $v\in S$, then every vertex $w\in V(G)$ can reach all the vertices in $S$ through a temporal path  ending with label $D_G$. 
Moreover, we have added $n-1$ labels for every $v\in S$, which account for an overall number of $O(n\sqrt{n\log n})$ labels.

We then compute a single shortest path forest $F$ rooted in set $S$. Let $e=\{u,w\}$ be an edge of $F$ with the vertex $u$ being the one farther from $S$, formally $\min_{v\in S} d_{F}(v,u)=\min_{v\in S}d_{F}(v,w)+1$. We add to $\lambda(e)$ the label $D_G+\min_{v\in S}d_{F}(v,u)$. Observe that the vertices at distance one from $S$ receive label $D_G+1$, the vertices at distance two receive label $D_G+2$ and so on up to the most distant edges that receive label at most $\lceil 3/2 D_G\rceil$. Moreover, we have added to $\lambda$ one label for each edge of $F$, thus an overall number of  less than $n$ labels.

In conclusion, the size of $\lambda$ is $O(n\sqrt{n\log n})$. As the minimum number of labels to temporally connect a graph, even without  constraints on the maximum allowed age, is $2n-4$ (see~\cite{KMMS}), then $|\lambda^*|\geq 2n-4$
and  $|\lambda| = O(\sqrt{n\log n})\cdot |\lambda^*|$. Moreover, the largest label used is at most  $\lceil 3/2 D_G\rceil$.

It remains to prove that $(G,\lambda)$ is temporally connected. Let $x,y\in V(G)$ be two vertices of $G$, we will prove that $x$ can reach $y$ through a temporal path. If $y\in S$ then the temporal path is equal to the unique path from $x$ to $y$ in $T_y$ using labels $D_G-d_{T_y}(y,x)+1,\dots,D_G$. Therefore, we may assume that $y\in V(G)\setminus S$. Let  $v$ be the vertex in $S$ such that  there is a path in $F$ from $v$ to $y$, this vertex must exist by construction on $F$.
If $v = x$ then the temporal path from $x$ to $y$ is the one using the path from $x$ to $y$ in $F$ and labels $D_G+1,\dots,D_G+d_{F}(v,y)$. Otherwise $v\not = x$, and the path from $x$ to $y$ is constructed by joining the temporal path from $x$ to $v$ in $T_v$ using labels $D_G-d_{T_v}(v,x)+1,\dots,D_G$ with the temporal path from $v$ to $y$ in $F$ using  labels $D_G+1,\dots,D_G+d_{F}(v,y)$, see Figure~\ref{fig:proof_tp0}.

\begin{figure}[t]
    \centering
    \pgfdeclarelayer{nodelayer}
    \pgfdeclarelayer{edgelayer}
    \pgfsetlayers{nodelayer,edgelayer}

    \includegraphics[width=0.8 \textwidth]{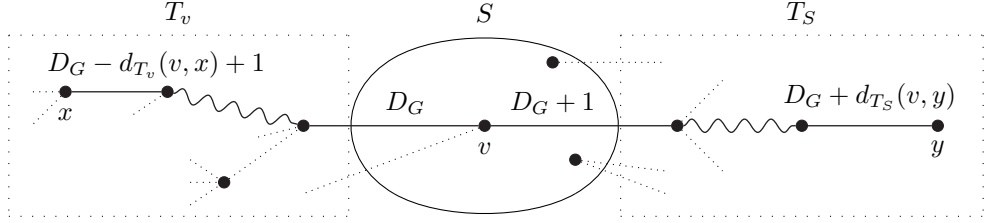}
    \caption{The temporal path from $x$ to $y$ in the proof of Theorem~\ref{thm:mal-approx-constr-a}~(i)}
    \label{fig:proof_tp0}
\end{figure}

$ii$)
Let $(S_1,S_2)$ be computed by the algorithm in Lemma \ref{lem:dom} with parameters  $n_2=\Big\lceil\sqrt[3]{D_G n \log^2 n}\Big\rceil$ and $\delta = 1/3$. We obtain a 
$\langle \lfloor 1/3D_G  \rfloor , \lceil 2/3D_G \rceil \rangle$-dominating set-pair $(S_1,S_2)$ of size-bound $\langle n_1,n_2\rangle$, where 
$n_1\leq\sqrt[3]{ D_G^{-1} n^2 \log n}$.
If Condition~2 of Definition~\ref{def:dsp} holds for $(S_1,S_2)$, then $\lambda$ is constructed as in the previous proof with $S$ being $S_2$. In this case the size of $\lambda$ is now upper-bounded by $O(n \sqrt[3]{D_G n \log^2 n})$ which is $O(\sqrt[3]{D_G n \log^2 n})\cdot |\lambda^*|$.

Therefore, we assume in the remainder that $S_1$ is such that for all $v\in V(G)$, $d_G(v,S_1)\leq \lfloor 1/3D_G \rfloor\leq 1/3D_G$.
For simplicity of presentation we will assume that $D_G$ is a multiple of $3$. 
The labeling $\lambda$ is then constructed as follows.

Let $F$ be a shortest path forest rooted in set $S_1$. Let $e=\{u,w\}$ be an edge of $F$ with the vertex $u$ being the one further from $S_1$, formally $\min_{v\in S_1} d_{F}(v,u)=\min_{v\in S_1}d_{F}(v,w)+1$. We add to $\lambda(e)$ the two labels $1/3D_G-\min_{v\in S}d_{F}(v,u)+1$ and $4/3D_G+\min_{v\in S}d_{F}(v,u)$. Observe that the vertices at distance one from $S$ receive labels $1/3D_G,4/3D_G +1$, the vertices at distance two receive labels $1/3D_G-1,4/3D_G +2$ and so on, up to the most distant edges that receive labels $1,5/3D_G$. We have added to $\lambda$ two labels for each edge of $F$, which overall do not exceed $2n$.

Let $s,s'\in S_1$ be two distinct vertices, let $P_{s,s'}=e_1,e_2,\dots,e_k$ be a shortest path between $s$ and $s'$ in $G$, and observe that $k\leq D_G$ as $G$ has diameter $D_G$. We add to $\lambda(e_i)$ the labels $1/3D_G + i$ and $1/3D_G +k- i +1$ for $i\in [k]$. Notice that with these labels we have created a temporal path from $s$ to $s'$, and a temporal path from $s'$ to $s$, both using labels in the range $1/3D_G+1,1/3D_G+2,\dots,4/3D_G$. Since we add at most $2D_G$ labels for every pair of vertices in $S_1$, then we have added a total of at most $2D_G\cdot |S_1|\cdot(|S_1|-1)/2 = O(n \sqrt[3]{D_G n \log^2 n})$ labels in this phase.
The overall number of labels, taking into account also  the previous labeling, is $|\lambda|=O(n \sqrt[3]{D_G n \log^2 n}) = O(\sqrt[3]{D_G n \log^2 n})\cdot |\lambda^*|$.  Moreover, the largest label used is at most  $5/3D_G$.

It remains to prove that $(G,\lambda)$ is temporally connected. To this aim, let $x,y\in V(G)$ be two vertices of $G$, we will prove that $x$ can reach $y$ through a temporal path.

If $x\not \in S_1$ then the vertex $x$ using the edges of $F$ can reach a vertex $v\in S_1$ through a temporal path using labels in $\{1,2,\dots,1/3D_G\}$, otherwise $x$ is already in $S_1$ and we set $v=x$. If $y\in S_1$, then adding to the previous temporal path the edges and times of $P_{v,y}$ form a temporal path from $x$ to $y$ using labels in $\{1,2,\dots,4/3D_G\}$. If $y\not \in S_1$, let $v'$ be the vertex in $S_1$ that can reach $y$ using the labels $\{4/3D_G + 1,4/3D_G + 2,\dots,5/3D_G\}$. Then, the temporal path from $x$ to $y$ is constructed using first the temporal path from $x$ to $v$ with labels in $\{1,2,\dots,1/3D_G\}$, then the temporal path $P_{v,v'}$ from $v$ to $v'$ with labels in $\{1/3D_G+1,1/3D_G+2,\dots,4/3D_G\}$, and finally the temporal path from $v'$ to $y$ that uses labels in $\{4/3D_G + 1,4/3D_G + 2,\dots,5/3D_G\}$.
\end{proof}

\section{Extensions and Variants}
In this section, we consider extensions and variants of \MAL. We begin with a discussion of the directed version of \MAL (\DMAL), and then proceed to examine two variants introduced in \cite{KMMS}, known as the Minimum Steiner Labeling (MSL) and the Minimum Aged Steiner Labeling (MASL) problems.

\subsection{\DMAL}
 %and we want to find a minimum (in term of number of labels) labeling such that the resulting directed temporal graph is temporally connected.
Consider the directed version of \MAL, denoted \DMAL, where the input graph is directed. The approximation hardness of \DMAL is related to that of \MAL in the following sense.

\begin{theorem}\label{thm:dirmalapprox}
    Given an $\alpha$-approximation algorithm for \DMAL, there exists a $2 \alpha$-approximation algorithm for \MAL.
\end{theorem}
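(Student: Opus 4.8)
The plan is to reduce an undirected \MAL instance to a \DMAL instance by \emph{bidirecting} every edge, invoke the assumed $\alpha$-approximation on the directed instance, and then project the returned arc-labeling back onto the undirected edges. Concretely, given an instance $(G,a)$ of \MAL with $G=(V,E)$, I would build the directed graph $\vec{G}=(V,\vec{E})$ with $\vec{E}=\{(u,v),(v,u) : \{u,v\}\in E\}$. Since $|V(\vec{G})|=|V(G)|=n$, the approximation guarantee $\alpha$ (which may be a function of $n$) is unchanged. Run the $\alpha$-approximation algorithm for \DMAL on $(\vec{G},a)$, obtaining an arc-labeling $\mu$ with lifetime at most $a$, and output the undirected labeling $\lambda$ defined by $\lambda(\{u,v\}):=\mu((u,v))\cup\mu((v,u))$ for every $\{u,v\}\in E$. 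This is clearly computable in polynomial time.

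The next step is to check feasibility and the two size bounds. For feasibility: any directed temporal path in $(\vec{G},\mu)$ projects arc-by-arc to an undirected temporal path in $(G,\lambda)$ with the same (strictly increasing) labels, all at most $a$, so $(G,\lambda)$ is temporally connected with $L_\lambda\le a$; hence $\lambda$ is feasible for \MAL. For the size of the output, since each undirected label set is a union of two arc label sets, $|\lambda|=\sum_{\{u,v\}\in E}|\mu((u,v))\cup\mu((v,u))|\le\sum_{\{u,v\}\in E}\bigl(|\mu((u,v))|+|\mu((v,u))|\bigr)=|\mu|$. For a lower bound on the directed optimum, I would take an optimal \MAL solution $\lambda^*$ for $(G,a)$ and copy $\lambda^*(\{u,v\})$ onto both arcs $(u,v)$ and $(v,u)$; every undirected temporal path in $(G,\lambda^*)$ lifts to a directed temporal path (its vertices are distinct, so it never needs both orientations of an edge), making this arc-labeling feasible for \DMAL on $(\vec{G},a)$ with exactly $2|\lambda^*|$ labels. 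Therefore $|\mu^*|\le 2|\lambda^*|$, where $\mu^*$ is an optimal \DMAL solution. Chaining the inequalities, $|\lambda|\le|\mu|\le\alpha|\mu^*|\le 2\alpha|\lambda^*|$, which is the claimed $2\alpha$-approximation.

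The only real work is the routine verification of the correspondence between strict temporal paths in the undirected and the directed worlds, in both directions; here one must be slightly careful that the strict increase of labels and the age bound $L\le a$ are preserved, and that the paths obtained indeed have distinct vertices and edges as required by the definition of a temporal path. I do not anticipate any genuine obstacle beyond this bookkeeping.
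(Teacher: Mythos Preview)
Your proposal is correct and follows essentially the same approach as the paper: bidirect every edge, run the \DMAL approximation, project the arc-labeling back by taking unions, and bound the cost via $|\lambda|\le|\mu|\le\alpha|\mu^*|\le 2\alpha|\lambda^*|$, where the last inequality comes from copying an optimal undirected labeling onto both orientations of each edge. Your write-up is in fact slightly more careful than the paper's, in that you explicitly note that $|V(\vec{G})|=|V(G)|$ so a size-dependent $\alpha$ is preserved.
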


\begin{proof}
    Let $(G,a)$ be an instance of \MAL. We construct the directed graph $G_d$ by replacing each edge $\{u,v\}$ of $G$ with the two directed edges $(u,v)$ and $(v,u)$. Observe that the number of labels in an optimal solution $\lambda^{*}_d$ for the \DMAL instance $(G_d,a)$ is at most twice that of an optimal solution $\lambda^{*}$ for the \MAL instance $(G,a)$, as we can construct a feasible solution for $(G_d,a)$ by doubling the labels of $\lambda^{*}$.

    Let $\lambda_d$ be the solution produced by the $\alpha$-approximation algorithm on the instance $(G_d,a)$ of \DMAL.

    %Observe that any optimal solution for \DMAL must include at least $n$ edges. To see this, suppose the labeling includes fewer than $n$ edges. Then, fewer than $n$ temporal edges would suffice to make the directed graph strongly connected, which is not possible.

    We now construct $\lambda$ for the undirected instance as follows: for every $u,v \in V$, define $\lambda(\{u,v\}) = \lambda_d((u,v)) \cup \lambda_d((v,u))$. Clearly, $|\lambda| \leq |\lambda_d|$.

    To show that $\lambda$ is a valid solution for \MAL, observe that all directed temporal paths are preserved in the transformation. That is, if there exists a directed temporal path from $u$ to $v$ in $G'$, then there exists an undirected temporal path from $u$ to $v$ in $G$ using the same labels.

    In conclusion, we obtain that $\lambda$ is a solution for \MAL of size at most $\alpha \lambda^*_d \leq 2 \alpha \lambda^*$.
\end{proof}

By combining Theorem~\ref{thm:malinapprox}, Corollary~\ref{cor:malinapprox}  and Corollary~\ref{cor:stronginapproxmal} with Theorem~\ref{thm:dirmalapprox}, we obtain the following corollary.

\begin{corollary}\label{cor:dirMALhard}
    \begin{enumerate}[i)]
        \item For every $\epsilon\in(0,1/8)$, there is no polynomial-time $(\epsilon \log |V|)$-approximation algorithm for \DMAL, unless $\text{P} = \text{NP}$. The hardness holds even when the maximum allowed age $a$ is equal to 2.

        \item   For every $\epsilon\in(0,1/(12a))$, there is no polynomial-time $(\epsilon \log |V|)$-approximation algorithm for \DMAL, unless $\text{P}\!=\!\text{NP}$. The hardness holds even when the maximum allowed age $a$ is a fixed parameter greater or equal to 3.

        \item  For any constant $\epsilon\in (0,1)$, there is no polynomial-time $2^{\log ^{1-\epsilon} n}$-approximation algorithm for \DMAL, unless $\text{NP}\subseteq \text{DTIME}(2^{\text{polylog}(n)})$. The hardness holds even when the maximum allowed age $a$ is a fixed parameter greater or equal to 3.
    \end{enumerate}
\end{corollary}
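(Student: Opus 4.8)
The plan is to obtain each of the three items by contraposition from the corresponding \MAL hardness result, using Theorem~\ref{thm:dirmalapprox} as the bridge from \DMAL to \MAL. Recall that Theorem~\ref{thm:dirmalapprox} converts any $\alpha$-approximation for \DMAL into a $2\alpha$-approximation for \MAL. The point worth checking first is that the reduction behind it keeps both the vertex set and the age bound intact: the directed graph $G_d$ has exactly the vertex set of $G$, and the \DMAL instance is fed the same $a$. Hence ``$\log|V|$'' (resp. ``$n$'') denotes the same quantity on both sides, and any restriction of the form ``$a=2$'' or ``$a\ge 3$ fixed'' carries over. So a polynomial-time $\beta$-approximation for \DMAL on instances satisfying such a restriction yields a polynomial-time $2\beta$-approximation for \MAL on instances satisfying the same restriction.

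For (i): if for some $\epsilon\in(0,1/8)$ there were a polynomial-time $(\epsilon\log|V|)$-approximation for \DMAL with $a=2$, we would get a $(2\epsilon\log|V|)$-approximation for \MAL with $a=2$; since $2\epsilon<1/4$, this contradicts Theorem~\ref{thm:malinapprox}. Item (ii) is structurally identical: a polynomial-time $(\epsilon\log|V|)$-approximation for \DMAL with fixed $a\ge3$ and $\epsilon\in(0,1/(12a))$ gives a $(2\epsilon\log|V|)$-approximation for \MAL with the same $a$, and since $2\epsilon<1/(6a)$ this contradicts Corollary~\ref{cor:malinapprox}.

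For (iii) the only additional care is that the threshold is superpolynomial, so the factor $2$ must be absorbed into the exponent — the same manipulation already used in the proof of Corollary~\ref{cor:stronginapproxmal}. Assume a polynomial-time $2^{\log^{1-\epsilon}n}$-approximation for \DMAL with fixed $a\ge3$, for some constant $\epsilon\in(0,1)$. Theorem~\ref{thm:dirmalapprox} produces a polynomial-time approximation for \MAL with ratio
\[
2\cdot 2^{\log^{1-\epsilon}n}=2^{\,1+\log^{1-\epsilon}n}.
\]
Fixing any constant $\epsilon'\in(0,\epsilon)$, we have $1+\log^{1-\epsilon}n\le \log^{1-\epsilon'}n$ for all sufficiently large $n$, so on large inputs this is a $2^{\log^{1-\epsilon'}n}$-approximation for \MAL, contradicting Corollary~\ref{cor:stronginapproxmal} (which applies for every constant in $(0,1)$).

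I expect no genuine obstacle beyond this bookkeeping; the one thing I would be careful to state explicitly is the invariance of $|V|$ (and of $a$) under the construction of Theorem~\ref{thm:dirmalapprox}, since every bound is phrased in terms of $\log|V|$ or $n$ together with a fixed age bound.
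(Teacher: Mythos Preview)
Your argument is correct and is exactly the intended one: the paper simply states that the corollary follows by combining Theorem~\ref{thm:dirmalapprox} with Theorem~\ref{thm:malinapprox}, Corollary~\ref{cor:malinapprox}, and Corollary~\ref{cor:stronginapproxmal}, and your write-up fills in precisely this contraposition (halving the $\epsilon$-range in (i)--(ii), absorbing the factor $2$ into the exponent in (iii)). Your explicit remark that the construction in Theorem~\ref{thm:dirmalapprox} preserves both $|V|$ and $a$ is a useful sanity check that the paper leaves implicit.
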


On the approximation algorithm side, we observe that Corollary~\ref{cor:approxmal-via-dss}~(ii) also holds for \DMAL, since the result of Theorem~\ref{thm:pairwisespanner}~\cite{CDKL} applies to directed graphs as well.

Moreover, the results in Theorem~\ref{thm:mal-approx-constr-a} can also be adapted to \DMAL, achieving the same asymptotic approximation factor. To illustrate this, we provide an overview of the necessary modifications, starting with the definition of a dominating set pair in the directed setting, as introduced in Definition 2.2 of~\cite{ChoudharyGold}. From here on, we use $d_G(u,v)$ to denote the length of the shortest directed path from vertex $u$ to vertex $v$. The distances from a set to a vertex and from a vertex to a set are defined accordingly.

\begin{definition}[Definition 2.2 in~\cite{ChoudharyGold}]
    For a directed graph $G=(V,E)$ and a set-pair $(S_1,S_2)$, where $S_1,S_2\subseteq V$, we say that $(S_1,S_2)$ is a $\langle h_1,h_2\rangle$-dominating set-pair of size-bound $\langle n_1,n_2\rangle$ if $|S_1|=O(n_1)$, $|S_2|=O(n_2)$, and one of the following conditions holds.
    \begin{enumerate}
        \item For each $v\in V(G)$, $d_G(S_1,v)\leq h_1$,
        \item For each $v\in V(G)$, $d_G(v,S_2)\leq h_2$.
    \end{enumerate}
    We say that $S_1$ is $h_1$-out-dominating if it satisfies condition 1, and $S_2$ is $h_2$-in-dominating if it satisfies condition 2.
\end{definition}
From here on, we use the suffixes ``in-'' and ``out-'' to indicate that the shortest path tree (shortest path forest, respectively) is oriented toward the root(s) or toward the leaf vertices, respectively.
The proof of Lemma~\ref{lem:dom} can be adapted to establish the same result in the setting of directed dominating set-pairs. Specifically, for every $v \in V$, we compute an in-SPT $T_v$, that is, a shortest path tree in which every vertex has a directed path to the root $v$. The rest of the proof remains valid, as it is already consistent with the directed case. 

Regarding Theorem~\ref{thm:mal-approx-constr-a}~(i), we compute the directed dominating set-pair $(S_1, S_2)$ using the same values for $\delta$ and $n_2$. If $S_1$ is $\frac{1}{2}$-out-dominating, then for each $v \in S_1$ we compute an in-SPT and assign labels in the range $1, 2, \dots, D_G$ to its edges (in the following, when we assign labels, we do so in such a way that directed paths correspond to temporal directed paths). We then compute an out-SPF rooted at $S_1$ and assign to its edges labels in the range $D_G + 1, D_G + 2, \dots, \frac{3}{2}D_G$.

Otherwise, $S_2$ is $\frac{1}{2}$-in-dominating. We first compute an in-SPF rooted at $S_2$, assigning labels in the range $1, 2, \dots, \frac{1}{2}D_G$. Then, for every vertex in $S_2$, we compute an out-SPT and assign labels in the range $\frac{1}{2}D_G + 1, \frac{1}{2}D_G + 2, \dots, \frac{3}{2}D_G$ to its edges.

For Theorem~\ref{thm:mal-approx-constr-a}~(ii), we need to compute two directed dominating set pairs. The first dominating set pair $(S_1,S_2)$ is computed with $\delta=\frac{1}{3}$ and $n_2=\sqrt[3]{D_G n \log^2 n}$, while in the second dominating set pair $(S_1',S_2')$, we set $\delta=\frac{2}{3}$ and $n_2=\sqrt[3]{D_G^{-1} n^2 \log n}$. If either $S_1$ is $\frac{2}{3}$-out-dominating or $S_2'$ is $\frac{2}{3}$-in-dominating, then the labeling is constructed following the previous modification for Theorem~\ref{thm:mal-approx-constr-a}~(i). Thus, we may assume that $S_2$ is $\frac{1}{3}$-in-dominating and $S_1'$ is $\frac{1}{3}$-out-dominating. In this case, we compute an in-SPF rooted at $S_2$ and assign labels in the range $1, 2, \dots, \frac{1}{3}D_G$ to its edges. Then, for every pair of vertices $v_1 \in S_2$ and $v_2 \in S_1'$, we compute a shortest path from $v_1$ to $v_2$ and use labels in $\frac{1}{3}D_G + 1, \frac{1}{3}D_G + 2, \dots, \frac{4}{3}D_G$ to make it temporal. Finally, we compute an out-SPF rooted at $S_1'$ and, using labels in $\frac{4}{3}D_G + 1, \frac{4}{3}D_G + 2, \dots, \frac{5}{3}D_G$, we make it temporal. To summarize, let $v \in V$; either $v \in S_2$ or $v$ can temporally reach a vertex in $S_2$. Then, it can temporally reach all vertices in $S_1'$, and finally, it can temporally reach any vertex not in $S_1'$.

\subsection{MSL and MASL}
Following~\cite{KMMS}, we define the Minimum Steiner Labeling (MSL) and the Minimum Aged Steiner Labeling (MASL) problems. We begin with the definition of MSL:

\begin{definition}
    Given a graph $G=(V,E)$ and a subset $S \subseteq V$, the Minimum Steiner Labeling (MSL) problem asks for a function $\lambda : E \rightarrow 2^{\mathbb{N}}$ such that for every ordered pair $s_1, s_2 \in S$, there exists a temporal path from $s_1$ to $s_2$ in $(G,\lambda)$, and $|\lambda|$ is minimized.
\end{definition}

The authors of~\cite{KMMS} proved that \MSL is NP-complete. On the approximation side, observe that the size of an optimal Steiner Tree is at most the size of an optimal solution to \MSL, as otherwise discarding the labels would yield a smaller Steiner Tree, contradicting optimality. Moreover, the optimum of \MSL is less than twice the optimum of Steiner Tree, since, given a Steiner Tree, a feasible \MSL solution can be constructed by using the folklore labeling technique (described in the proof of Theorem~\ref{thm:malradiusdiameter}~(i)), where the root is an arbitrary vertex of the tree. As a consequence, by combining any $\alpha$-approximation algorithm for the Steiner Tree problem with the folklore labeling technique, we obtain a $2\alpha$-approximation algorithm for \MSL. To the best of our knowledge, the current best approximation ratio for the Steiner Tree problem is due to Byrka et al.~\cite{SteinerTree}, who devised a $(\ln(4) + \varepsilon) < 1.39$-approximation algorithm. Therefore, \MSL can be approximated within a factor of $2.78$.

By introducing a bound on the maximum age in the \MSL problem, we obtain the Minimum Aged Steiner Labeling (\MASL) problem.

\begin{definition}
    Given a graph $G=(V,E)$, a subset $S \subseteq V$, and an integer $a$, the Minimum Aged Steiner Labeling (\MASL) problem asks for a function $\lambda : E \rightarrow 2^{\mathbb{N}}$ such that for every ordered pair $s_1, s_2 \in S$, there exists a temporal path from $s_1$ to $s_2$ in $(G,\lambda)$, $L_\lambda \leq a$, and $|\lambda|$ is minimized.
\end{definition}

It is worth noting that when $S = V$, \MASL coincides with \MAL, and therefore the hardness of approximation results for \MAL also apply to MASL. On the approximation side, Corollary~\ref{cor:approxmal-via-dss}~(ii) holds for MASL as well, since we can use the result of Theorem~\ref{thm:pairwisespanner} (Theorem 1.8 in~\cite{CDKL}) by setting $\Delta(s,t)=a$ for $s,t\in S$, and $\Delta(s,t)=\infty$ otherwise. However, the approximation factors of the other algorithms for \MAL are affected by the change in the lower bound on the optimum. Thus, when $|S| \in \Omega(n)$, all our approximation algorithms for \MAL apply to MASL with the same asymptotic approximation factor.

\iffalse
Directed graphs
- lower bounds hold
- Theorem 25 and part of 24 hold
- Other cases in open problems

MASL
- lower bounds hold
- Theorem 24.ii holds ???
- Other cases open (comment on the lower bound of n  that doesn't hold and maybe lower bounds based on Steiner Tree approximation hold)

MSL
- It is NP-hard, approximation open
- Steiner Tree + folklore should imply a 2*alpha approximation, where alpha is an apx for Steiner tree
\fi

\section{Conclusion and Open Questions}
In this paper, we studied the complexity of approximating the Minumum Aged Labeling Problem (\MAL). We showed that, when $a=D_G$, the problem is hard to approximate, even within large approximation factors. Table~\ref{tab:hardness} summarizes our hardness bounds. If we relax the value of $a$ and consider the cases when $a>D_G$, \MAL can be approximated within bounded factors that depend on a relation between $a$ and $D_G$. Table~\ref{tab:approx} summarizes our approximation results. We further showed a relation between \MAL and the Diameter Constrained Spanning Subgraph problem (\DCSS), which implies that similar hardness and approximation bounds apply to \DCSS.

It remains open whether \MAL is NP-hard when $D_G < a < 2D_G + 2$. We observe that our reductions are no longer valid for $a \geq D_G + 1$, as in this case it is possible to find ``small'' solutions by exploiting the presence of vertices that keep the diameter of the graph as low as the distance between the hard part of the instance used in the reduction.
Moreover, it remains unclear whether the lower bounds on the approximation of \MAL given in Theorem~\ref{thm:malinapprox} and Corollary~\ref{cor:stronginapproxmal} still hold when $D_G < a < 2R_G$. Observe that for $a \geq 2R_G$, Theorem~\ref{thm:malradiusdiameter}~(i) implies that for every constant $c > 1$, there exists an $n_0$ such that for all graphs $G$ of size at least $n_0$, we can approximate \MAL on $(G, a)$ within a factor of $c$.

\iffalse
- Computational complexity for a>D, comment on why our reductions dont work

- MAST W[1] hard -> approximation?
\fi
%----------------------------------------------------------------------------------------
%	BIBLIOGRAPHY
%----------------------------------------------------------------------------------------

%\bibliographystyle{splncs04}
\iffalse
\section*{Acknowledgements}
This work was partially supported by:
 the European Union - NextGenerationEU under the Italian Ministry of University and Research National Innovation Ecosystem grant ECS00000041 - VITALITY - CUP: D13C21000430001;
 PNRR MUR project GAMING ``Graph Algorithms and MinINg for Green agents'' (PE0000013, CUP D13C24000430001);
 %the European Union under the Italian National Recovery and Resilience Plan (NRRP) of NextGenerationEU, partnership on ``Telecommunications of the Future'' (PE00000001 - program “RESTART”), project MoVeOver/SCHEDULE (``Smart interseCtions witH connEcteD and aUtonomous vehicLEs'', CUP J33C22002880001);
the European Union - Next Generation EU under the Italian National Recovery and Resilience Plan (NRRP), Mission 4, Component 2, Investment 1.3, CUP J33C22002880001, partnership on ``Telecommunications of the Future''(PE00000001 - program ``RESTART''), project MoVeOver/SCHEDULE (``Smart interseCtions witH connEcteD and aUtonomous vehicLEs'', CUP J33C22002880001);
 ARS01\_00540 - RASTA project, funded by the Italian Ministry of Research PNR 2015-2020.
 The first and second authors acknowledge the support of the MUR (Italy) Department of Excellence 2023--2027.
\fi
\bibliography{full-paper}

\begin{thebibliography}{10}

\bibitem{surveyspanners}
Reyan Ahmed, Greg Bodwin, Faryad~Darabi Sahneh, Keaton Hamm, Mohammad~Javad
  {Latifi Jebelli}, Stephen Kobourov, and Richard Spence.
\newblock Graph spanners: A tutorial review.
\newblock {\em Computer Science Review}, 37:100253, 2020.
\newblock URL:
  \url{https://www.sciencedirect.com/science/article/pii/S1574013719302539},
  \href {https://doi.org/10.1016/j.cosrev.2020.100253}
  {\path{doi:10.1016/j.cosrev.2020.100253}}.

\bibitem{span-2}
D.~Aingworth, C.~Chekuri, P.~Indyk, and R.~Motwani.
\newblock Fast estimation of diameter and shortest paths (without matrix
  multiplication).
\newblock {\em SIAM Journal on Computing}, 28(4):1167--1181, 1999.
\newblock \href
  {https://arxiv.org/abs/https://doi.org/10.1137/S0097539796303421}
  {\path{arXiv:https://doi.org/10.1137/S0097539796303421}}, \href
  {https://doi.org/10.1137/S0097539796303421}
  {\path{doi:10.1137/S0097539796303421}}.

\bibitem{span-6}
Surender Baswana, Telikepalli Kavitha, Kurt Mehlhorn, and Seth Pettie.
\newblock Additive spanners and ($\alpha$, $\beta$)-spanners.
\newblock {\em ACM Trans. Algorithms}, 7(1), dec 2010.
\newblock \href {https://doi.org/10.1145/1868237.1868242}
  {\path{doi:10.1145/1868237.1868242}}.

\bibitem{span-4}
Greg Bodwin.
\newblock Some general structure for extremal sparsification problems.
\newblock {\em CoRR}, abs/2001.07741, 2020.
\newblock URL: \url{https://arxiv.org/abs/2001.07741}, \href
  {https://arxiv.org/abs/2001.07741} {\path{arXiv:2001.07741}}.

\bibitem{BI16}
Alfredo Braunstein and Alessandro Ingrosso.
\newblock Inference of causality in epidemics on temporal contact networks.
\newblock {\em Scientific Reports}, 6:27538, 06 2016.
\newblock \href {https://doi.org/10.1038/srep27538}
  {\path{doi:10.1038/srep27538}}.

\bibitem{SteinerTree}
Jaroslaw Byrka, Fabrizio Grandoni, Thomas Rothvo\ss{}, and Laura Sanit\`{a}.
\newblock An improved lp-based approximation for steiner tree.
\newblock In {\em Proceedings of the Forty-Second ACM Symposium on Theory of
  Computing}, STOC '10, page 583–592, New York, NY, USA, 2010. Association
  for Computing Machinery.
\newblock \href {https://doi.org/10.1145/1806689.1806769}
  {\path{doi:10.1145/1806689.1806769}}.

\bibitem{timothy}
Timothy~M. Chan.
\newblock More algorithms for all-pairs shortest paths in weighted graphs.
\newblock {\em SIAM Journal on Computing}, 39(5):2075--2089, 2010.
\newblock \href {https://arxiv.org/abs/https://doi.org/10.1137/08071990X}
  {\path{arXiv:https://doi.org/10.1137/08071990X}}, \href
  {https://doi.org/10.1137/08071990X} {\path{doi:10.1137/08071990X}}.

\bibitem{CDKL}
Eden Chlamt\'{a}\v{c}, Michael Dinitz, Guy Kortsarz, and Bundit Laekhanukit.
\newblock Approximating spanners and directed steiner forest: Upper and lower
  bounds.
\newblock {\em ACM Trans. Algorithms}, 16(3), jun 2020.
\newblock \href {https://doi.org/10.1145/3381451} {\path{doi:10.1145/3381451}}.

\bibitem{ChoudharyGold}
Keerti Choudhary and Omer Gold.
\newblock Extremal distances in directed graphs: tight spanners and
  near-optimal approximation algorithms.
\newblock In {\em Proceedings of the Thirty-First Annual ACM-SIAM Symposium on
  Discrete Algorithms}, SODA '20, page 495–514, USA, 2020. Society for
  Industrial and Applied Mathematics.

\bibitem{DES}
Argyrios Deligkas, Eduard Eiben, and George Skretas.
\newblock Minimizing reachability times on temporal graphs via shifting labels.
\newblock In Edith Elkind, editor, {\em Proceedings of the Thirty-Second
  International Joint Conference on Artificial Intelligence, {IJCAI-23}}, pages
  5333--5340. International Joint Conferences on Artificial Intelligence
  Organization, 8 2023.
\newblock Main Track.
\newblock \href {https://doi.org/10.24963/ijcai.2023/592}
  {\path{doi:10.24963/ijcai.2023/592}}.

\bibitem{DP20}
Argyrios Deligkas and Igor Potapov.
\newblock Optimizing reachability sets in temporal graphs by delaying.
\newblock In {\em The Thirty-Fourth {AAAI} Conference on Artificial
  Intelligence, {AAAI} 2020, The Thirty-Second Innovative Applications of
  Artificial Intelligence Conference, {IAAI} 2020, The Tenth {AAAI} Symposium
  on Educational Advances in Artificial Intelligence, {EAAI} 2020, New York,
  NY, USA, February 7-12, 2020}, pages 9810--9817. {AAAI} Press, 2020.
\newblock URL: \url{https://doi.org/10.1609/aaai.v34i06.6533}, \href
  {https://doi.org/10.1609/AAAI.V34I06.6533}
  {\path{doi:10.1609/AAAI.V34I06.6533}}.

\bibitem{DinurSteurer}
Irit Dinur and David Steurer.
\newblock Analytical approach to parallel repetition.
\newblock In {\em Proceedings of the Forty-Sixth Annual ACM Symposium on Theory
  of Computing}, STOC '14, page 624–633, New York, NY, USA, 2014. Association
  for Computing Machinery.
\newblock \href {https://doi.org/10.1145/2591796.2591884}
  {\path{doi:10.1145/2591796.2591884}}.

\bibitem{DodisKhanna}
Yevgeniy Dodis and Sanjeev Khanna.
\newblock Design networks with bounded pairwise distance.
\newblock In {\em Proceedings of the Thirty-First Annual ACM Symposium on
  Theory of Computing}, STOC '99, page 750–759, New York, NY, USA, 1999.
  Association for Computing Machinery.
\newblock \href {https://doi.org/10.1145/301250.301447}
  {\path{doi:10.1145/301250.301447}}.

\bibitem{ElkinPeleg}
Michael Elkin and David Peleg.
\newblock Approximating $k$-spanner problems for $k > 2$.
\newblock In Karen Aardal and Bert Gerards, editors, {\em Integer Programming
  and Combinatorial Optimization}, pages 90--104, Berlin, Heidelberg, 2001.
  Springer Berlin Heidelberg.

\bibitem{EK18}
Jessica Enright and Rowland~Raymond Kao.
\newblock Epidemics on dynamic networks.
\newblock {\em Epidemics}, 24:88--97, 2018.
\newblock URL:
  \url{https://www.sciencedirect.com/science/article/pii/S1755436518300173},
  \href {https://doi.org/10.1016/j.epidem.2018.04.003}
  {\path{doi:10.1016/j.epidem.2018.04.003}}.

\bibitem{EMMZ21}
Jessica Enright, Kitty Meeks, George~B. Mertzios, and Viktor Zamaraev.
\newblock Deleting edges to restrict the size of an epidemic in temporal
  networks.
\newblock {\em Journal of Computer and System Sciences}, 119:60--77, 2021.
\newblock URL:
  \url{https://www.sciencedirect.com/science/article/pii/S0022000021000155},
  \href {https://doi.org/10.1016/j.jcss.2021.01.007}
  {\path{doi:10.1016/j.jcss.2021.01.007}}.

\bibitem{EMS21}
Jessica~A. Enright, Kitty Meeks, and Fiona Skerman.
\newblock Assigning times to minimise reachability in temporal graphs.
\newblock {\em J. Comput. Syst. Sci.}, 115:169--186, 2021.
\newblock URL: \url{https://doi.org/10.1016/j.jcss.2020.08.001}, \href
  {https://doi.org/10.1016/J.JCSS.2020.08.001}
  {\path{doi:10.1016/J.JCSS.2020.08.001}}.

\bibitem{KMMS}
Nina Klobas, George~B. Mertzios, Hendrik Molter, and Paul~G. Spirakis.
\newblock The complexity of computing optimum labelings for temporal
  connectivity.
\newblock {\em Journal of Computer and System Sciences}, 146:103564, 2024.
\newblock URL:
  \url{https://www.sciencedirect.com/science/article/pii/S002200002400059X},
  \href {https://doi.org/10.1016/j.jcss.2024.103564}
  {\path{doi:10.1016/j.jcss.2024.103564}}.

\bibitem{kortsarz}
Guy Kortsarz.
\newblock On the hardness of approximating spanners.
\newblock {\em Algorithmica}, 30:432--450, 07 2001.
\newblock \href {https://doi.org/10.1007/s00453-001-0021-y}
  {\path{doi:10.1007/s00453-001-0021-y}}.

\bibitem{Meeks22}
Kitty Meeks.
\newblock Reducing reachability in temporal graphs: Towards a more realistic
  model of real-world spreading processes.
\newblock In Ulrich Berger, Johanna N.~Y. Franklin, Florin Manea, and Arno
  Pauly, editors, {\em Revolutions and Revelations in Computability}, pages
  186--195, Cham, 2022. Springer International Publishing.

\bibitem{MMS}
G.B. Mertzios, O.~Michail, and P.G. Spirakis.
\newblock Temporal network optimization subject to connectivity constraints.
\newblock {\em Algorithmica}, 81:1416--1449, 2019.
\newblock EPrint Processing Status: Full text deposited in DRO.
\newblock \href {https://doi.org/10.1007/s00453-018-0478-6}
  {\path{doi:10.1007/s00453-018-0478-6}}.

\bibitem{M2016}
Othon Michail.
\newblock An introduction to temporal graphs: An algorithmic perspective*.
\newblock {\em Internet Mathematics}, 12(4), 7 2016.
\newblock \href {https://doi.org/10.1080/15427951.2016.1177801}
  {\path{doi:10.1080/15427951.2016.1177801}}.

\bibitem{MRZ24}
Hendrik Molter, Malte Renken, and Philipp Zschoche.
\newblock Temporal reachability minimization: Delaying vs. deleting.
\newblock {\em Journal of Computer and System Sciences}, 144:103549, 2024.
\newblock URL:
  \url{https://www.sciencedirect.com/science/article/pii/S0022000024000448},
  \href {https://doi.org/10.1016/j.jcss.2024.103549}
  {\path{doi:10.1016/j.jcss.2024.103549}}.

\end{thebibliography}
\end{document}